\declaretheorem[name=Lemma]{lemma}
\definecolor{Darkblue}{rgb}{0,0,0.4}
\definecolor{Brown}{cmyk}{0,0.61,1.,0.60}
\definecolor{Purple}{cmyk}{0.45,0.86,0,0}
\newtheorem{theorem}{Theorem}
\newtheorem{corollary}{Corollary}
\newtheorem{remark}{Remark}
\newtheorem{claim}{Claim}
\newtheorem{fact}{Fact}
\newcommand{\namedref}[2]{\hyperref[#2]{#1~\ref*{#2}}}
\newcommand{\subsectionref}[1]{\namedref{Subsection}{#1}}
\newcommand{\lineref}[1]{\namedref{Line}{#1}}
\newcommand{\R}{\mathbb{R}}
\newcommand{\Exp}{\mathsf{Exp}}
\newcommand{\Geo}{\mathsf{Geo}}
\newcommand{\eps}{\epsilon}
\newcommand{\la}{~\leftarrow~}
\newcommand{\cint}{c_{\text{\tiny int}}}
\newcommand{\EfBig}{\mathcal{E}^{\text{\tiny fBig}}}
\newcommand{\EB}{\mathcal{E}^{\text{\tiny B}}}
\newcommand{\NV}{\texttt{Noisy-Voronoi}\xspace}
\title{Steiner Point Removal with distortion $O(\log k)$,\\ using the \NV algorithm\thanks{A preliminary version was published at SODA'18 \cite{Fil18}.}}
\author{Arnold Filtser\\Ben Gurion University of the Negev\\
	Email: \texttt{arnoldf@cs.bgu.ac.il}}
\date{\today}
\begin{document}
	\maketitle
	\thispagestyle{empty}
	\nonumber
	\begin{abstract}
In the Steiner Point Removal (SPR) problem, we are given a weighted graph $G=(V,E)$ and a set of terminals $K\subset V$ of size $k$.
The objective is to find a minor $M$ of $G$ with only the terminals as its vertex set, such that distances between the terminals will be preserved up to a small multiplicative distortion.
Kamma, Krauthgamer and Nguyen [SICOMP2015] devised a ball-growing algorithm with exponential distributions to show that the distortion is at most $O(\log^5 k)$.
Cheung [SODA2018] improved the analysis of the same algorithm, bounding the distortion by $O(\log^2 k)$.
We devise a novel and simpler algorithm (called the Noisy Voronoi algorithm) which incurs distortion $O(\log k)$. This algorithm can be implemented in almost linear time ($O(|E|\log |V|)$). 
	\end{abstract}

\newpage
\pagenumbering{arabic}
\section{Introduction}
In graph compression problems the input is usually a massive graph. The objective is to compress the graph into a smaller graph, while preserving certain  properties of the original graph, such as distances or cut values.
Compression allows us to obtain faster algorithms, while reducing the storage space. In the era of massive data, the benefits are obvious.
Examples of such structures are graph spanners \cite{PS89}, distance oracles \cite{TZ05}, cut sparsifiers \cite{BK96}, spectral sparsifiers \cite{BSS12}, vertex sparsifiers \cite{Moitra09} and more.

In this paper we study the \emph{Steiner point removal} (SPR) problem. Here we are given an undirected graph $G=(V,E)$ with positive weight function $w:E\rightarrow\mathbb{R}_+$, and a subset of terminals $K\subseteq V$ of size $k$ (the non-terminal vertices are called Steiner vertices).
The goal is to construct a new graph $M=(K,E')$ with positive weight function $w'$, with the terminals as its vertex set, such that: (1) $M$ is a graph minor of $G$, and (2) the distance between every pair of terminals $t,t'$ is distorted by at most a multiplicative factor of $\alpha$, formally
$$\forall t,t'\in K,~~d_G(t,t')\le d_{M}(t,t')\le \alpha \cdot d_G(t,t')~.$$
Property (1) expresses preservation of the topological structure of the original graph. For example if $G$ was planar, so will $M$ be. Whereas property (2) expresses preservation of the geometric structure of the original graph, that is, distances between terminals.
The question is: what is the minimal $\alpha$ (which may depend on $k$) such that every graph with a terminal set of size $k$ will admit a solution to the SPR problem with distortion $\alpha$.

The first one to study a problem of this flavor was Gupta \cite{G01}, who showed that given a weighted tree $T$ with a subset of terminals $K$, there is a tree $T'$ with $K$ as its vertex set, that preserves all the distances between terminals up to a multiplicative factor of $8$.
Chan, Xia, Konjevod, and Richa \cite{CXKR06}, observed that the tree $T'$ of Gupta is in fact a minor of the original tree $T$. They showed that $8$ is the best possible distortion, and formulated the problem for general graphs.
This lower bound of $8$ is achieved on the complete unweighted binary tree, and is the best known lower bound for the general SPR problem. 

Basu and Gupta \cite{BG08} showed that on outerplanar graphs, the SPR problem can be solved with distortion $O(1)$.

Kamma, Krauthgamer and Nguyen were the first to bound the distortion for general graphs.
They suggested the \texttt{Ball-growing} algorithm.   
Their first analysis provide  $O(\log^6 k)$ distortion (conference version \cite{KKN14}), which they later improved to $O(\log^5 k)$  (journal version \cite{KKN15}).
Recently, Cheung \cite{Che18} improved the analysis of the  \texttt{Ball-growing} algorithm further, providing an $O(\log^{2}k)$ upper bound on the distortion.

The \texttt{Ball-growing} algorithm constructs a terminal partition, that is a partition where each cluster is connected and contains a single terminal. The minor is then constructed by contracting all the internal edges in all clusters. The weight of the minor edge $\{t,t'\}$ (if exist) defined simply to $d_G(t,t')$.
The clusters are generated iteratively. In each round, by turn, each terminal $t_j$ increases the radius $R_j$ of its ball-cluster $V_j$ in an attempt to add more vertices to its ball cluster $V_j$. Once a vertex joins some cluster, it will remain there. 
In round $\ell$, the radii are (independently) distributed according to an exponential distribution, where the mean of the distribution grows in each round. A description of the \texttt{Ball-growing}  algorithm could be found in \Cref{sec:BallGrowing}.

The main contribution of this paper is a new upper bound of $O(\log k)$ for the Steiner Point Removal problem.
In a preliminary conference version \cite{Fil18}, the author improved the analysis of the \texttt{Ball-growing} algorithm, providing an $O(\log k)$ upper bound.
In this paper we devise a novel algorithm called the \texttt{Noisy-Voronoi} algorithm. We bound the distortion incurred by the minor produced using the \texttt{Noisy-Voronoi} by $O(\log k)$ as well.
Nevertheless, the \texttt{Noisy-Voronoi} algorithm is arguably simpler and more intuitive compared to the \texttt{Ball-growing} algorithm. 
Both algorithms grow clusters around the terminals, the main difference is that the \texttt{Ball-growing} algorithm has many iterations, growing slowly from all terminals (almost in parallel), while the \texttt{Noisy-Voronoi} algorithm has one round only (each terminal construct a cluster by turn and done.
The analysis in \cite{Fil18} was built upon \cite{Che18}. In both papers, a considerable effort was made to lower and upper bound the number of the round in which each non-terminal is clustered.
The analysis in this paper is quite similar to \cite{Fil18}, while all the round-base analysis simply becomes unnecessary. 

Furthermore, we devise an efficient implementation of the \texttt{Noisy-Voronoi} algorithm in almost linear time $O\left(m+\min\{m,nk\}\cdot\log n\right)$ ($m$ (resp. $n$) here is the number of edges (resp. vertices) in $G$).
While the \texttt{Ball-growing} algorithm can be implemented in polynomial time, it is not clear how to do so efficiently. 
 
We show that the analysis of the \texttt{Noisy-Voronoi} algorithm is asymptotically tight. That is, there are graphs for which the \texttt{Noisy-Voronoi} produces a minor which incur distortion $\Omega(\log k)$. We prove a similar lower bound also for the  \texttt{Ball-growing} algorithm. However, there we are only able to prove a $\Omega(\sqrt{\log k})$ lower bound on the performance of the algorithm.

\subsection{Related Work}
Englert et. al. \cite{EGKRTT14} showed that every graph $G$, admits a distribution $\mathcal{D}$ over terminal minors with expected distortion $O(\log k)$. Formally, for all $t_{i},t_{j}\in K$, it holds that $1\le\frac{\mathbb{E}_{M\sim\mathcal{D}}\left[d_{M}(t_{i},t_{j})\right]}{d_{G}(t_{i},t_{j})}\le O\left(\log k\right)$. 
Thus, \Cref{thm:mainSPR} can be seen as improvement upon \cite{EGKRTT14}, where we replace distribution with a single minor.
Englert et. al. showed better results for  $\beta$-decomposable graphs, in particular, they showed that graphs excluding a fixed minor admit a distribution with $O(1)$ expected distortion.

Krauthgamer, Nguyen and Zondiner \cite{KNZ14} showed that if we allow the minor $M$ to contain at most ${k\choose 2}^2$ Steiner vertices (in addition to the terminals), then distortion $1$ can be achieved. They further showed that for graphs with constant treewidth, $O(k^2)$ Steiner points will suffice for distortion $1$.
Cheung, Gramoz and Henzinger \cite{CGH16} showed that allowing $O(k^{2+\frac2t})$ Steiner vertices, one can achieve distortion $2t-1$ (in particular distortion $O(\log k)$ with $O(k^2)$ Steiners). For planar graphs, Cheung et. al. achieved $1+\eps$ distortion with $\tilde{O}((\frac k\epsilon)^2)$ Steiner points.

There is a long line of work focusing on preserving the cut/flow structure among the terminals by a graph minor. See 
\cite{Moitra09,LM10,CLLM10,MM10,EGKRTT14,Chuzhoy12,KR13,AGK14,GHP17,KR17}.

There were works studying metric embeddings and metric data structures concerning with preserving distances among terminals, or from terminals to other vertices, out of the context of minors. See \cite{CE05,RTZ05,GNR10,KV13,EFN15S,EFN17,BFN16}.

Finally, there are clustering algorithms which are similar in nature to the \texttt{Noisy-Voronoi} and \texttt{Ball-growing} algorithms \cite{LS91,Bar96,FRT04,CKR01,FHRT03,MPVX15}.

\subsection{Technical Ideas}
The basic approach in this paper, as well as in all previous papers on SPR in general graphs, is to use terminal partitions in order to construct a minor for the SPR problem.
Specifically, we partition the vertices into $k$ connected clusters, with a single terminal in each cluster. Such a partition induces a minor by contracting all the internal edges in each cluster. See the preliminaries for more details. 
Considering such a framework, the most natural idea will be to partition the vertices into the Voronoi cells. i.e., the cluster $V_j$ of the terminal $t_j$ will contain all the vertices $v$ for which $t_j$ is the closest terminal.
However, this approach miserably fails and can incur distortion as large as $k-1$. See \Cref{fig:VoronoiFail} for illustration. 
\begin{figure}[]
	\centering{\includegraphics[scale=0.5]{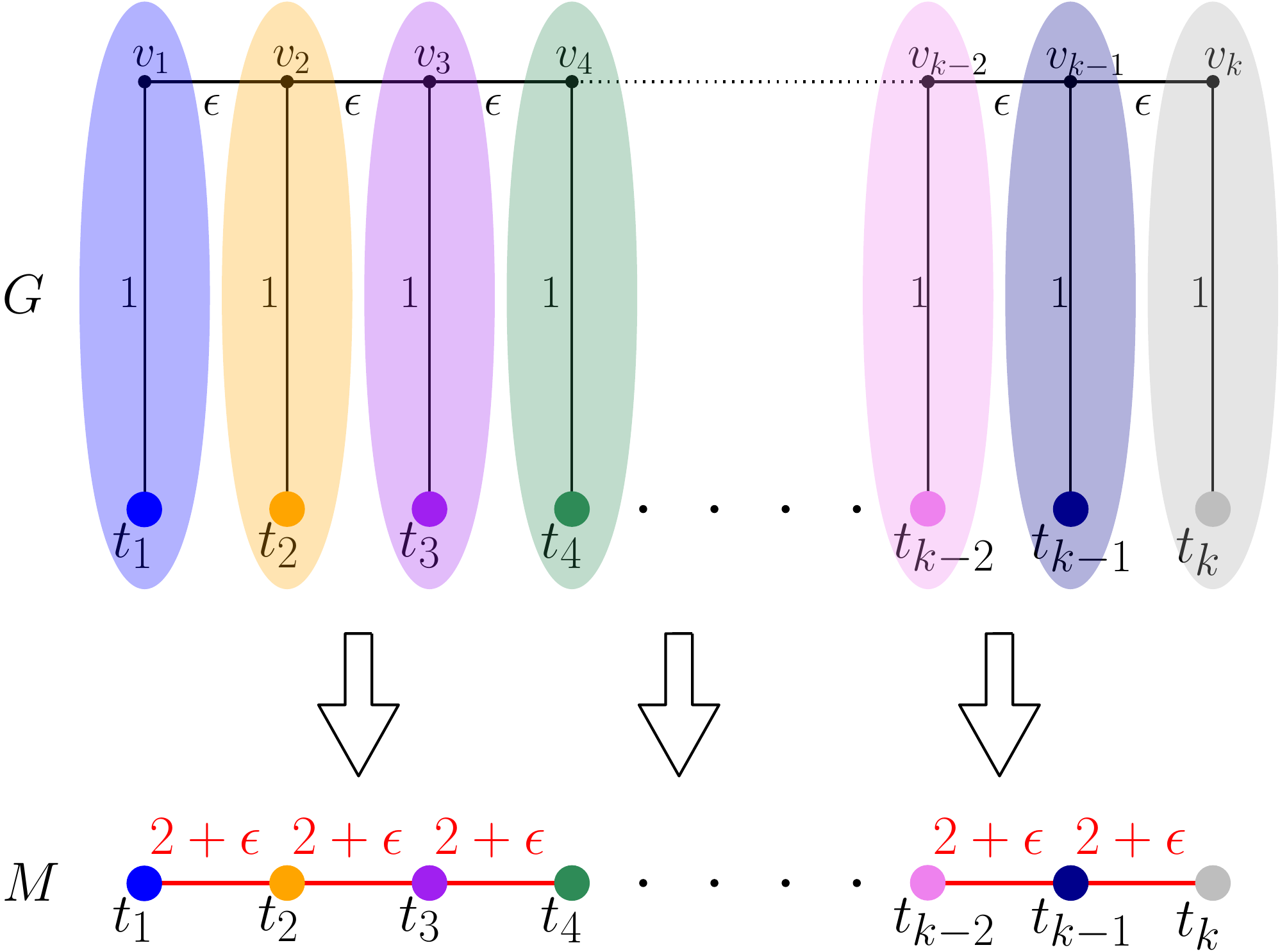}} 
	\caption{\label{fig:VoronoiFail}\small \it 
		The graph $G$ consist of a $k$-path of Steiner vertices $v_1,\dots,v_k$ with edges of weight $\eps$. To each Steiner vertex $v_j$ we add a terminal using a unit weight edge.
		The Voronoi cell of the terminal $t_j$ is $\{t_j,v_j\}$. The minor $M$ induced by this terminal partition is a path $t_1,\dots,t_k$ where the weight of each edge equals $2+\eps$. The original distance in $G$ between $t_1$ to $t_k$ is $2+(k-1)\cdot\eps$, while the distance in the minor $M$ equals $(k-1)\cdot(2+\eps)$. In particular, when $\eps$ tends to $0$, the distortion tends to $k-1$.
	}
\end{figure}

Our idea is to introduce some noise in order to avoid the sharp boundaries between the clusters. Specifically, we order the terminals in an arbitrary order. For each terminal $t_j$ we sample a parameter $R_j\ge 1$ that we will call its \emph{magnitude}. Then, by turn, each terminal will construct a cluster $V_j$ which will be essentially a magnified (by $R_j$) Voronoi cell (in the remaining graph).
However, in order to maintain connectivity, the magnified Voronoi cell is constructed in a ``Dijkstra manner'' 
as follows. For every vertex $v$, denote by $D(v)$ the distance from $v$ to its closest terminal. Initially $V_j=\{t_j\}$. In each step, every unclustered neighboring vertex $v$ of $V_j$ is examined. If $d_G(v,t_j)\le R_j\cdot D(v)$, then $v$ joins the cluster $V_j$. The process terminates when no new potential vertices remain. Then we move on to the next terminal and repeat the same process on the remaining graph. Eventually, all of $G$ is partitioned into clusters.

To sample $R_j$, we first sample $g_j$ according to geometric distribution with parameter $p=\frac15$. Then, $R_j$ set to be $(1+\delta)^{g_j}$ where $\delta=\Theta(\frac{1}{\ln k})$. In particular, all the $R_j$'s are bounded by some universal  constant w.h.p.

Next, we provide some intuition for the distortion analysis. Consider a pair of terminals $t,t'$, and let $P_{t,t'}$ be the shortest path between them in the original graph $G$. When the algorithm terminates, all the vertices in $P_{t,t'}$ are clustered by different terminals. See \Cref{fig:distortion} for illustration. Let $\mathcal{D}_{\ell_1},\dots,\mathcal{D}_{\ell_k}$ be the partition of the vertices in $P_{t,t'}$ induced by the partition of all vertices created by the algorithm. i.e., $\mathcal{D}_{\ell_i}=P_{t,t'}\cap V_{\ell_i}$
For simplicity at this stage, we will assume that every $\mathcal{D}_{\ell_j}$ is continuous. 
In the induced minor graph, there is an edge between any two consecutive terminals $t_{\ell_j}$ and $t_{\ell_{j+1}}$.
Therefore the distance between $t$ to $t'$ in the minor graph can be bounded by $\sum_{j}d_{G}(t_{\ell_{j}},t_{\ell_{j+1}})$.
Let $v^{\ell_j}$ be the ``first'' vertex on $P_{t,t'}$ to be covered by $t_{\ell_j}$. ``First'' here is in the following sense: we think on the sampling of $R_j$ in a consecutive manner. For a vertex $v$, let $r_v$ denote the minimal value of $R_j$ such that $v\in V_j$. Then $v^j$ is defined to be the vertex with the minimal value $r_v$. 
Using the triangle inequality, $d_{G}(t_{\ell_{j}},t_{\ell_{j+1}})\le d_{G}(t_{\ell_{j}},v^{\ell_{j}})+d_{G}(v^{\ell_{j}},v^{\ell_{j+1}})+d_{G}(v^{\ell_{j+1}},t_{\ell_{j+1}})$. Therefore $d_{M}(t,t')\le\sum_{i=1}^{k'-1}d_{G}(v^{\ell_{i}},v^{\ell_{i+1}})+2\sum_{i=1}^{k'}d_{G}(t_{\ell_{i}},v^{\ell_{i}})\le d_G(t,t')+2\sum_{i=1}^{k'}d_{G}(t_{\ell_{i}},v^{\ell_{i}})$ (see  \Cref{fig:distortion} for an illustration).

In order to bound the distortion, we need to bound the sum of ``deviations'' $\sum_{i=1}^{k'}d_{G}(t_{\ell_{i}},v^{\ell_{i}})$ from the shortest path. However, these deviations are heavily dependent.
Instead of analyzing the deviations directly, we will follow an approach first suggested by \cite{Che18}. We partition the shortest path $P_{t,t'}$ from $t$ to $t'$ into a set of intervals $\mathcal{Q}$, the idea will be to count for each interval $Q$ how many deviation start from this interval (denoted $X(Q)$). Specifically, for each deviation, we will charge the interval in which this deviation was initiated. Afterwards, we will be able to replace the sum of deviations above by a linear combination of the interval charges.

The partition of the shortest path $P_{t,t'}$ into intervals is done such that the length of each interval $Q\in\mathcal{Q}$ will be a $\log k$ fraction of the distance from the interval to its closest terminal. Such interval lengths will ensure the following crucial property: given that  some vertex $v\in Q$ joins the cluster $V_j$ (of the terminal $t_j$), with probability at least $1-p$, all of $Q$ joins $V_j$.

Using this property alone, one can show that the expected charge on each interval is bounded by a constant. This already will imply an $O(\log k)$ distortion on each pair in expectation. However, as we are interested in $O(\log k)$ distortion on all pairs with high probability, a more subtle argument is required.
We couple the interval charges into a series of independent random variables that dominate the interval charges. Then, a concentration bound on the independent variables implies an upper bound on the sum of interval charges, which provides $O(\log k)$ distortion with high probability.

\subsection{Paper Organization}
In \Cref{sec:Alg} we describe the \texttt{Noisy-Voronoi} algorithm and prove some of its basic properties. Then, in \Cref{sec:distortion} we analyze the distortion incurred by the \texttt{Noisy-Voronoi} algorithm.
In \Cref{sec:fastNV} we introduce a small modification to the \texttt{Noisy-Voronoi} algorithm. We prove that the distortion analysis is still valid, and explain how the modified algorithm can be efficiently implemented.
In \Cref{sec:LB} we prove that our analysis of the \texttt{Noisy-Voronoi} algorithm is asymptotically tight (and provide some lower bound on the performance of the \texttt{Ball-growing} algorithm).
Finally, in \Cref{sec:Discussion} we provide some concluding remarks, and discuss further directions.

\section{Preliminaries}
\Cref{appendix:key} contains a summary of all the definitions and notations we use. The reader is encouraged to refer to this index  while reading. 

We consider undirected graphs $G=(V,E)$ with positive edge weights
$w: E \to \R_{\geq 0}$. Let $d_{G}$ denote the shortest path metric in
$G$. 
For a subset of vertices $A\subseteq V$, let $G[A]$ denote the \emph{induced graph} on $A$. 
Fix $K=\{t_1,\dots,t_k\}\subseteq V$ to be a set of \emph{terminals}. For a vertex $v$, $D(v)=\min_{t\in K}d_{G}(v,t)$ is the distance from $v$ to its closest terminal. 
For clarity, we will assume that all metric distances are unique (that is for $\{v,v'\}\ne\{u,u'\}$, $d_G(v,v')\ne d_G(u,u')$). Moreover, we will assume that for every pair $v,u$ there is a unique shortest path. Otherwise, we can introduce arbitrarily small perturbations. 
\begin{figure}[]
	\centering{\includegraphics[scale=0.85]{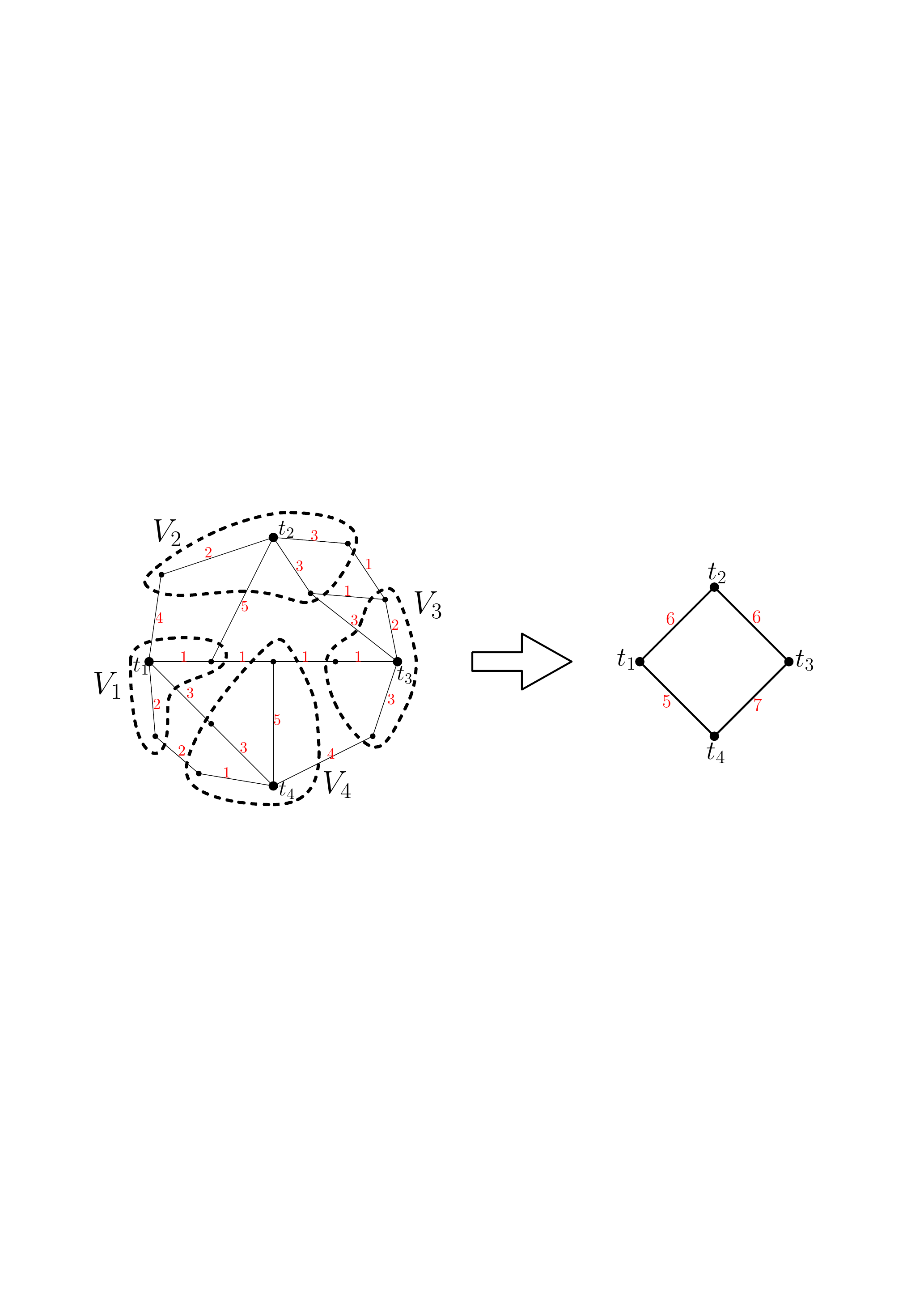}} 
	\caption{\label{fig:contraction}\small \it 
		The left side of the figure contains a weighted graph $G=(V,E)$, with weights specified in red, and four terminals $\{t_1,t_2,t_3,t_4\}$.
		The dashed black curves represent a terminal partition of the vertex set $V$ into the subsets $V_1,V_2,V_3,V_4$.
		The right side of the figure represent the minor $M$ induced by the terminal partition. The distortion is realized between $t_1$ and $t_3$, and is $\frac{d_M(t_1,t_3)}{d_G(t_1,t_3)}=\frac{12}4=3$.
f	}
\end{figure}

A graph $H$ is a \emph{minor} of a graph $G$ if we can obtain $H$ from
$G$ by edge deletions/contractions, and vertex deletions. A partition $\{V_1,\dots,V_k\}$ of $V$ is called a \emph{terminal partition} (w.r.t $K$) if for every $1\le i\le k$, $t_i\in V_i$, and the induced graph $G[V_i]$ is connected. See \Cref{fig:contraction} for an illustration.
The \emph{induced minor} by terminal partition  $\{V_1,\dots,V_k\}$, is a minor $M$, where each set $V_i$ is contracted into a single vertex called (abusing  notation)  $t_i$. 
Note that there is an edge in $M$ from  $t_i$ to $t_j$ iff there are vertices $v_i\in V_i$ and $v_j\in V_j$ such that $\{v_i,v_j\}\in E$. 
We determine the weight of the edge $\{t_i,t_j\}\in E(M)$ to be $d_G(t_i,t_j)$. Note that by the triangle inequality, for every pair of (not necessarily neighboring) terminals $t_i,t_j$, it holds that $d_M(t_i,t_j)\ge d_G(t_i,t_j)$.
The \emph{distortion} of the induced minor is  $\max_{i,j}\frac{d_M(t_i,t_j)}{d_G(t_i,t_j)}$.

\subsection{Probability}
For a distribution $\mathcal{D}$, $X\sim\mathcal{D}$ denotes that $X$ is a random variable distributed according to $\mathcal{D}$.

$\Geo(p)$ denotes the \emph{Geometric distribution} with parameter $p$. Here we toss a biased coin with probability $p$ for heads, until the first time we get heads. $\Geo(p)$ is the number of coin tosses. Formally, $\Geo(p)$ is supported in $\{1,2,3,\dots\}$, where the probability to get $s$ is $(1-p)^{s-1} \cdot p$.

\emph{Exponential distribution} is the continuous analogue of Geometric distribution.
$\Exp(\lambda)$ denotes the Exponential distribution with mean
$\lambda$ and density function $f(x)=\frac{1}{\lambda}e^{-\frac{x}{\lambda}}$ for $x\ge0$. 
Exponential distribution is closed under scaling, that is, for $X\sim \Exp(\lambda)$, $c\cdot X$ is distributed according to $\Exp(c\lambda)$.
We will use the following concentration bound.
\begin{lemma}\label{lem:ExpConcentration}
	Suppose $X_{1},\dots,X_{n}$'s are independent random
	variables, where each $X_{i}$ is distributed according to $\Exp(\lambda_{i})$.
	Let $X=\sum_{i}X_{i}$ and $\lambda_{M}=\max_i\lambda_{i}$. Set $\mu=\mathbb{E}\left[X\right]=\sum_{i}\lambda_{i}$.
	\[
	\text{For }a\ge2\mu\text{, }~~~~~~  \Pr\left[X\ge a\right]\le\exp\left(-\frac{1}{2\lambda_{M}}\left(a-2\mu\right)\right)~.
	\]
\end{lemma}
In \Cref{appendix:ConcentrationBounds} we prove a more general bound. In particular, \Cref{lem:ExpConcentration} above is a special case of \Cref{lem:TightExpConcentration} (which is obtained 
by choosing parameters $\alpha=\frac a\mu-1$ and $t=\frac{1}{2\lambda_M}$).

\section{Algorithm}\label{sec:Alg}
The terminals are ordered in arbitrary order $t_1,t_2,\dots,t_k$. The \texttt{Noisy-Voronoi} algorithm has $k$ rounds, where in the round $i$, the cluster $V_i$ (containing $t_i$) is constructed in the graph induced by the non-terminal vertices not clustered so far. 

The clusters are created using the \texttt{Create-Cluster} procedure. The algorithm provides a random variable $R_j=(1+\delta)^{g_j}$, where $g_j$ is distributed according to geometric distribution with parameter $p$. 

The \texttt{Create-Cluster} procedure runs in a Dijkstra-like fashion. During the execution, we maintain three sets.
(1) $V_j$: the currently created cluster (initiated to be $\{t_j\})$.
(2) $U$: the set of vertices who were ``refused'' to join $V_j$.
(3) $N$: the set of neighboring vertices to $V_j$ (who are not in $U$).

While $N$ is non-empty, the algorithm extracts an arbitrary vertex $v$ from $N$.
If $d_G(v,t_j)\le R(j)\cdot D(v)$ (the distance from $t_j$ to $v$ is at most $R_j$ times the distance from $v$ to its closest terminal),
then $v$ joins $V_j$. Otherwise $v$ joins $U$.
In the case where $v$ joins $V_j$, all its neighbors (outside of $U\cup V_j$), join $N$.
As each vertex might join $N$ at most once, eventually $N$ becomes empty. Then the procedure ceases and returns $V_j$.
\begin{algorithm}[!ht]
	\caption{$M=\texttt{Noisy-Voronoi}(G=(V,E,w),K=\{t_1,\dots,t_k\})$}\label{alg:mainSPR}
	\begin{algorithmic}[1]
		\STATE Set $\delta = \frac{1}{20\ln k}$ and $p=\frac15$. 
		\STATE Set $V_\perp~\la~V\setminus  K$.
			\hfill\emph{//~$V_\perp$ is the currently unclustered vertices.}					
		\FOR {$j$ from $1$ to $k$}
			\STATE Choose independently at random $g_j$ distributed according to $\Geo(p)$. \label{line:rnd}
			\STATE Set $R_j\la (1+\delta)^{g_j}$.	
			\STATE Set $V_j\la \texttt{Create-Cluster}(G,V_\perp ,t_j,R_j)$.
			\STATE Remove all the vertices in $V_j$ from $V_\perp$.
		\ENDFOR
		\RETURN the terminal-centered minor $M$ of $G$ induced by $V_1,\ldots,V_k$. \label{line:returnMinor}
	\end{algorithmic}	
\end{algorithm}
\begin{algorithm}[!ht]
	\caption{$V_j=\texttt{Create-Cluster}(G=(V,E,w),V_\perp ,t_j,R_j)$}\label{alg:CreateCluster}
	\begin{algorithmic}[1]
		\STATE Set $V_j\leftarrow \{t_j\}$.
		\STATE Set $U\la \emptyset$.
		\hfill\emph{//~$U$ is the set of vertices already denied from $V_j$.}					
		\STATE Set $N$ to be all the neighbors of $t_j$ in $V_\perp$.
		\WHILE {$N\ne \emptyset$}
		\STATE Let $v$ be an arbitrary vertex from $N$. \label{AlgLine:CreCluPickvN}
		\STATE Remove $v$ from $N$.
		\IF {$d_{G}(v,t_j)\le R_j\cdot D(v)$} \label{line:desidion}
		\STATE Add $v$ to $V_j$.
		\STATE Add all the neighbors of $v$ in $V_\perp\setminus \left(U\cup V_j\right)$ to $N$.  \label{line:AddNeighbors}
		\ELSE
		\STATE Add $v$ to $U$.
		\ENDIF
		\ENDWHILE
		\RETURN $V_j$.
	\end{algorithmic}	
\end{algorithm}

\begin{theorem}\label{thm:mainSPR}
	With probability $1-\frac{1}{k}$, in the minor graph $M$ returned by \Cref{alg:mainSPR}, it holds that for every two terminals $t,t'$, $d_{M}(t,t')\le O\left(\log k\right)\cdot d_{G}(t,t')$.
\end{theorem}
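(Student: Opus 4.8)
The plan is to fix a pair of terminals $t,t'$, let $P=P_{t,t'}$ be the (unique) shortest path between them in $G$, and first reduce, deterministically, the bound on $d_M(t,t')$ to a control on a sum of ``deviations'' along $P$. When \Cref{alg:mainSPR} terminates, $P$ is split into pieces $\mathcal D_{\ell_1},\dots,\mathcal D_{\ell_{k'}}$, where $\mathcal D_{\ell_i}=P\cap V_{\ell_i}$; I would first treat the case where each $\mathcal D_{\ell_i}$ is a contiguous subpath (the general case only needs extra bookkeeping, see the last paragraph). Consecutive pieces share an edge of $P$, so $M$ contains the edge $\{t_{\ell_i},t_{\ell_{i+1}}\}$ of weight $d_G(t_{\ell_i},t_{\ell_{i+1}})$, whence $d_M(t,t')\le\sum_i d_G(t_{\ell_i},t_{\ell_{i+1}})$. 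Writing $v^{\ell_i}$ for the vertex of $\mathcal D_{\ell_i}$ that $t_{\ell_i}$ ``reaches first'' on $P$ (minimal $r_v$, as in the overview), the triangle inequality together with the fact that the $v^{\ell_i}$ occur in order along the shortest path $P$ gives
\[
d_M(t,t')\;\le\;\sum_i d_G(v^{\ell_i},v^{\ell_{i+1}})+2\sum_i d_G(t_{\ell_i},v^{\ell_i})\;\le\;d_G(t,t')+2\sum_i d_G(t_{\ell_i},v^{\ell_i}).
\]
So it suffices to show $\sum_i d_G(t_{\ell_i},v^{\ell_i})=O(\log k)\cdot d_G(t,t')$ with probability $1-k^{-3}$, and then union bound over the $\binom k2$ pairs.

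Next I would set up the intervals. Partition $P$ into consecutive intervals $\mathcal Q=\{Q_1,\dots,Q_N\}$ with $\mathrm{len}(Q)=\Theta(\delta)\cdot\Delta_Q$, where $\Delta_Q=\min_{v\in Q}D(v)$; then $\sum_Q\mathrm{len}(Q)=d_G(t,t')$ and, for $v\in Q$, $D(v)=\Theta(\Delta_Q)=\Theta(\tfrac1\delta\mathrm{len}(Q))$. The constant in the length is chosen to force the \emph{crucial property}: if some $v\in Q$ joins $V_j$ then, with $\rho_j(u):=d_G(u,t_j)/D(u)$, one checks $\rho_j(u)\le(1+\delta)\rho_j(v)$ for all $u\in Q$ provided $\mathrm{len}(Q)\le\tfrac\delta2\Delta_Q$; hence ``$v\in V_j$'' together with ``$g_j$ is one larger'' — which has probability $1-p$ by memorylessness of $\Geo(p)$, unconditionally — makes every $u\in Q$ pass the membership test of $t_j$, and connectivity of the Dijkstra-style growth then puts the entire still-unclustered part of $Q$ into $V_j$. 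Let $X(Q)$ count the indices $i$ with $v^{\ell_i}\in Q$. Each such $i$ means $t_{\ell_i}$ clusters a vertex of $Q$, and by the crucial property, once any terminal clusters a vertex of $Q$ it ``finishes'' $Q$ with probability $1-p$ (when $Q$ is already partially clustered one tracks the number of surviving subintervals of $Q$, a biased random walk), so $X(Q)$ is stochastically dominated by a $\Geo(1-p)$ variable. Using $d_G(t_{\ell_i},v^{\ell_i})\le R_{\ell_i}\cdot D(v^{\ell_i})$ and the event $\mathcal E_R=\{\forall j,\ R_j=O(1)\}$ (equivalently $g_j=O(\log k)$ for all $j$, which fails with probability at most $k^{-2}$), I get on $\mathcal E_R$ that $\sum_i d_G(t_{\ell_i},v^{\ell_i})\le O(1)\sum_Q X(Q)\Delta_Q=O(\tfrac1\delta)\sum_Q X(Q)\,\mathrm{len}(Q)$.

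It then remains to prove $\sum_Q X(Q)\,\mathrm{len}(Q)=O(1)\cdot d_G(t,t')$ with probability $1-k^{-3}$. In expectation $\E[\sum_Q X(Q)\,\mathrm{len}(Q)]\le\tfrac1{1-p}d_G(t,t')$, but concentration needs independence, which the $X(Q)$ lack. Following the overview, the plan is to couple $(X(Q))_Q$ to a family of \emph{independent} variables dominating it: since a $\Geo(1-p)$ variable is stochastically below $1+\Exp(\lambda)$ for $\lambda=\Theta(1)$, take $Y_Q\sim\mathrm{len}(Q)+\Exp(\lambda\,\mathrm{len}(Q))$, realized so that the exponentials are independent, giving $\sum_Q X(Q)\,\mathrm{len}(Q)\le d_G(t,t')+\sum_Q\Exp(\lambda\,\mathrm{len}(Q))$. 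Now \Cref{lem:ExpConcentration} applies with $\mu=\Theta(d_G(t,t'))$ and $\lambda_M=\Theta(\delta\,d_G(t,t'))=\Theta(d_G(t,t')/\log k)$ (using $\mathrm{len}(Q)\le\Theta(\delta)\,\Delta_Q$ and $\Delta_Q\le d_G(t,t')$), and a threshold $a=\Theta(d_G(t,t'))$ yields failure probability $\exp(-\Theta(1/\delta))=k^{-\Omega(1)}$, tunable to $k^{-3}$. Combining, on $\mathcal E_R$ intersected with this event, $d_M(t,t')\le d_G(t,t')+O(\tfrac1\delta)\cdot O(1)\cdot d_G(t,t')=O(\log k)\cdot d_G(t,t')$; a union bound over the $\binom k2$ pairs, plus $\Pr[\neg\mathcal E_R]\le k^{-2}$, gives the statement with probability $1-\tfrac1k$.

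The main obstacle will be the third step: converting the per-interval geometric tail on $X(Q)$ into a concentrated bound on the weighted sum $\sum_Q X(Q)\,\mathrm{len}(Q)$ in the presence of strong dependencies — a single terminal's cluster growth can interfere with many intervals at once, and its membership threshold differs from interval to interval — so building the dominating independent family correctly (by revealing the $g_j$'s in the right order and arguing that the ``worst case'' is independent) is the delicate part, exactly where more than the in-expectation argument is needed. A secondary technical point is removing the simplifying assumption that each $\mathcal D_{\ell_i}$ is contiguous: this forces a careful re-definition of the $v^{\ell_i}$ and a re-examination of the charging when a cluster meets $P$ in several pieces, though it should not change the order of the bound.
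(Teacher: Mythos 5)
Your outline reproduces the paper's proof essentially step for step: the reduction of $d_M(t,t')$ to the sum of deviations $\sum_i d_G(t_{\ell_i},v^{\ell_i})$, the partition of $P_{t,t'}$ into intervals of length $\Theta(\delta)\cdot D(Q)$, the ``crucial property'' (the paper's \Cref{clm:failProb}), the coupling of the interval charges to independent exponential variables, and the application of \Cref{lem:ExpConcentration} followed by a union bound over the $\binom{k}{2}$ pairs and over the event that all $R_j=O(1)$. Two points in your sketch are genuine soft spots rather than routine bookkeeping.

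First, the interval partition you posit --- every $Q$ satisfying $\mathrm{len}(Q)=\Theta(\delta)\Delta_Q$ with both an upper and a lower bound on the length --- need not exist if $P_{t,t'}$ contains an edge much heavier than $\delta\cdot d_G(t,t')$, and the concentration step genuinely requires the upper bound $\lambda_M=O(\delta\, d_G(t,t'))$ on the weight of every interval, which such an edge would destroy (with one edge of weight $\tfrac12 d_G(t,t')$ the tail bound from \Cref{lem:ExpConcentration} degrades to a constant). The paper handles this by a preprocessing step subdividing all edges to weight at most $\tfrac{\delta}{24}\hat{\Delta}$, and this is not free: one must prove (\Cref{clm:SameMinor}) that subdividing an edge does not change the minor the algorithm outputs, which takes a case analysis of how the new degree-two vertex gets clustered. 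Your proposal needs this step, or something equivalent, to make the partition and the value of $\lambda_M$ legitimate.

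Second, the claim that $X(Q)$ is stochastically dominated by a $\Geo(1-p)$ variable is too strong as stated: a failure does not merely delay the clearing of $Q$, it can split the surviving active part of $Q$ into \emph{two} slices, each of which can later be charged again, so the correct dominating object is the total number of steps of the biased branching walk you mention only parenthetically (the paper's ``box'' process), whose tail is $\exp(-\Omega(m))$ but with a strictly worse constant than $p^{m}$. This does not sink the argument --- that tail is still dominated by $1+\Exp(O(1))$, which is all the concentration step consumes --- but the reduction to independent variables must be carried out by an explicit round-by-round coupling (toss a coin from box $B_{Q_j}$ exactly when $Q_j$ is charged, padding the failure probability up to exactly $p$ using \Cref{clm:failProb}, and maintaining the coordinatewise invariant on slice counts and charges), which is precisely the delicate step you flag at the end; with that coupling and the subdivision step supplied, the rest of your argument goes through as in the paper.
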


First we argue that \Cref{alg:mainSPR} indeed produces a terminal partition.
\begin{lemma}\label{lem:AlgRetPar}
	The sets $V_1,\ldots,V_k$ constructed by \Cref{alg:mainSPR} constitutes a terminal partition.
\end{lemma}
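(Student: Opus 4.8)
The plan is to check the three defining properties of a terminal partition: that $t_i\in V_i$ for every $i$, that the $V_i$ are pairwise disjoint and cover $V$, and that each $G[V_i]$ is connected. The first is immediate, since \texttt{Create-Cluster} sets $V_i\la\{t_i\}$ and never deletes a vertex from the cluster. Connectivity is almost as easy, and I would obtain it by following the Dijkstra-like growth: any $v\neq t_j$ that is ever added to $V_j$ is first extracted from $N$, and a vertex enters $N$ only as a neighbor of $t_j$ (at initialization) or as a neighbor of a vertex that has just been put into $V_j$; hence at the instant $v$ joins $V_j$ it already has a neighbor inside $V_j$, and the connecting edge has both endpoints in $V_j$. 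Listing the vertices of $V_j$ in order of insertion (with $t_j$ first) and inducting on this list shows $G[V_j]$ is connected.

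For disjointness I would use a bookkeeping observation: the moment a non-terminal $v$ is placed in some $V_j$ it is removed from $V_\perp$, and in every subsequent round \texttt{Create-Cluster} only ever moves vertices of the current $V_\perp$ into $N$, and only vertices of $N$ into the new cluster; so $v$ cannot be claimed again, nor in an earlier round (else it would already have been out of $V_\perp$ at round $j$). Terminals $t_i$ with $i\neq j$ never belong to $V_\perp$ at all, so $V_j\cap K=\{t_j\}$. Together with the covering claim below, this shows $\{V_1,\dots,V_k\}$ partitions $V$.

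The real content is the covering claim: no non-terminal is left unclustered. I would argue by contradiction. Suppose $v$ is never clustered; let $t$ be its (unique) closest terminal and let $P=(v=x_0,x_1,\dots,x_m=t)$ be the unique shortest path from $v$ to $t$. Since $t$ lies in its own cluster while $v$ does not, there is a smallest index $i\ge 1$ with $x_i$ clustered; say $x_i$ is added to $V_j$ in round $j$, and note $x_{i-1}$ is unclustered. The edge $\{x_{i-1},x_i\}$ straddles the frontier. Because $x_{i-1}$ is a neighbor of $x_i$ and stays in $V_\perp$ throughout round $j$ (never being clustered), the construction of $V_j$ forces $x_{i-1}$ into $N\cup U$ at some point during round $j$; hence $x_{i-1}$ is eventually tested against $t_j$, and since it is never clustered it must be refused, i.e.\ $d_G(x_{i-1},t_j)>R_j\cdot D(x_{i-1})$. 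Meanwhile $x_i$ satisfied its own test, $d_G(x_i,t_j)\le R_j\cdot D(x_i)$ (trivially so if $x_i=t_j$). The last ingredient is that every vertex of $P$ has $t$ as its closest terminal — the usual ``a subpath of a shortest path is a shortest path'' argument, using the assumed uniqueness of distances to rule out ties — which gives $D(x_{i-1})=D(x_i)+w(x_{i-1},x_i)$. Feeding these into the triangle inequality $d_G(x_{i-1},t_j)\le w(x_{i-1},x_i)+d_G(x_i,t_j)$ and using $R_j=(1+\delta)^{g_j}\ge 1$ to absorb the edge weight yields $d_G(x_{i-1},t_j)\le R_j\cdot D(x_{i-1})$, contradicting the refusal.

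The subtle point — and the main obstacle — is exactly this covering step: one has to reconcile the \emph{global} acceptance rule $d_G(\cdot,t_j)\le R_j\, D(\cdot)$ with the fact that \texttt{Create-Cluster}'s growth is confined to the currently unclustered subgraph, and with the fact that the cluster $V_j$ which first meets $P$ need not be centered at $v$'s closest terminal $t$. Isolating the single boundary edge of $P$ and letting $R_j\ge 1$ swallow its weight is what makes the two inequalities collide; everything else is routine.
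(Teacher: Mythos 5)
Your proof is correct and follows essentially the same route as the paper's: both rest on the observation that every vertex on the shortest path from $v$ to its nearest terminal shares that nearest terminal, so $D$ is additive along the path and, since $R_j\ge 1$, acceptance into a cluster propagates outward across each edge of the path. The only (cosmetic) difference is that the paper runs this as an induction starting from the first-in-time clustered vertex of the path, whereas you take the contrapositive at the single frontier edge of a minimal counterexample.
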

\begin{proof}
	It is straightforward from the description of the algorithm that the sets $V_1,\ldots,V_k$ are disjoint, and that for every $j$, $t_j\in V_j$ and $G[V_j]$ is connected.
	The only non trivial property we have to show is that every vertex $v\in V$ joins some cluster.

	Fix some $v\in V$, let $t_j$ be the closest terminal to $v$ (s.t. $D(v)=d_G(v,t_j)$), and let $P=\{t_j=u_0,u_1,\dots ,u_s=v\}$ be the shortest path from $t_j$ to $v$ in $G$.
	Note that as $P$ is a shortest path, $t_j$ is also the closest terminal to all the vertices in $P$.
	As $t_j=u_0\in V_j$, at least one vertex from $P$ is clustered during the algorithm.
	Let $u_{i'}$ be the first clustered vertex from $P$ (w.r.t time). 
	Denote by $V_{j'}$ the cluster $u_{i'}$ joins to.  We argue by induction on $i\ge i'$ that $u_{i}$ also joins $V_{j'}$. This will imply that $u_s=v$ joins $V_{j'}$ and thus is clustered.
	Suppose $u_i$ joins $V_{j'}$. It holds that  $d_{G}(u_i,t_{j'})\le R_{j'}\cdot D(u_i)$. 
	Moreover, all the neighbors of $u_i$ join $N$. Therefore $u_{i+1}$ necessarily joined to the set $N$ (at some stage during the execution of the \texttt{Create-Cluster} procedure for $V_{j'}$).
	As
	\begin{align*}
	d_{G}(u_{i+1},t_{j'}) & \le d_{G}(u_{i+1},u_{i})+d_{G}(u_{i},t_{j'})\\
	& \le d_{G}(u_{i+1},u_{i})+R_{j'}\cdot d_{G}(u_{i},t_{j})\\
	& \le R_{j'}\cdot d_{G}(u_{i+1},t_{j})= R_{j'}\cdot D(u_{i+1})~,
	\end{align*}
	$u_{i+1}$ will join $V_{j'}$, as required.
\end{proof}

\subsection{Modification}
Let $\hat{\Delta}=\min_{t,t'\in K}\{d_G(t,t')\}$ 
denote the minimal distance between a pair of terminals. Note that $\hat{\Delta}>0$.
For the sake of analysis we will make a preprocessing step to ensure that every edge $e$ has weight at most $c_w\cdot \hat{\Delta}=\frac{\delta}{24}\cdot \hat{\Delta}$.  
This can be achieved by subdividing larger edges, i.e. adding additional
vertices of degree two in the middle of such edges. 
Denote by $\hat{G}$ the modified graph $G$, when we repeatedly subdivide edges until every edge $e$ has small enough weight.
We argue that such subdivisions did not effect whatsoever the terminal-centered minor returned by \Cref{alg:mainSPR}.
\begin{claim}\label{clm:SameMinor}
	Let $G=(V,E,w)$ be a weighted graph with terminal set $K=\{t_1,\dots,t_k\}$. Consider an edge $e=\{v,u\}\in E$ of weight $\omega$. Let $\tilde{G}$ be the graph $G$ with subdivided edge $e$. Specifically, we add a new Steiner vertex $v_e$, and replace the edge $e$ by two new edges $\{v_e,v\},~\{v_e,u\}$, both of weight $\omega/2$.	
	
	Fix $g_1,\dots,g_k$ and consider \Cref{alg:mainSPR} where the random choices in \lineref{line:rnd} are $g_1,\dots,g_k$ respectively. Then the terminal-centered minor $M$ returned on input $G$ is the same as the terminal-centered minor $\tilde{M}$ returned on input $\tilde{G}$ .
\end{claim}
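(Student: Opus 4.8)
The plan is to show that the entire run of \Cref{alg:mainSPR} on $\tilde G$ "mimics" the run on $G$, with the only difference being that whenever a cluster in $G$ swallows one of the endpoints $v,u$, the corresponding cluster in $\tilde G$ swallows the same endpoint together with the new vertex $v_e$ (and symmetrically, if both $v$ and $u$ end up in clusters $V_i \ne V_j$, then $v_e$ goes to whichever of the two clusters reaches it first in the Dijkstra-like order). Since $M$ and $\tilde M$ are determined by which terminal each \emph{original} vertex is assigned to (the added vertex $v_e$ is a Steiner vertex and has no effect on the contracted minor, as long as it lands in one of the two clusters adjacent to it), it suffices to prove that every original vertex $w \in V$ is assigned to the same terminal in both runs.

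The key observations I would set up first are: (i) the quantity $D(\cdot)$ is unchanged on all original vertices, since subdividing a single edge does not create any new terminal and does not shorten any shortest path, and $D(v_e) = \min\{D(v)+\omega/2,\, D(u)+\omega/2\}$ lies between the values relevant to $v$ and $u$; (ii) for every pair of original vertices $d_{\tilde G}(\cdot,\cdot) = d_G(\cdot,\cdot)$, and for the terminal $t_j$ currently building a cluster, $d_{\tilde G}(v_e,t_j) = \min\{d_G(v,t_j), d_G(u,t_j)\} + \omega/2$. From (ii) one checks the crucial monotonicity: $d_{\tilde G}(v_e,t_j) \le R_j D(v_e)$ holds precisely when at least one of $v,u$ would be admitted to $V_j$ in the run on $G$ — more carefully, if the acceptance test $d_G(\cdot,t_j)\le R_j D(\cdot)$ passes for $v$ (say $v$ is the endpoint with $d_G(v,t_j) \le d_G(u,t_j)$), then it passes for $v_e$ as well, because $d_{\tilde G}(v_e,t_j) = d_G(v,t_j)+\omega/2$ and $D(v_e) \ge$ the appropriate lower bound; and conversely if $v_e$ is accepted then the endpoint through which it was reached is accepted too. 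This is the technical heart of the argument.

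With these facts in hand I would argue by induction on the rounds $j = 1,\dots,k$ that after round $j$ the set of original vertices clustered so far is identical in the two runs, and $v_e$ is clustered in $\tilde G$ iff at least one of $v,u$ is clustered in $G$ (and then $v_e \in \tilde V_i$ where $t_i$ is the terminal that first reached $v_e$ through an accepted endpoint). Within a single round this is a secondary induction on the \texttt{Create-Cluster} execution: the frontier sets $N$ evolve identically on original vertices because whenever an original vertex is added to $V_j$ its original neighbors are added to $N$ exactly as before, and the only subtlety is that an original neighbor of $v$ or $u$ is now reached \emph{through} $v_e$ rather than directly — but by the monotonicity in (ii), $v_e$ is admitted to $V_j$ whenever the path through it would have admitted the far endpoint in $G$, so $v_e$ acts as a transparent relay and the far endpoint still gets examined with the same outcome (here one uses that the acceptance test depends only on $d_G(\cdot,t_j)$ and $D(\cdot)$, not on the identity of the vertex that forwarded it to $N$). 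Note also that the set $U$ of refused vertices only ever contains original vertices or $v_e$, and $v_e \in U$ in $\tilde G$ exactly when both $v$ and $u$ were refused (or not yet reached) in $G$; this does not block any original vertex, since an original vertex blocked by "$v_e \in U$" in $\tilde G$ would equally be blocked by "$v \in U$ or $u \in U$" in $G$.

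The main obstacle I anticipate is handling the arbitrary extraction order in \lineref{AlgLine:CreCluPickvN} of \texttt{Create-Cluster}: the procedure is nondeterministic in which frontier vertex it pulls next, so "the run on $\tilde G$" and "the run on $G$" are not literally in lockstep. I would address this by noting that \Cref{lem:AlgRetPar} (and an easy extension of its argument) implies the final partition is independent of the extraction order — once a vertex is reachable from $t_j$ via a chain of accepted vertices, it will be accepted regardless of order — so I can fix \emph{any} extraction order on $G$, derive from it a compatible order on $\tilde G$ (insert $v_e$ whenever the edge $\{v,u\}$ would have been traversed), and compare these two specific runs. Then the double induction goes through cleanly, and a final sentence observes that since $v_e$ always joins a cluster containing one of its two neighbors, contracting clusters yields the same minor edge set and the same edge weights $d_{\tilde G}(t_i,t_j) = d_G(t_i,t_j)$, so $\tilde M = M$.
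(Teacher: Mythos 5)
Your overall strategy is the same as the paper's: fix the randomness, show by induction over the execution that the two runs agree on all original vertices (so that $V_j=\tilde V_j\setminus\{v_e\}$ for every $j$), and observe that $v_e$ always lands in a cluster adjacent to it, so the contracted minors coincide. The two identities you isolate, $D(v_e)=\min\{D(v),D(u)\}+\omega/2$ and $d_{\tilde G}(v_e,t_j)=\min\{d_G(v,t_j),d_G(u,t_j)\}+\omega/2$, are exactly the ones the paper uses, and your remark that the outcome of \texttt{Create-Cluster} is independent of the extraction order is a legitimate (arguably cleaner) way to deal with the nondeterminism in \lineref{AlgLine:CreCluPickvN}, which the paper handles by simply declaring the extraction rule deterministic.

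One step you single out as ``the technical heart'' is stated too strongly and is in fact false: the acceptance test need not pass for $v_e$ just because it passes for the endpoint nearer to $t_j$. Take $v$ with $d_G(v,t_j)=R_j\cdot D(v)$ (so $v$ barely passes), and suppose $D(u)=D(v)-\omega$ while $d_G(u,t_j)=d_G(v,t_j)+\omega$ (both realizable when $u$ lies on $v$'s shortest path to its nearest terminal but not on its shortest path to $t_j$). Then $D(v_e)=D(v)-\omega/2$ and $d_{\tilde G}(v_e,t_j)=R_j\cdot D(v)+\omega/2$, and the test for $v_e$ reads $R_j\cdot D(v)+\omega/2\le R_j\cdot D(v)-R_j\cdot\omega/2$, which is false for every $R_j\ge 1$. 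What is true --- and what the paper proves, and what your induction actually invokes later --- is the weaker relay property: if \emph{both} $v$ and $u$ pass the test (equivalently, if the far endpoint would be admitted in $G$), then $\min\{d_G(v,t_j),d_G(u,t_j)\}+\omega/2\le R_j\cdot\min\{D(v),D(u)\}+\omega/2\le R_j\cdot D(v_e)$, so $v_e$ is admitted and forwards the far endpoint with an unchanged test outcome; and if the far endpoint fails its test, then it is immaterial whether $v_e$ is admitted or refused, since the far endpoint is rejected (or never reached through $e$) either way and $v_e$ has no other neighbors. Replacing your biconditional by this case split is precisely the paper's case analysis, after which your argument goes through.
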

\begin{proof}
	As $g_1,\dots,g_k$ are fixed, \Cref{alg:mainSPR} is now deterministic.
	Let $V_1,\dots,V_k$ be the terminal partition induced by \Cref{alg:mainSPR} on $G$, and similarly let $\tilde{V}_1,\dots,\tilde{V}_k$ be the terminal partition induced by \Cref{alg:mainSPR} on $\tilde{G}$.
	We argue that for all $j$, $ V_j=\tilde{V}_j\setminus\{v_e\}$.
	Note that this will imply our claim. Indeed, let $V_j$, $V_{j'}$ be the clusters such that $v\in V_j$ and $u\in V_{j'}$. As each cluster is connected, necessarily  $v_e\in V_j\cup V_{j'}$. By the definition of subdivision, this will imply that the terminal-centered minors are indeed identical.
	
	Each Steiner vertex can be clustered only after at least one of its neighbors is clustered. Therefore $v_e$ cannot be clustered before both $v$ and $u$. 
	W.l.o.g $v$ joined $V_j$ while $u$ is still unclustered. 
	The vertex $v_e$ wasn't examined before the clustering of $v$. 
	Denote by $V_j'$ (resp. $\tilde{V}_j'$) the set $V_j$ (resp. $\tilde{V}_j$) right after the clustering of $v$ at the execution of \Cref{alg:mainSPR} on $G$ (resp. $\tilde{G}$).	
	Note that the order of extraction from $N$ in \lineref{AlgLine:CreCluPickvN} of \Cref{alg:CreateCluster} is determined deterministically. Therefore, up to the clustering of $v$ the algorithm behaved the same on both $G$ and $\tilde{G}$. 
	In particular, for all $j''<j$, $V_{j''}=\tilde{V}_{j''}$. Moreover, $V_j'=\tilde{V}_j'$.
	After $v$ joins $V_j$, $v_e$ joins (for the first time) to the set $N$ (for $\tilde{G}$).
	Note that
	\begin{align*}
	D(v_{e}) & =\min\left\{ D(v),D(u)\right\} +\frac{\omega}{2}\\
	d_{G}(t_{j,}v_{e}) & =\min\left\{ d_{G}(t_{j,}v),d_{G}(t_{j},u)\right\} +\frac{\omega}{2}
	\end{align*}
	As $v$ joined $V_j$, necessarily $d_G(t_j,v)\le R_j\cdot D(v)$. Consider the following cases:
	\begin{itemize}
		\item \textbf{$u\notin V_j$ : } In the algorithm for $G$, $u$ was examined (as $v\in V_j$), thus $d_G(t_j,u)>R_j\cdot D(u)$. Therefore $u$ will also not join $\tilde{V}_j$.
		As $v_e$ has edges only to $v$ and $u$, $v_e$ has no impact on any other vertex. Therefore the cluster $\tilde{V}_j$ will be constructed in the same manner as $V_j$ (up to maybe containing $v_e$).
		Note that all the other clusters will not be effected, as if $v_e$ remained unclustered, it becomes a leaf. We conclude that for every $j''$, $V_{j''}=\tilde{V}_{j''}\setminus\{v_e\}$. 	
		
		\item \textbf{$u\in V_j$ : } It holds that $d_G(t_j,u)\le R_j\cdot D(u)$. Therefore 
		\[
		d_{G}(t_{j,}v_{e})=\min\left\{ d_{G}(t_{j,}v),d_{G}(t_{j,}e)\right\} +\frac{\omega}{2}\le R_{j}\cdot\min\left\{ D(v),D(u)\right\} +\frac{\omega}{2}\le R_{j}\cdot D(v_{e})~.
		\]
		Therefore $v_e$ will join $\tilde{V}_j$, which will ensure that $u$ joins $\tilde{N}$, and afterwards to $\tilde{V}_j$. Note that $v_e$ has no other impact. In particular, for every $j''\ne j$, $V_{j''}=\tilde{V}_{j''}$ while $V_{j}\cup\{v_e\}=\tilde{V}_{j}$.
	\end{itemize} 
\end{proof}

Consider the modified graph $\hat{G}$. Suppose that we proved that with probability at least $1-\frac{1}{k}$, in the minor graph $\hat{M}$ returned by \Cref{alg:mainSPR} for $\hat{G}$, it holds that for every two terminals $t,t'$, $d_{\hat{M}}(t,t')\le O\left(\log k\right)\cdot d_{\hat{G}}(t,t')= O\left(\log k\right)\cdot d_{G}(t,t')$. Then by repetitive use of \Cref{clm:SameMinor} (once for every new vertex), \Cref{thm:mainSPR} follows.
From now on, we will abuse notation and refer to the graph $\hat{G}$ as $G$. Note that all this is done purely for the sake of analysis, as by \Cref{clm:SameMinor} we will get the same minor when running \Cref{alg:mainSPR} for either $G$ or $\hat{G}$. Thus, in fact, we will execute \Cref{alg:mainSPR} on the original graph with no modifications.

\section{Distortion Analysis}\label{sec:distortion}
\subsection{Interval and Charges}
In this section we describe in detail the probabilistic process of breaking the graph into clusters from the view point of the Steiner vertices. The main objective will be to define a charging scheme, which we can later use to bound the distortion.

Consider two terminals
$t$ and $t'$.
Let $P_{t,t'}=\left\{ t=v_{0},\dots,v_{\gamma}=t'\right\}$ be the shortest path from $t$ to $t'$ in $G$.
We can assume that there are no terminals in $P_{t,t'}$ other than $t,t'$. This is because if we will prove that for every pair of terminals  $t,t'$ such that $P_{t,t'}\cap K=\{t,t'\}$ it holds that $d_{M}(t,t')\le O(\log k)\cdot d_{G}(t,t')$, the this property will be implied for all terminal pairs.

For an interval
$Q=\left\{ v_{a},\dots,v_{b}\right\} \subseteq P_{t,t'}$, the \emph{internal length} is 
$L(Q)=d_{G}(v_{a},v_{b})$, while the \emph{external length} is $L^{+}(Q)=d_{G}(v_{a-1},v_{b+1})$ \footnote{For ease of notation we will denote $v_{-1}=t$ and $v_{\gamma+1}=t'$.}.
The distance from the interval $Q$ to the terminals, denoted $D(Q)=D(v_a)$ is simply the distance from its leftmost point $v_a$ to the closest terminal to $v_a$.
Set $\cint=\frac16$ (``int'' for interval).
We partition the vertices in $P_{t,t'}$ into consecutive intervals $\mathcal{Q}$, such that for every $Q\in \mathcal{Q}$,
\begin{eqnarray}
L(Q)\le\cint\delta\cdot D(Q)\le L^{+}(Q)~.\label{eq:IntervalLenght}
\end{eqnarray}
Such a partition could be constructed as follows: Sweep
along the interval $P_{t,t'}$ in a greedy manner, after partitioning the prefix $v_{0},\dots,v_{h-1}$,
to construct the next $Q$, simply pick
the minimal index $s$ such that $ L^{+}(\left\{ v_{h},\dots,v_{h+s}\right\} )\ge\cint\delta\cdot D(v_{h})$.
By the minimality of $s$, $L(\left\{ v_{h},\dots,v_{h+s}\right\} )\le L^{+}(\left\{ v_{h},\dots,v_{h+s-1}\right\} )\le\cint\delta\cdot D(v_{h})$ (in the case $s=0$, trivially $L(\left\{ v_{h}\right\} )=0\le\cint\delta\cdot D(v_{h})$).
Note that such $s$ could always be found, as $ L^{+}(\left\{ v_{h},\dots,v_{\gamma}\right\} )=d_{G}(v_{h-1},t')\ge d_{G}(v_{h},t')\ge D(v_{h})=D(Q)$.

In the beginning of \Cref{alg:mainSPR}, all the vertices of $P_{t,t'}$ are
\emph{active}. 
Consider round $j$ in the algorithm when terminal $t_{j}$ constructs its cluster $V_{j}$. Specifically, it picks $g_{j}$
and sets $\ensuremath{R_{j}\leftarrow (1+\delta)^{g_j}}$. Then, using the \texttt{Create-Cluster} procedure it grows a cluster in a ``Dijkstra''  fashion.
If no active vertex joins $V_j$, we say that $t_j$ doesn't \emph{participate} in $P_{t,t'}$. 
Otherwise, let $a_{j}\in P_{t,t'}$
(resp., $b_{j}$) be the active vertex that joins to $V_{j}$ with minimal (resp., maximal) 
index (w.r.t $P_{t,t'}$).
All the vertices
$\left\{ a_{j},\dots,b_{j}\right\}\subset P_{t,t'}$ between $a_{j}$ and $ b_{j}$ (w.r.t the order induced by $P_{t,t'}$)
become inactive. We call this set $\left\{ a_{j},\dots,b_{j}\right\}$ a
\emph{detour} $\mathcal{D}_{j}$ from $a_{j}$ to $b_{j}$.
See \Cref{fig:sliceChange} for an illustration.
\begin{figure}[]
	\centering{\includegraphics[scale=0.7]{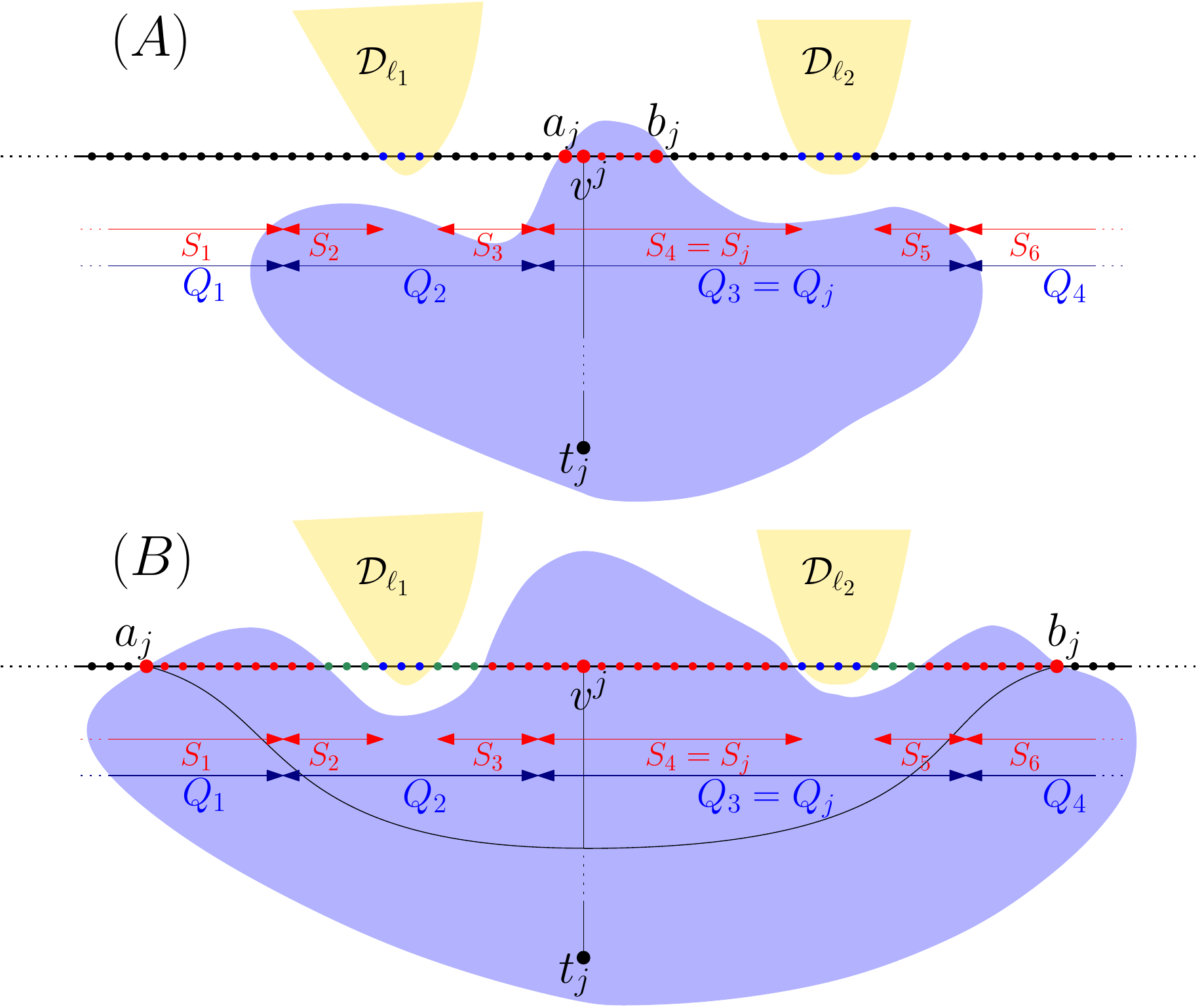}} 
	\caption{\label{fig:sliceChange}\small 
		\it 
		The figure illustrates round $j$ in \Cref{alg:mainSPR}, when $t_j$ grows the cluster $V_j$.
		We present two scenarios for different choices of $R_j$.
		The black line is part of $P_{t,t'}$ the shortest path from $t$ to $t'$. 
		The blue intervals $Q_i$ represent the intervals in $\mathcal{Q}$. The red sub-intervals $S_i$ represent the slices (maximal continuous
		subsets of active vertices). Where $S_2,S_3\subset Q_2$ and $S_4,S_5\subset Q_3$.
		The yellow areas represent detours $\mathcal{D}_{\ell_1}$ and $\mathcal{D}_{\ell_2}$, where $Q_2$ (resp., $Q_3$) is charged for $\mathcal{D}_{\ell_1}$ (resp., $\mathcal{D}_{\ell_2}$). Note that vertices in that areas are inactive.\newline
		The terminal $t_j$ increases gradually $R_j$, the first vertex to be covered is $v^j$. In scenario (A), the growth of $R_j$ terminates immediately after covering $v^j$, and sets the borderline vertices $a_j$ and $b_j$ within the subinterval $S_j$. While in scenario (B), the growth of $R_j$ continues for another step, setting both $a_j$ and $b_j$ out of $S_j$.
		Vertices already inactive are colored in blue. 
		Vertices who join the cluster $V_j$ are colored in red.
		The green vertices, are vertices which still un-covered, but nevertheless become inactive. Vertices which remain active after the creation of $V_j$, are colored in black. \newline
		In scenario (A) all the vertices that become inactive, $\mathcal{D}_j$, are included in $S_4$. 
		$Q_3$ is charged for $\mathcal{D}_j$. The number of slices in $Q_3$ is increased by $1$, and no other changes occur ($X(Q_2)=1$, $X(Q_3)=2$).
		In scenario (B) $\mathcal{D}_\ell$ contains all the vertices in $S_2,S_3,S_4,S_5$, and part of the vertices in $S_1,S_6$.
		The number of slices in $Q_2$ and $Q_3$ become $0$, while the number of slices in $Q_1$ and $Q_4$ remain unchanged.
		$Q_3$ is charged for $\mathcal{D}_\ell$, while its charge for $\mathcal{D}_{\ell_2}$ is erased. Additionally, the charge of $Q_2$ for $\mathcal{D}_{\ell_1}$ is erased. That is, $Q_2$ will remain uncharged till the end of the algorithm ($\tilde{X}(Q_2)=X(Q_2)=0$, $X(Q_3)=1$).
	}
\end{figure}

Within each interval $Q$, each maximal sub-interval of active
vertices is called a \emph{slice}. We denote by $\mathcal{S}(Q)$ the current
number of slices in $Q$. In the beginning of the algorithm, for every
interval $Q$, $\mathcal{S}(Q)=1$, while at the end of the algorithm $\mathcal{S}(Q)=0$.

For an active vertex $v$, let $r_{v}$ be the minimal choice of $R_{j}$ (determined by $g_j$),
that will force $v$ to join $V_{j}$. Let $v^{j}$ be the
active vertex with minimal $r_{v}$ (breaking ties arbitrarily).
Note that $V_j$ is monotone with respect to $R_j$. That is, if $v$ will join $V_j$ for $R_j=r$, it will join $V_j$ for $R_j=r'\ge r$ as well.
We denote by $Q_{j}\in\mathcal{Q}$ the interval containing $v^{j}$. 
Similarly, $S_{j}$ is the slice containing $v^{j}$.
We \emph{charge}
$Q_{j}$
for the detour $\mathcal{D}_{j}$. We denote by $X(Q)$ the number of
detours the interval $Q$ is currently charged for. For every detour
$\mathcal{D}_{j'}$ which is contained in $\mathcal{D}_{j}$ (that is $a_{j}<a_{j'}<b_{j'}<b_{j}$ w.r.t. the order induced by $P_{t,t'}$),
we erase the detour and its charge. That is, for every $Q'\ne Q_{j}$,
$X(Q')$ might only decrease, while $X(Q_j)$ might increase
by at most $1$ (and can also decrease as a result of deleted detours).
We denote by $\tilde{X}(Q)$ the size of $X(Q)$ by the end of \Cref{alg:mainSPR}.
\Cref{fig:sliceChange} illustrates a single step.

Next, we analyze the change in the number of slices as a result of constructing the cluster $V_j$.
If $R_{j}<r_{v^j}$, then no active vertex joins $V_{j}$
and therefore $X(Q)$ and $\mathcal{S}(Q)$ stay unchanged, for all $Q\in\mathcal{Q}$. Otherwise, $R_{j}\ge r_{v^j}$,
a new detour will appear, and will be charged upon $Q_{j}$.
All the slices $S$ which are contained in $\mathcal{D}_{j}$ are deleted.
Every slice $S$ that intersects $\mathcal{D}_{j}$ but is not contained
in it will be replaced by one or two new slices. If
$\mathcal{D}_{j}\cap S \notin\{\mathcal{D}_{j},S\}$,
then $S$ is replaced by a single new sub-slice $S'$. The only possibility
for a slice to be replaced by two sub-slices is if $\mathcal{D}_{j}\subseteq S$,
and $\mathcal{D}_{j}$ does not contain an  ``extremal'' vertex in $S$ (see \Cref{fig:sliceChange}, scenario (A)). This can
happen only at $S_{j}$. We conclude that for every $Q'\ne Q_{j}$,
$\mathcal{S}(Q')$ might only decrease, while $\mathcal{S}(Q_{j})$ might increase
by at most $1$.

\begin{claim}\label{clm:failProb}
	Assuming $R_j\ge r_{v^j}$, all of $S_j$ joins $V_j$ with probability at least $1-p$.
\end{claim}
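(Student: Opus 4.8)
The plan is to reduce the claim to a deterministic statement using the memorylessness of the geometric distribution, and then to verify that statement with the triangle inequality and the interval‑length guarantee \eqref{eq:IntervalLenght}.

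\textbf{Step 1: memorylessness.} Since $R_j$ ranges over the values $(1+\delta)^1,(1+\delta)^2,\dots$, the threshold $r_{v^j}$ equals $(1+\delta)^{g^*}$ for some integer $g^*\ge 1$, and the event $\{R_j\ge r_{v^j}\}$ coincides with $\{g_j\ge g^*\}$. By memorylessness of $\Geo(p)$, conditioned on $g_j\ge g^*$ one has $g_j\ge g^*+1$ with probability exactly $1-p$, in which case $R_j\ge(1+\delta)^{g^*+1}=(1+\delta)\cdot r_{v^j}$. So it suffices to prove the deterministic statement: if $R_j\ge(1+\delta)\,r_{v^j}$, then all of $S_j$ joins $V_j$.

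\textbf{Step 2: the membership inequality.} Fix $u\in S_j$; the goal is to check $d_G(u,t_j)\le R_j\cdot D(u)$. Because $v^j$ joins $V_j$ already at $R_j=r_{v^j}$, we have $d_G(v^j,t_j)\le r_{v^j}\cdot D(v^j)$, and $D(v^j)\le d_G(v^j,u)+D(u)$ by the triangle inequality, hence
\[
d_G(u,t_j)\le d_G(u,v^j)+d_G(v^j,t_j)\le (1+r_{v^j})\,d_G(u,v^j)+r_{v^j}\cdot D(u).
\]
Since $u$ and $v^j$ lie in the same interval $Q_j$, $d_G(u,v^j)\le L(Q_j)\le \cint\delta\cdot D(Q_j)$ by \eqref{eq:IntervalLenght}; and since $u\in Q_j$, $D(u)\ge D(Q_j)-L(Q_j)\ge(1-\cint\delta)\,D(Q_j)$, so $d_G(u,v^j)\le \frac{\cint\delta}{1-\cint\delta}\,D(u)$. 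Substituting, and using $r_{v^j}\ge 1+\delta>1$ together with $\cint=\tfrac16$ and $\delta=\tfrac1{20\ln k}<1$ (so that $\tfrac{1+r_{v^j}}{r_{v^j}}<2<6-\delta=\tfrac{1-\cint\delta}{\cint}$), a short computation yields $(1+r_{v^j})\,d_G(u,v^j)\le \delta\, r_{v^j}\cdot D(u)$, whence $d_G(u,t_j)\le(1+\delta)\,r_{v^j}\cdot D(u)\le R_j\cdot D(u)$.

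\textbf{Step 3: propagation along the slice.} $S_j$ is a contiguous sub‑path of $P_{t,t'}$ containing $v^j$, and consecutive vertices of a shortest path are $G$‑adjacent. As $R_j\ge r_{v^j}$, $v^j$ itself joins $V_j$. Arguing as in \Cref{lem:AlgRetPar}, I would then walk outward from $v^j$ along $S_j$ in both directions: each vertex $u$ reached this way is active (hence still in $V_\perp$ during round $j$) and cannot be in $U$ — if it were, it would have been examined and failed the test $d_G(u,t_j)\le R_j\cdot D(u)$, contradicting Step 2 — so $u$ is inserted into $N$, examined, and by Step 2 added to $V_j$. By induction all of $S_j$ joins $V_j$, and combined with Step 1 this occurs with probability at least $1-p$. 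I expect the only delicate point to be Step 3's interaction with the \texttt{Create-Cluster} dynamics, namely that passing the membership inequality genuinely forces a vertex into $V_j$ irrespective of the arbitrary extraction order from $N$; this is handled by the stated monotonicity of $V_j$ in $R_j$ together with the observation that a vertex satisfying the test can never have been placed in $U$. The constant‑chasing in Step 2 is routine given the wide slack.
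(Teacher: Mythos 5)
Your proposal is correct and follows essentially the same route as the paper's proof: memorylessness of $\Geo(p)$ reduces the claim to showing that $R_j\ge(1+\delta)r_{v^j}$ forces all of $S_j$ into $V_j$; the triangle inequality together with the interval-length bound \eqref{eq:IntervalLenght} shows every $u\in S_j$ satisfies the membership test (the paper phrases this as the multiplicative bound $\frac{d_G(u,t_j)}{D(u)}\le\frac{d_G(v^j,t_j)}{D(v^j)}(1+\delta)$, but your additive bookkeeping yields the same conclusion with the same constants); and connectivity of the active slice in $V_\perp$ propagates membership outward from $v^j$. Your explicit handling of the extraction order and of why no vertex of $S_j$ can land in $U$ is a welcome clarification of a point the paper leaves implicit.
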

\begin{proof}
	As $v^j$ joins $V_j$ for $R_j\ge r_{v^j}$, by \lineref{line:desidion} of \Cref{alg:CreateCluster}, necessarily $\frac{d_{G}(v^{j},t_{j})}{D(v^{j})}\le r_{v^j}$. We will argue that for every $u\in S_j$, the following inequality holds:
	\begin{eqnarray}
	\frac{d_{G}(u,t_{j})}{D(u)}\le\frac{d_{G}(v^{j},t_{j})}{D(v^{j})}\left(1+\delta\right)\le r_{v^{j}}\left(1+\delta\right)~.\label{eq:SmallerRj}
	\end{eqnarray}
	Next, assume that $R_{j}\ge(1+\delta)r_{v^{j}}$.
	Before the execution of the \texttt{Create-Cluster} procedure for $V_j$, all the vertices in $S_j$ belong to $V_\perp$ (as all of them are active). Because $R_j\ge r_{v^j}$, $v^j$ will join $V_j$ (by the definition of $r_{v^j}$). In particular, additional vertices from $S_j$ (if exist) will join $N$. Using  inequality (\ref{eq:SmallerRj}), for every $u\in S_j$, $d_{G}(u,t_{j})/D_u\le r_{v_j}(1+\delta)\le R_j$. 
	Therefore every vertex from $S_j$ joining $N$ will also join $V_j$.
	In such a way, since $S_j$ is connected in $V_\perp$, all the vertices of $S_j$ will join $V_j$, as required.
	
	Next, we analyze the probability that indeed $R_j\ge (1+\delta) r_{v^j}$. Recall that $R_j=(1+\delta)^{g_j}$ where $g_j$ is distributed according to geometric distribution with parameter $P_{t,t'}$. Conditioned on the event $R_j\ge r_{v^j}$, we have that

	\begin{eqnarray}
		\Pr\left[R_{j}\ge(1+\delta)r_{v^{j}}\mid R_{j}\ge r_{v^{j}}\right] & =&\Pr\left[g_{j}\ge\log_{1+\delta}\left((1+\delta)r_{v^{j}}\right)\mid g_{j}\ge\log_{1+\delta}r_{v^{j}}\right]\\
		&=&\Pr\left[g_{j}\ge1+\log_{1+\delta}r_{v^{j}}\mid g_{j}\ge\log_{1+\delta}r_{v^{j}}\right]=1-p~.\label{eq:RjBigProb}
	\end{eqnarray}
It remains to prove inequality (\ref{eq:SmallerRj}).
	By the definition of $D(Q_j)$ and the triangle inequality 
	\begin{eqnarray}
		L(Q_{j})\overset{(\ref{eq:IntervalLenght})}{\le}\cint\delta\cdot D(Q_{j})\le\cint\delta\cdot\left(D(v^{j})+L(Q_{j})\right)\le2\cint\delta\cdot D(v^{j})\le2\cint\delta\cdot d_G(v^j,t_j)~.\label{eq:BoundLQj}
	\end{eqnarray}
	Therefore, for every $u\in S_{j}$,
	\[
	d_{G}(u,t_{j})\le d_{G}(v^{j},t_{j})+L(Q_{j})\overset{(\ref{eq:BoundLQj})}{\le} d_{G}(v^{j},t_{j})\left(1+2\cint\delta\right)~,
	\]
	Similarly,
	\begin{eqnarray}
		D(u)\ge D(v^{j})-L(Q_{j})\ge D(v^{j})\left(1-2\cint\delta\right)~.\label{eq:BoundDu}
	\end{eqnarray}
	We conclude that 
	\[
	\frac{d_{G}(u,t_{j})}{D(u)}\le\frac{d_{G}(v^{j},t_{j})\left(1+2\cint\delta\right)}{D(v^{j})\left(1-2\cint\delta\right)}\le\frac{d_{G}(v^{j},t_{j})}{D(v^{j})}\left(1+3\cdot2\cint\delta\right)=\frac{d_{G}(v^{j},t_{j})}{D(v^{j})}\left(1+\delta\right)~.
	\]
\end{proof}

\subsection{Bounding the Number of Failures}\label{subsec:FailCount}
We define a \emph{cost function} $f:\mathbb{R}_+^{|\mathcal{Q}|}\rightarrow\mathbb{R}_{+}$,
in the following way $f(\{x_Q\}_{Q\in\mathcal{Q}})=\sum_{Q\in\mathcal{Q}}X(Q)\cdot L^{+}(Q)$ .\footnote{Even though our goal will be to bound $f(\{x_Q\}_{Q\in\mathcal{Q}})$, we define $f$ as a general function from $\mathbb{R}^{|\mathcal{Q}|}$ in order to use it on other variables as well.} 
Note that the cost function $f$ is linear and monotonically increasing
coordinate-wise.
In \Cref{sec:DistBound} we show that the distance  $d_M(t,t')$ between $t$ and $t'$ in the minor graph $M$ can be bounded by $\log k\cdot f\left(\{\tilde{X}(Q)\}_{Q\in\mathcal{Q}}\right)$, the scaled cost function applied on the charges.
This section is devoted to proving the following lemma.
\begin{lemma}\label{lem:fbound}
	$\Pr\left[f\left(\{\tilde{X}(Q)\}_{Q\in\mathcal{Q}}\right)\ge43\cdot d_G(t,t') \right]\le k^{-3}$.
\end{lemma}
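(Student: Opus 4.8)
The plan is to bound $f(\{\tilde X(Q)\}_Q)=\sum_Q \tilde X(Q)\cdot L^+(Q)$ by comparing the (highly dependent) charges $\tilde X(Q)$ with a collection of \emph{independent} geometric-type random variables, and then invoking the concentration bound of \Cref{lem:ExpConcentration} (in its geometric incarnation). The guiding observation is \Cref{clm:failProb}: whenever a detour is charged to an interval $Q$ (i.e.\ $v^j\in Q$ and $R_j\ge r_{v^j}$), the \emph{entire} slice $S_j$ joins $V_j$ with probability at least $1-p$; in that event $\mathcal{S}(Q)$ does not increase, and since $\mathcal{S}(Q)$ starts at $1$ and ends at $0$, the number of times $\mathcal{S}(Q)$ can strictly increase is controlled. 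So each interval accumulates charge only through a sequence of "failures" (events where, conditioned on a charge landing in $Q$, the slice does not get fully absorbed), and each failure happens with probability at most $p=\tfrac15$ independently of the past in the right sense. Thus $X(Q)$ is stochastically dominated by a geometric random variable, and I expect $\tilde X(Q)$ to be dominated similarly — this is the content I would isolate as a sub-claim.

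**Key steps, in order.**
First I would set up the coupling: process the terminals $t_1,\dots,t_k$ in order, and for each $j$ with $v^j\in Q_j$ and $R_j\ge r_{v^j}$, declare a \emph{success for $Q_j$} if all of $S_j$ joins $V_j$ (probability $\ge 1-p$ by \Cref{clm:failProb}, conditioned on the history up to round $j$) and a \emph{failure} otherwise. A success for $Q$ removes $S_j$ and does not increase $\mathcal S(Q)$; a failure can increase $\mathcal S(Q)$ by at most $1$ (as established right before \Cref{clm:failProb}). Since $\mathcal S(Q)$ is integer-valued, starts at $1$, ends at $0$, and only failures increase it, the total number of failures charged to $Q$ over the whole run is at least the number of charges that survive, so $\tilde X(Q)\le (\text{number of failures associated with }Q)$. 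Because each failure event has conditional probability $\le p$ given the past, the number of failures associated with $Q$ is stochastically dominated by $\Geo(1-p)-1$, i.e.\ a geometric number of failures; moreover these dominating variables across different $Q$ can be taken independent (different intervals correspond to disjoint sets of slices and the "coin" controlling each failure is fresh randomness in that round). I would phrase this as: there exist independent $Y_Q$ with $\tilde X(Q)\preceq Y_Q$ where $Y_Q+1\sim\Geo(1-p)$ (so $\E[Y_Q]=\frac{p}{1-p}=\frac14$).

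**From the domination to the tail bound.** Given this, $f(\{\tilde X(Q)\}_Q)\le \sum_Q Y_Q L^+(Q)$, a sum of independent scaled geometric variables. Its mean is $\sum_Q \frac14 L^+(Q)$. Now $\sum_Q L^+(Q)=\sum_Q d_G(v_{a-1},v_{b+1})$ telescopes with overlaps of at most two consecutive edges each counted twice, so $\sum_Q L^+(Q)\le 2\,d_G(t,t')$ plus lower-order terms; more carefully, since consecutive intervals share only endpoints, $\sum_Q L^+(Q)\le d_G(t,t') + 2\sum_Q L(Q)\le d_G(t,t')+O(\delta)\,d_G(t,t')\le 2 d_G(t,t')$ for $k$ large (using $L(Q)\le \cint\delta D(Q)$ and $D(Q)\le d_G(t,t')$ type bounds — this is the routine bookkeeping I would not grind through here but would verify). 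Hence $\mu:=\E[\sum_Q Y_Q L^+(Q)]\le \tfrac12 d_G(t,t')$, and a geometric random variable with success probability $1-p=\tfrac45$ is dominated by an exponential with mean $\frac{1}{-\ln p}=\Theta(1)$ — more precisely one can compare $\Geo(1-p)$ with $\Exp(\lambda)$ for a suitable constant $\lambda$, or apply a Chernoff/Bernstein bound for sums of independent bounded-variance geometrics directly. Using the maximal coordinate weight $\lambda_M=\max_Q \lambda L^+(Q)$ and the preprocessing guarantee $L^+(Q)$ is not too large relative to $D(Q)$ (via \eqnref{eq:IntervalLenght}: $L^+(Q)\le L(Q)+2c_w\hat\Delta\le \cint\delta D(Q)+\tfrac{\delta}{12}\hat\Delta=O(\delta)\,d_G(t,t')/\log k$-ish scale), one gets $\lambda_M = O(d_G(t,t')/\log k)$. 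Plugging $a=43\, d_G(t,t')\ge 2\mu$ into \Cref{lem:ExpConcentration} yields $\Pr[\cdot\ge 43 d_G(t,t')]\le \exp\!\big(-\frac{1}{2\lambda_M}(43-1)d_G(t,t')\big)=\exp(-\Omega(\log k))\le k^{-3}$, where the constants $43$ and the implied constant in $\delta=\frac{1}{20\ln k}$ are chosen to make the exponent at least $3\ln k$.

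**Main obstacle.** The delicate point is the stochastic-domination claim $\tilde X(Q)\preceq Y_Q$ with the $Y_Q$ genuinely independent across $Q$: the charges are defined through a complicated global process (detours erasing nested detours, slices splitting and merging across interval boundaries in scenario (B)), so I must argue that \emph{erasures only help} (they decrease $X(Q)$), that the only mechanism increasing $X(Q)$ is a fresh failure localized to $Q$, and that the per-round failure coin — which is exactly the event $\{R_j\ge(1+\delta)r_{v^j}\mid R_j\ge r_{v^j}\}$ from \eqnref{eq:RjBigProb} — is, conditionally on everything determined before round $j$, an independent $\mathrm{Bernoulli}(\le p)$. Getting the conditioning exactly right (what is "the past": the previous clusters, hence the current active set, hence which $Q$ gets charged, but \emph{not} yet the outcome of the slice-absorption coin) and then extracting genuinely independent dominating variables $Y_Q$ — rather than merely a supermartingale bound — is where the real work lies; I would likely prove it by a round-by-round induction exhibiting an explicit coupling, possibly at the cost of defining the $Y_Q$ to absorb "wasted" coins from rounds that charge an already-empty-looking interval.
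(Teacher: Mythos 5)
Your overall architecture matches the paper's: couple the dependent charges $\{\tilde X(Q)\}$ to independent per-interval dominating variables via the slice-count bookkeeping and \Cref{clm:failProb}, dominate those by (shifted) exponentials, and finish with \Cref{lem:ExpConcentration}. However, there is a genuine gap in your key domination step: the claim that $\tilde X(Q)$ is stochastically dominated by $Y_Q$ with $Y_Q+1\sim\Geo(1-p)$ (mean $\tfrac14$) is false. Your derivation ``$\mathcal S(Q)$ starts at $1$, ends at $0$, only failures increase it, hence $\tilde X(Q)\le\#\text{failures}$'' is a non sequitur: that accounting bounds the number of \emph{successes} by $1+\#\text{failures}$, hence the total number of charges by $2F+1$, not by $F$. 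Every charging round --- success or failure --- creates a detour charged to $Q_j$, and a successful detour can perfectly well survive to the end; in the simplest run ($Q$ charged once, the whole slice absorbed, detour never erased) you get $\tilde X(Q)=1$ while $F=0$, so $\Pr[\tilde X(Q)\ge 1]$ can far exceed $\Pr[Y_Q\ge 1]=p$. The process is also not ``Bernoulli trials until the first success'': a failure splits a slice into two, and \emph{each} resulting slice must later be consumed by a further charge, which can itself fail and spawn more slices. The correct dominating object is therefore the total progeny of a subcritical branching process, equivalently the first-passage time $2F+1$ of a $p$-biased random walk started at $1$ --- this is exactly what the paper's ``box of coins'' coupling (\Cref{clm:CoinsDominate}) encodes, with mean $\tfrac{1}{1-2p}=\tfrac53$ and tail $\Pr[\tilde Y(Q)\ge 2m+1]\le e^{-m/5}$, which is then dominated by $1+\Exp(10)$ (\Cref{clm:ExpDominate}), not by an exponential with constant rate $\ln(1/p)$.

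The damage is quantitative rather than structural: your mean $\mu\le\tfrac12 d_G(t,t')$ and your $\lambda_M$ are too small by constant factors, and the additive term $f(\{1\}_{Q})\le 2d_G(t,t')$ (coming from the ``$+1$'' in $1+\Exp(10)$, i.e.\ from the fact that even all-success runs charge each interval once) is missing from your tail computation. With the corrected domination the same concentration argument goes through and is precisely why the threshold is $43$ (namely $c_{\mathrm{con}}+40$ with $2\mu\le 40\Delta$ and $f(\{1\})\le 2\Delta$). So you need to replace the geometric domination by the branching-process/first-passage domination and redo the constants; the rest of your plan (independence across intervals via fresh per-round randomness, $\sum_Q L^+(Q)\le 2d_G(t,t')$, $L^+(Q)=O(\delta)\cdot d_G(t,t')$ via the subdivision step, and the final application of \Cref{lem:ExpConcentration}) is sound.
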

Using \Cref{clm:failProb}, one can show that for every $Q\in\mathcal{Q}$, $\mathbb{E}[\tilde{X}(Q)]=O(1)$, and moreover, w.h.p. $\tilde{X}(Q)=O(\log k)$ for all $Q$. 
However, we use a concentration bound on all $\{\tilde{X}(Q)\}_{Q\in\mathcal{Q}}$ simultaneously in order to provide a stronger upper bound.

\subsubsection{Bounding by independent variables}
	In our journey to bound  $f\left(\{\tilde{X}(Q)\}_{Q\in\mathcal{Q}}\right)$, the first step will be to replace $\{\tilde{X}(Q)\}_{Q\in\mathcal{Q}}$ with independent random variables. 
	Consider the following process: a \emph{box} $B$ which contains coins of two types: active and inactive. 
	In the beginning, there is a single active
	coin. In each round, we toss an active coin, which gets $0$ (failure) with probability $p$, and $1$ (success) with probability $1-p$. If we get a $0$, two additional active coins are added
	to the box. 
	In any case, the tossed coin becomes inactive. All the coin tosses throughout the proses are independent.
	The process terminates when no active coins remain.
	Let $\{B_Q\}_{Q\in\mathcal{Q}}$ be a set of $|\mathcal{Q}|$ independent boxes (here the box $B_Q$ resembles the interval $Q$). 
	For the box $B_Q$, denote by $Z(Q)$ the number of active coins,
	by $Y(Q)$ the number of inactive coins and by $\tilde{Y}(Q)$ the
	number of inactive coins at the end of the process. 
\begin{claim}
	\label{clm:CoinsDominate}For every $\alpha\in\mathbb{R}_{+}$, $\Pr\left[f\left(\{\tilde{X}(Q)\}_{Q\in\mathcal{Q}}\right)\ge\alpha\right]\le\Pr\left[f\left(\{\tilde{Y}(Q)\}_{Q\in\mathcal{Q}}\right)\ge \alpha\right]$.
\end{claim}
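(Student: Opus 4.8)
The plan is to couple the random process that generates the charges $\{\tilde X(Q)\}$ with the independent box processes $\{B_Q\}$ so that, interval by interval, the number of inactive coins in $B_Q$ stochastically dominates the number of detours charged to $Q$ — in fact dominates it pointwise under a suitable coupling, after which the monotonicity of $f$ finishes the job. Concretely, I would set up the coupling so that whenever the algorithm reaches a round $j$ in which $R_j \ge r_{v^j}$ (so a new detour appears and is charged to $Q_j$), we toss one active coin from $B_{Q_j}$; the coin is forced to match the ``does all of $S_j$ join $V_j$'' indicator, which by \Cref{clm:failProb} is a ``success'' with probability at least $1-p$. Rounds with $R_j < r_{v^j}$ involve no coin toss and no change to any $X(Q)$.

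The key structural claim driving the coupling is that the slice count $\mathcal{S}(Q)$ always upper-bounds the number of active coins $Z(Q)$ in the coupled box, while $X(Q) \le Y(Q)$ throughout. Initially both boxes and intervals start with one active object ($\mathcal{S}(Q)=1$, $Z(Q)=1$) and zero inactive ones. In a round charged to $Q_j$: on a ``success'' (all of $S_j$ joins $V_j$, probability $\ge 1-p$) the slice $S_j$ is consumed and $\mathcal{S}(Q_j)$ does not increase (it stays or drops), mirroring the coin becoming inactive with no new active coins; on a ``failure'' $\mathcal{S}(Q_j)$ may increase by at most $1$ (scenario (A) of \Cref{fig:sliceChange}, where $S_j$ splits into two), mirroring the two new active coins minus one consumed. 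In both cases $X(Q_j)$ goes up by at most $1$, matching $Y(Q_j)$. For every other interval $Q' \ne Q_j$, $\mathcal{S}(Q')$ and $X(Q')$ can only decrease (detours get erased), whereas $Z(Q')$ and $Y(Q')$ are untouched — so the inequalities $\mathcal{S}(Q') \le Z(Q')$ and $X(Q') \le Y(Q')$ are preserved a fortiori. The only subtlety is that the box process must not terminate before the interval process does: since $\mathcal{S}(Q) \le Z(Q)$, whenever an interval still has an active slice its box still has an active coin, so we can keep feeding tosses; if the box happens to have extra active coins left when the algorithm finishes, we just run the box to completion, which only increases $\tilde Y(Q)$ further. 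Hence at termination $\tilde X(Q) \le \tilde Y(Q)$ for every $Q$ simultaneously under this coupling.

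Given the pointwise domination $\tilde X(Q) \le \tilde Y(Q)$ for all $Q$ on the coupled probability space, and using that $f(\{x_Q\}) = \sum_Q x_Q \cdot L^+(Q)$ is monotonically increasing coordinate-wise (all $L^+(Q) \ge 0$), we get $f(\{\tilde X(Q)\}) \le f(\{\tilde Y(Q)\})$ pointwise, hence $\{f(\{\tilde X(Q)\}) \ge \alpha\} \subseteq \{f(\{\tilde Y(Q)\}) \ge \alpha\}$, which yields the claimed inequality of probabilities.

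The main obstacle — and the step requiring the most care — is making the coupling genuinely valid: the algorithm's per-round ``success'' probability is only bounded below by $1-p$ (it can be larger, and it is conditioned on the history), so I cannot literally identify the detour indicator with a fair coin of bias $1-p$. The clean fix is to couple via an auxiliary uniform $[0,1]$ variable in each charged round: declare a box ``failure'' whenever the uniform falls in $[0,p]$, and note that by \Cref{clm:failProb} (applied conditionally on the history up to that round) the event ``$S_j$ fails to be fully absorbed'' is contained in $[0,p]$ for the appropriate conditional law — so every box failure is matched by (or precedes) an interval failure, never the reverse, which is exactly the direction needed to keep $\mathcal{S}(Q) \le Z(Q)$ and $X(Q) \le Y(Q)$. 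One must also check that the identity of the charged interval $Q_j$ and the slice $S_j$ are measurable with respect to the history before the coin toss (they depend only on $r_{v^j}$, i.e., on the already-fixed structure and not on $g_j$), so that ``which box gets a toss'' is decided before the toss outcome — this is what legitimizes treating the $\{B_Q\}$ as a priori independent.
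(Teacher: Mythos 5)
Your proposal is correct and takes essentially the same route as the paper: a round-by-round coupling maintaining the coordinate-wise invariants $\mathcal{S}(Q)\le Z(Q)$ and $X(Q)\le Y(Q)$, finishing the boxes after the algorithm terminates, and concluding via the monotonicity of $f$. The only cosmetic difference is that the paper realizes the bias adjustment by declaring a coin failure with conditional probability $\frac{p-p'}{1-p'}$ on the event that all of $S_j$ joins $V_j$ (where $p'\le p$ is the true failure probability), which is equivalent to your auxiliary uniform-variable construction; note only that your gloss ``every box failure is matched by an interval failure'' states the containment backwards --- the correct (and needed) direction, which you do state via the inclusion of the bad event in $[0,p]$, is that every interval failure forces a box failure.
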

\begin{proof}
	The proof is done by coupling the two processes of \Cref{alg:mainSPR} and the coin tosses. We execute \Cref{alg:mainSPR}, which implicitly induces slices and detour charges. Simultaneously, we will use \Cref{alg:mainSPR} to toss coins.
	Inductively, we will maintain the invariant that
	$\{Y(Q)\}_{Q\in\mathcal{Q}}$ and $\{Z(Q)\}_{Q\in\mathcal{Q}}$
	are no less then $\{X(Q)\}_{Q\in\mathcal{Q}}$ and $\{S(Q)\}_{Q\in\mathcal{Q}}$
	(respectively) coordinate-wise.
	
	In the beginning 
	$\{X(Q)\}_{Q\in\mathcal{Q}}=\{Y(Q)\}_{Q\in\mathcal{Q}}=\left\{0\right\}_{Q\in\mathcal{Q}}$
	and $\{S(Q)\}_{Q\in\mathcal{Q}}=\{Z(Q)\}_{Q\in\mathcal{Q}}=\left\{1\right\}_{Q\in\mathcal{Q}}$.
	Consider round $j$, where the cluster $V_j$ is created for the terminal $t_j$. If $R_{j}<r_{v^j}$
	then nothing happens, and the invariant holds. Else, $R_{j}\ge r_{v^j}$,
	we will make a coin toss from the $B_{Q_{j}}$ box. Let $p'$
	be the probability that not all of $S_j$ joins $V_j$.
	By \Cref{clm:failProb},  $p'\le p$. 
	If indeed not all of $S_{j}$ joins $V_j$,
	the toss result is set to $0$. Otherwise, with probability $\frac{p-p'}{1-p'}$ the toss set to $0$.
	Note that the probability of $0$ is exactly $p'\cdot 1+(1-p')\cdot\frac{p-p'}{1-p'}=p$. 
	
	Next we argue that the invariant is maintained in either case.
	If not all of $S_j$ joins $Q_j$, then $S(Q_j)$ might increase by at most one, while the number of active coins $Z_{Q_j}$ increases by exactly one. 
	Otherwise, all of $S_j$ joins $Q_j$. In this case  $S(Q_j)$ necessarily decreases by at least one, while $Z_{Q_j}$ might either decrease or increase by one.
	For the charge parameter, $X(Q_j)$ might increase by at most one, while the number of inactive coins $Y(Q_j)$ increases by exactly one.
	For every $Q'\ne Q_{j}$, $\mathcal{S}(Q')$
	and $X(Q')$ might only decrease, while $Z_{Q'}$ and $Y(Q')$
	stay unchanged. We conclude that the invariant is holds after the construction of the cluster $V_j$. 
	
	At the end of the algorithm (when no slices are left), we might still
	have some active coins. In this case we will simply toss coins until no active coins remain (note that this indeed happens with probability $1$). Note that by
	doing so $\{Y(Q)\}_{Q\in\mathcal{Q}}$ can only
	grow coordinate-wise. As the marginal distribution on $\{\tilde{Y}(Q)\}_{Q\in\mathcal{Q}}$
	is exactly identical to the original one, the claim follows.
\end{proof}

\subsubsection{Replacing Coins with Exponential Random Variables}
Our next step is to replace each $Y(Q)$ with exponential random variable. This replacement will make the use of concentration bounds more convenient.
Consider some box $B_{Q}$. An equivalent way to describe the probabilistic
process in $B_{Q}$ is the following. Take a single coin with failure
probability $p$, toss this coin until the number of successes exceeds
the number of failures. The total number of tosses is exactly $\tilde{Y}(Q)$.
Note that $\tilde{Y}(Q)$ is necessarily odd. Next we bound the probability
that $\tilde{Y}(Q)\ge2m+1$, for $m\ge1$. This is obviously upper
bounded by the probability that in a series of $2m$ tosses we had
at least $m$ failures (as otherwise the process would have stopped earlier,
in fact this true even for $2m-1$ tosses). Let $\chi_{i}$ be an indicator
for a failure in the $i$'th toss, and  $\chi=\sum_{i=1}^{2m}\chi_{i}$. Note that
$\mathbb{E}\left[\chi\right]=2m\cdot p$.
A bound on $\chi$ follows by Chernoff inequality.
\begin{fact}[Chernoff inequality]\label{fact:Chernoff}
	Let $X_{1},\dots,X_{n}$ be i.i.d indicator variables each with probability
	$p$. Set $X=\sum_i X_{i}$ and $\mu=\mathbb{E}[X]=np$. Then for every
	$\delta\le2e-1$, $\Pr\left[X\ge(1+\delta)\mu\right]\le\exp(-\mu\delta^{2}/4)$.
\end{fact}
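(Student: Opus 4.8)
The plan is to prove \Cref{fact:Chernoff} by the standard exponential–moment (Chernoff) method, so that the only real work is reduced to a one–variable elementary inequality at the end.

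First I would fix a parameter $t>0$ (to be chosen later) and apply Markov's inequality to the nonnegative random variable $e^{tX}$:
\[
\Pr[X\ge(1+\delta)\mu]=\Pr\!\left[e^{tX}\ge e^{t(1+\delta)\mu}\right]\le e^{-t(1+\delta)\mu}\,\mathbb{E}\!\left[e^{tX}\right].
\]
Since the $X_{i}$ are independent, $\mathbb{E}[e^{tX}]=\prod_{i=1}^{n}\mathbb{E}[e^{tX_{i}}]$, and for a single indicator of mean $p$ we have $\mathbb{E}[e^{tX_{i}}]=1-p+pe^{t}=1+p(e^{t}-1)\le e^{p(e^{t}-1)}$ using the elementary bound $1+x\le e^{x}$. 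Multiplying over $i$ yields $\mathbb{E}[e^{tX}]\le e^{np(e^{t}-1)}=e^{\mu(e^{t}-1)}$, hence
\[
\Pr[X\ge(1+\delta)\mu]\le\exp\!\big(\mu(e^{t}-1)-t(1+\delta)\mu\big)=\exp\!\big(\mu\,(e^{t}-1-t(1+\delta))\big).
\]
Next I would optimize the free parameter. The exponent $e^{t}-1-t(1+\delta)$ is convex in $t$ and its derivative $e^{t}-(1+\delta)$ vanishes at $t=\ln(1+\delta)$, a legitimate positive choice because $\delta>0$ (the case $\delta=0$ is trivial). Substituting collapses the bound to the familiar form
\[
\Pr[X\ge(1+\delta)\mu]\le\left(\frac{e^{\delta}}{(1+\delta)^{1+\delta}}\right)^{\!\mu}=\exp\!\big(-\mu\big((1+\delta)\ln(1+\delta)-\delta\big)\big).
\]

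It then remains to show that the exponent dominates $\mu\delta^{2}/4$ over the stated range, i.e.\ to verify the purely analytic inequality $\phi(\delta):=(1+\delta)\ln(1+\delta)-\delta-\tfrac{\delta^{2}}{4}\ge0$ for $0\le\delta\le 2e-1$. This is the one genuinely computational step, and the part I expect to require the most care in a fully rigorous write-up. I would handle it by noting $\phi(0)=0$ and $\phi'(\delta)=\ln(1+\delta)-\delta/2$ with $\phi'(0)=0$, and then studying $\phi''(\delta)=\frac{1}{1+\delta}-\frac12$, which is positive for $\delta<1$ and negative afterwards; thus $\phi'$ first increases then decreases, so $\phi$ rises on an initial stretch and the surplus it builds up is compared against the (at most) later decrease, with the clean endpoint case $1+\delta=2e$ making the logarithm explicit. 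The restriction $\delta\le 2e-1$ enters exactly here, as the range on which the simplified quadratic exponent $\delta^{2}/4$ is still a valid lower bound for $(1+\delta)\ln(1+\delta)-\delta$. Combining this elementary inequality with the displayed exponential bound gives $\Pr[X\ge(1+\delta)\mu]\le e^{-\mu\delta^{2}/4}$, as claimed.
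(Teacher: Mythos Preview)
The paper does not prove \Cref{fact:Chernoff}; it is quoted as a standard fact and then applied once (with $\delta=\tfrac{1}{2p}-1=\tfrac32$). So there is no ``paper's proof'' to compare against, and your exponential–moment derivation is the natural route.

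Your argument is correct through the optimized bound
\[
\Pr[X\ge(1+\delta)\mu]\le\exp\!\big(-\mu\big((1+\delta)\ln(1+\delta)-\delta\big)\big),
\]
but the final reduction has a genuine gap. The inequality you need,
\[
\phi(\delta)=(1+\delta)\ln(1+\delta)-\delta-\tfrac{\delta^{2}}{4}\ge 0,
\]
is \emph{false} at the right endpoint $\delta=2e-1$: there $1+\delta=2e$, so
\[
\phi(2e-1)=2e\ln(2e)-(2e-1)-\tfrac{(2e-1)^{2}}{4}
=2e\ln 2+1-\Big(e^{2}-e+\tfrac14\Big)\approx -0.153<0.
\]
Indeed $\phi$ already becomes negative a bit past $\delta\approx 4.1$, whereas $2e-1\approx 4.44$. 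Your sketch (``the surplus it builds up is compared against the later decrease'') would therefore not close, and the ``clean endpoint'' you anticipate does not come out with the right sign. Hence the MGF route, with this particular simplification, cannot yield the bound on the full interval $[0,2e-1]$ as stated.

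For the paper's purposes this is harmless: the only value ever used is $\delta=\tfrac32$, where $\phi(\tfrac32)=\tfrac52\ln\tfrac52-\tfrac32-\tfrac{9}{16}>0$ and your proof goes through verbatim. If you want a version valid on all of $(0,2e-1]$ you must either weaken the constant (e.g.\ prove $(1+\delta)\ln(1+\delta)-\delta\ge \delta^{2}/(2+\delta)$, which holds for all $\delta>0$) or restrict the range (the $\delta^{2}/4$ form is safe, say, for $\delta\le 4$).
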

\begin{align*}
\Pr\left[\tilde{Y}(Q)\ge2m+1\right]\le\Pr\left[\chi\ge m\right] & =\Pr\left[\chi\ge\left(1+(\frac{1}{2p}-1)\right)\mathbb{E}[\chi]\right]\\
& \le\exp\left(-2m\cdot p\cdot(\frac{1}{2p}-1)^{2}/4\right)=\exp\left(-\frac{9}{40}m\right)\le\exp\left(-\frac{1}{5}m\right)~.
\end{align*}
\sloppy We conclude that the distribution of $\tilde{Y}(Q)$ is dominated
by $1+\Exp\left(10\right)$ (as for $W\sim\Exp(10)$, $\Pr\left[1+W\ge2m+1\right]=\exp\left(-\frac m5\right)$). Let $(\{W(Q)\}_{Q\in\mathcal{Q}})$ be i.i.d. random variables distributed according to $\Exp(10)$,
 since all the boxes are independent and $f$ is linear and monotone
coordinate-wise, we conclude:
\begin{claim}
	\label{clm:ExpDominate}For every 	$\alpha\in\mathbb{R}_{+}$,\\ \phantom{.\hspace{70pt}}$\Pr\left[f\left(\left\{ \tilde{Y}(Q)\right\} _{Q\in\mathcal{Q}}\right)\ge\alpha\right]\le\Pr\left[f\left(\left\{ 1\right\} _{Q\in\mathcal{Q}}\right)+f\left(\left\{ W(Q)\right\} _{Q\in\mathcal{Q}}\right)\ge\alpha\right]$.
\end{claim}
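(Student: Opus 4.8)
The plan is to derive \Cref{clm:ExpDominate} from the single-box stochastic domination $\tilde{Y}(Q)\preceq 1+W(Q)$ that the preceding Chernoff computation already establishes, combined with the independence of the boxes and the fact that $f$ is linear and coordinate-wise monotone.

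First I would record the one-coordinate statement cleanly: for every $Q\in\mathcal{Q}$ and every real $x$, $\Pr[\tilde{Y}(Q)\ge x]\le\Pr[1+W(Q)\ge x]$ with $W(Q)\sim\Exp(10)$. Since $\tilde{Y}(Q)$ is supported on the odd integers $\ge 1$, it suffices to check this for $x=2m+1$: for $m=0$ both sides equal $1$, and for $m\ge1$ it is exactly the bound $\Pr[\tilde{Y}(Q)\ge2m+1]\le\exp(-m/5)=\Pr[1+W(Q)\ge2m+1]$ derived just above via \Cref{fact:Chernoff}. Thus $\tilde{Y}(Q)$ is stochastically dominated by $1+W(Q)$.

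Next I would build a global coupling. By the standard equivalence between stochastic domination and pointwise-dominating couplings, for each $Q$ there is a coupling of a copy of $\tilde{Y}(Q)$ and a copy of $W(Q)$ on which $\tilde{Y}(Q)\le 1+W(Q)$ almost surely. Because the boxes $\{B_Q\}_{Q\in\mathcal{Q}}$ are mutually independent and the $\{W(Q)\}_{Q\in\mathcal{Q}}$ are i.i.d., these per-coordinate couplings can be taken independently across $Q$ and glued together, producing a single probability space carrying a family distributed as $\{\tilde{Y}(Q)\}_{Q\in\mathcal{Q}}$ and a family distributed as $\{W(Q)\}_{Q\in\mathcal{Q}}$ with the event $\tilde{Y}(Q)\le 1+W(Q)$ holding simultaneously for all $Q$. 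The only subtlety in the whole argument is this passage from marginal domination to joint domination, and it is resolved entirely by the independence of the boxes and of the $W(Q)$'s; there is no real obstacle beyond this bookkeeping.

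Finally I would invoke the structure of $f$. On the coupled space, coordinate-wise monotonicity gives $f(\{\tilde{Y}(Q)\}_{Q\in\mathcal{Q}})\le f(\{1+W(Q)\}_{Q\in\mathcal{Q}})$, and linearity of $f$ gives $f(\{1+W(Q)\}_{Q\in\mathcal{Q}})=f(\{1\}_{Q\in\mathcal{Q}})+f(\{W(Q)\}_{Q\in\mathcal{Q}})$. Hence $\{f(\{\tilde{Y}(Q)\}_{Q\in\mathcal{Q}})\ge\alpha\}\subseteq\{f(\{1\}_{Q\in\mathcal{Q}})+f(\{W(Q)\}_{Q\in\mathcal{Q}})\ge\alpha\}$, and taking probabilities — which depend only on the already-matched marginal laws — yields the claimed inequality for every $\alpha\in\mathbb{R}_{+}$.
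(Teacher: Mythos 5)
Your proof is correct. The single-coordinate domination $\tilde{Y}(Q)\preceq 1+W(Q)$ is exactly the content of the Chernoff computation preceding the claim, and your observation that it suffices to verify the tail inequality at the odd integers is the right way to make it precise: $\Pr[\tilde{Y}(Q)\ge x]$ is constant on each interval $(2m-1,2m+1]$ while $\Pr[1+W(Q)\ge x]$ only decreases there, so the comparison at $x=2m+1$ is the worst case. Where you genuinely differ from the paper is in how the marginal domination is lifted to the functional $f$. You invoke monotone (Strassen-type) couplings in each coordinate and glue them, by independence, into a single probability space on which $\tilde{Y}(Q)\le 1+W(Q)$ holds simultaneously for all $Q$, then conclude by event inclusion together with the coordinate-wise monotonicity and linearity of $f$. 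The paper instead swaps the coordinates one at a time inside the probability: it integrates over the value of $f$ restricted to the remaining coordinates (this is where independence enters) and applies the one-dimensional tail bound to the coordinate being replaced, so that after $|\mathcal{Q}|$ steps every $\tilde{Y}(Q)$ has become $1+W(Q)$. The two mechanisms are interchangeable here and rest on the same three ingredients (the single-box tail bound, independence of the boxes, and monotonicity/linearity of $f$ with nonnegative coefficients $L^{+}(Q)$); your coupling version is arguably cleaner and makes the role of independence more transparent, while the paper's iterated-integration version avoids appealing to the existence of pointwise-dominating couplings.
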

\begin{proof}
	Set $\varphi=|\mathcal{Q}|$. Let $Q^1,Q^2,\dots,Q^{\varphi}$ be some arbitrarily fixed ordering of the intervals.
	For $s\in [\varphi]$, set $f_{\setminus\{s\}}(x_{1},\dots,x_{s-1},x_{s+1},\dots,x_{\varphi})=\sum_{i\in[\varphi]\setminus\left\{ s\right\} }x_{i}\cdot L^{+}(Q^{i})$.
	When integrating over the appropriate measure space, it holds that
	\begin{align*}
	\Pr\left[f\left(\tilde{Y}(Q^{1}),\dots,\tilde{Y}(Q^{\varphi})\right)\ge\alpha\right] & =\int_{\beta}\Pr\left[f_{\setminus\{1\}}\left(\tilde{Y}(Q^{2}),\dots,\tilde{Y}(Q^{\varphi})\right)=\beta\right]\\
	& \qquad\qquad\qquad\qquad\cdot\Pr\left[\tilde{Y}(Q^{1})\cdot L^{+}(Q^{1})\ge\alpha-\beta\right]d\beta\\
	& \le\int_{\beta}\Pr\left[f_{\setminus\{1\}}\left(\tilde{Y}(Q^{2}),\dots,\tilde{Y}(Q^{\varphi})\right)=\beta\right]\\
	& \qquad\qquad\qquad\qquad\cdot\Pr\left[\left(1+W(Q^{1})\right)\cdot L^{+}(Q^{1})\ge\alpha-\beta\right]d\beta\\
	& =\Pr\left[f\left(1+W(Q^{1}),\tilde{Y}(Q^{2}),\dots,\tilde{Y}(Q^{\varphi})\right)\ge\alpha\right]\\
	& \le\Pr\left[f\left(1+W(Q^{1}),1+W(Q^{2}),\tilde{Y}(Q^{3}),\dots,\tilde{Y}(Q^{\varphi})\right)\ge\alpha\right]\\
	& \le\cdots\le\Pr\left[f\left(1+W(Q^{1}),\dots,1+W(Q^{\varphi})\right)\ge\alpha\right]\\
	& =\Pr\left[f\left(1,\dots,1\right)+f\left(W(Q^{1}),\dots,W(Q^{\varphi})\right)\ge\alpha\right]~.
	\end{align*}
\end{proof}

\subsubsection{Concentration}
Set $\Delta=d_{G}(t,t')$. It holds that 
\begin{equation*}
\Delta\le\sum_{Q\in\mathcal{Q}}L^{+}(Q)\le2\Delta~,
\end{equation*}
as every edge in $P_{t,t'}$ is counted at least once, and at most
twice in this sum. In particular $f\left(\left\{ 1\right\} _{Q\in\mathcal{Q}}\right)\le2\Delta$.
Recall that by our modification step, every edge in $P_{t,t'}$ is of weight at most $c_w\cdot \Delta$.
In particular, for every $Q\in\mathcal{Q}$, $L^{+}(\mathcal{Q})\le L(\mathcal{Q})+2c_{w}\cdot\Delta$.
For every vertex $v$ on $P_{t,t'}$, it holds that $D(v)\le\min\left\{ d_{G}(v,t),d_{G}(v,t')\right\} \le\frac{\Delta}{2}$.
Therefore for every $Q\in\mathcal{Q}$, 
\[
L^{+}(\mathcal{Q})\le L(\mathcal{Q})+2c_{w}\cdot\Delta\overset{\eqref{eq:IntervalLenght}}{\le}\cint\delta\cdot D(Q)+2c_{w}\cdot\Delta\le\left(\frac{\cint\delta}{2}+2c_{w}\right)\cdot\Delta=\cint\delta\cdot\Delta~.
\]

Let $\tilde{W}(Q)\sim L^{+}(Q)\cdot\Exp\left(10\right)$.
In particular, $\tilde{W}(Q)\sim\Exp\left(10\cdot L^{+}(Q)\right)$.
Set $\tilde{W}=\sum_{Q\in\mathcal{Q}}\tilde{W}(Q)$. Then $f\left(\left\{ W(Q)\right\} _{Q\in\mathcal{Q}}\right)$
is distributed exactly as $\tilde{W}$. The maximal mean among the $\tilde{W}(Q)$'s
is $\lambda_{M}=\max_{Q\in\mathcal{Q}}10\cdot L^{+}(Q)\le10\cdot\cint\delta\cdot\Delta$.
The mean of $\tilde{W}$ is $\mu=\sum_{Q\in\mathcal{Q}}10\cdot L^{+}(Q)\le20\Delta$.
Set $c_{\text{con}}=\frac12$  
(con for concentration). Using \Cref{clm:CoinsDominate}, \Cref{clm:ExpDominate} and \Cref{lem:ExpConcentration},
we conclude 

\begin{align*}
\Pr\left[f\left(\left\{ \tilde{X}(Q)\right\} _{Q\in\mathcal{Q}}\right)\ge(c_{\text{con}}+42)\Delta\right] & \le\Pr\left[f\left(\left\{ \tilde{Y}(Q)\right\} _{Q\in\mathcal{Q}}\right)\ge(c_{\text{con}}+42)\Delta\right]\\
& \le\Pr\left[f\left(\left\{ W(Q)\right\} _{Q\in\mathcal{Q}}\right)\ge(c_{\text{con}}+42)\Delta-f\left(\left\{ 1\right\} _{Q\in\mathcal{Q}}\right)\right]\\
& \le\Pr\left[\tilde{W}\ge(c_{\text{con}}+40)\Delta\right]\\
& \le\exp\left(-\frac{1}{2\lambda_{M}}\left(\left(c_{\text{con}}+40\right)\Delta-2\mu\right)\right)\\
& \le\exp\left(-\frac{1}{2}\cdot\frac{1}{10\cint\delta\Delta}\cdot c_{\text{con}}\Delta\right)=\exp\left(-\frac{c_{\text{con}}}{20\cdot\cint\delta}\right)=k^{-3}~.
\end{align*}
Note that $c_{\text{con}}\le1$, thus \Cref{lem:fbound} follows.

\subsection{Bounding the Distortion}\label{sec:DistBound}
Denote by $\EfBig$ the event that for some pair of terminals $t,t'$, 
$f\left(\{\tilde{X}(Q)\}_{Q\in\mathcal{Q}}\right)\ge43 \cdot d_G(t,t')$. \footnote{We abuse notation here and use the same $\{\tilde{X}(Q)\}_{Q\in\mathcal{Q}}$ for all terminals.}
By \Cref{lem:fbound} and the union bound, $\Pr\left[\EfBig\right]\le {k \choose 2}\cdot  k^{-3}< \frac{1}{2k}$.

Let $\EB$ be the event that for some $j$, $R_j> c_d$, where $c_d=e^2$.
Note that if $\EB$ does not hold, then every vertex $v$ joins to a cluster $V_j$ such that $d_G(v,t_j)\le c_d\cdot D(v)$.
\begin{claim}\label{clm:boundEB}
	$\Pr[\EB]\le \frac{1}{2k}$.
\end{claim}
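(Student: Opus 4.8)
The plan is to bound the probability that a single $R_j$ exceeds $c_d = e^2$, and then take a union bound over the $k$ terminals. Recall that $R_j = (1+\delta)^{g_j}$ where $g_j \sim \Geo(p)$ with $p = \frac15$ and $\delta = \frac{1}{20\ln k}$. First I would observe that $R_j > c_d = e^2$ is equivalent to $g_j > \log_{1+\delta}(e^2) = \frac{2}{\ln(1+\delta)}$. Using the elementary inequality $\ln(1+\delta) \ge \delta - \delta^2/2 \ge \delta/2$ (valid for small $\delta$, certainly for $\delta \le 1$), this threshold is at most $\frac{4}{\delta} = 80\ln k$. Hence $\Pr[R_j > c_d] \le \Pr[g_j > 80\ln k] = (1-p)^{\lfloor 80\ln k\rfloor} = \left(\tfrac45\right)^{\lfloor 80\ln k\rfloor}$.

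Next I would convert this into the desired bound. Since $\left(\tfrac45\right)^{80\ln k} = \exp\bigl(80\ln k \cdot \ln\tfrac45\bigr)$ and $\ln\tfrac45 < -\frac15$, we get $\Pr[R_j > c_d] \le \exp(-16\ln k) = k^{-16}$, which is comfortably smaller than $\frac{1}{2k^2}$ for all $k \ge 2$. By the union bound over $j = 1, \dots, k$, $\Pr[\EB] \le k \cdot k^{-16} \le \frac{1}{2k}$, as claimed. (One should take a little care with the floor function and small values of $k$, but the slack between $k^{-16}$ and $\frac{1}{2k^2}$ is more than enough to absorb it.)

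The only mild subtlety — and the place I would be most careful — is the chain of elementary estimates: getting a clean lower bound on $\ln(1+\delta)$ so that the geometric threshold $\frac{2}{\ln(1+\delta)}$ is genuinely at most a constant times $\frac1\delta = 20\ln k$, and then checking that $80\ln k$ (or its floor) is large enough that the geometric tail $(1-p)^{80\ln k}$ beats $\frac{1}{2k^2}$. None of this is deep; it is just a matter of being honest about constants. Since the exponent $16$ in $k^{-16}$ dwarfs the $2$ we need, there is no danger of the bound failing, and the claim follows.
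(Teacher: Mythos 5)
Your overall strategy --- bound $\Pr[R_j>c_d]$ via the geometric tail and take a union bound over the $k$ terminals --- is exactly the paper's, but there is a genuine direction error in the middle step. You write that the threshold $T=\log_{1+\delta}(e^2)=\frac{2}{\ln(1+\delta)}$ is \emph{at most} $\frac{4}{\delta}=80\ln k$ (using $\ln(1+\delta)\ge\delta/2$), and then conclude $\Pr[g_j>T]\le\Pr[g_j>80\ln k]$. That inference is backwards: since $\Pr[g_j>x]$ is decreasing in $x$, an upper bound $T\le 80\ln k$ gives $\Pr[g_j>T]\ge\Pr[g_j>80\ln k]$, which is useless. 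To upper-bound the tail you need a \emph{lower} bound on the threshold, which comes from the opposite elementary inequality $\ln(1+\delta)\le\delta$: this yields $T\ge\frac{2}{\delta}=40\ln k$, hence
\[
\Pr[R_j>c_d]\;=\;\Pr\left[g_j\ge T\right]\;\le\;(1-p)^{T-1}\;\le\;\left(\tfrac45\right)^{40\ln k-1}\;=\;\tfrac54\,k^{-40\ln\frac54}\;\le\;k^{-3},
\]
since $40\ln\frac54\approx 8.9$. This is precisely the computation in the paper (which settles for the bound $k^{-3}$), and the union bound then gives $k\cdot k^{-3}=k^{-2}\le\frac{1}{2k}$.

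The error is entirely repairable --- swapping which side of $\ln(1+\delta)$ you bound fixes it, and the resulting exponent $8.9$ still dwarfs the $2$ you need --- but as written the proof does not go through, and notably the place you flagged as needing care (``the geometric threshold is genuinely at most a constant times $\frac1\delta$'') is exactly where the logic is inverted: what you need is that the threshold is genuinely \emph{at least} a constant times $\frac1\delta$.
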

\begin{proof}
	Let $\EB_j$ be the event that  $R_j>c_d$.	It holds that
	$$\Pr[\EB_j]=\Pr[g_j\ge\log_{1+\delta}c_d]\le (1-p)^{\log_{1+\delta}c_d-1}
	\le (1-p)^{\frac2\delta-1}
	\le\frac{1}{k^3}~,$$
	where the second inequality holds as $\log_{1+\delta}c_{d}=\frac{\ln c_{d}}{\ln1+\delta}\ge\frac{2}{\delta}$.
	By the union bound, $\Pr[\EB]\le\frac{1}{k^2}\le \frac{1}{2k}$ as required.
\end{proof}

\begin{lemma}\label{lem:distortion}
	Assuming $\overline{\EB}$ and $\overline{\EfBig}$, for every pair of terminals $t,t'$, $d_{M}(t,t')\le O(\log k)\cdot d_{G}(t,t')$.
\end{lemma}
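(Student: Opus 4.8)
The plan is to make precise the intuitive bound sketched in the technical overview, namely $d_M(t,t') \le d_G(t,t') + 2\sum_i d_G(t_{\ell_i}, v^{\ell_i})$, and then to absorb the deviation sum into the charge-based cost function $f(\{\tilde X(Q)\}_{Q\in\mathcal{Q}})$. First I would fix a pair of terminals $t,t'$ with $P_{t,t'}\cap K = \{t,t'\}$ and consider the final detour structure: since every vertex of $P_{t,t'}$ is clustered (by \Cref{lem:AlgRetPar}), the path is covered by detours $\mathcal{D}_{\ell_1},\dots,\mathcal{D}_{\ell_{k'}}$ in left-to-right order along $P_{t,t'}$, where $\mathcal{D}_{\ell_i}$ is the (surviving) detour charged to some interval and belonging to terminal $t_{\ell_i}$. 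Consecutive detours share a boundary edge of $P_{t,t'}$ (or abut), so in the minor $M$ there is an edge between $t_{\ell_i}$ and $t_{\ell_{i+1}}$; hence $d_M(t,t') \le \sum_i d_G(t_{\ell_i}, t_{\ell_{i+1}})$. Using the triangle inequality through the vertices $v^{\ell_i}$ (the first vertex of $P_{t,t'}$ covered by $t_{\ell_i}$, which lies in $\mathcal{D}_{\ell_i}$), and telescoping the $d_G(v^{\ell_i}, v^{\ell_{i+1}})$ terms along the shortest path, this gives $d_M(t,t') \le d_G(t,t') + 2\sum_{i=1}^{k'} d_G(t_{\ell_i}, v^{\ell_i})$.

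Next I would bound each deviation $d_G(t_{\ell_i}, v^{\ell_i})$. Since $v^{\ell_i}$ joins $V_{\ell_i}$ and we are conditioning on $\overline{\EB}$, we have $d_G(v^{\ell_i}, t_{\ell_i}) \le c_d \cdot D(v^{\ell_i})$. Let $Q_i$ be the interval containing $v^{\ell_i}$, i.e. the interval charged for $\mathcal{D}_{\ell_i}$; by \eqref{eq:IntervalLenght} and the triangle inequality (as in \eqref{eq:BoundLQj}), $D(v^{\ell_i})$ is comparable to $D(Q_i)$, and $D(Q_i) = \frac{L(Q_i)}{\cint\delta} \le \frac{L^+(Q_i)}{\cint\delta}$ — wait, the cleaner route is $L^+(Q_i) \ge \cint\delta \cdot D(Q_i)$, so $D(Q_i) \le \frac{1}{\cint\delta} L^+(Q_i)$, whence $d_G(t_{\ell_i}, v^{\ell_i}) = O(1)\cdot D(v^{\ell_i}) = O(1/\delta)\cdot L^+(Q_i) = O(\log k)\cdot L^+(Q_i)$. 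The point is that each surviving detour $\mathcal{D}_{\ell_i}$ corresponds to exactly one charged interval $Q_i$, and distinct surviving detours are charged to distinct intervals (a detour that would be charged to an already-charged interval is either the new detour $\mathcal{D}_j$ itself or gets erased when contained in a later detour — I need to verify from the charging rules that at the end of the algorithm each interval carries at most one surviving charge among the detours decomposing $P_{t,t'}$, or more simply that $\sum_i L^+(Q_i) \le \sum_{Q: \tilde X(Q)\ge 1} \tilde X(Q) L^+(Q) = f(\{\tilde X(Q)\}_Q)$). Summing, $2\sum_i d_G(t_{\ell_i}, v^{\ell_i}) \le O(\log k)\cdot f(\{\tilde X(Q)\}_{Q\in\mathcal{Q}})$, and conditioning on $\overline{\EfBig}$ this is $O(\log k)\cdot d_G(t,t')$, so $d_M(t,t') \le d_G(t,t') + O(\log k)\cdot d_G(t,t') = O(\log k)\cdot d_G(t,t')$.

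\textbf{Main obstacle.} The delicate point is the bookkeeping that relates the final detour decomposition of $P_{t,t'}$ to the surviving charges $\tilde X(Q)$. The detours that actually tile $P_{t,t'}$ at the end are precisely the maximal (non-erased) ones, and I must check that (i) each such detour is charged to the interval containing its own $v^j$, (ii) the charge survives to the end (it is erased only if strictly contained in a later detour, which cannot happen for a maximal one), and (iii) no two of these share a charged interval — or, to sidestep (iii), simply use that $\sum_i L^+(Q_i)$ is at most $f(\{\tilde X(Q)\}_{Q\in\mathcal{Q}})$ because each $Q_i$ contributes $L^+(Q_i)$ at least once per detour charged to it. I would also need to handle the non-continuous-detour subtlety glossed over in the overview: a detour $\mathcal{D}_j = \{a_j,\dots,b_j\}$ may swallow vertices between $a_j$ and $b_j$ that $V_j$ did not actually cover, but since those become inactive and the minor edge $\{t_{\ell_i}, t_{\ell_{i+1}}\}$ still exists (the endpoints $a_{\ell_i}, b_{\ell_i}$ are in $V_{\ell_i}$ and consecutive detours meet), the triangle-inequality bound still goes through with $v^{\ell_i} \in \{a_{\ell_i},\dots,b_{\ell_i}\}$. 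Everything else is routine arithmetic with the constants $\cint, \delta, c_d$.
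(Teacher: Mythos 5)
Your proposal follows essentially the same route as the paper's proof: decompose $P_{t,t'}$ into the surviving consecutive detours, use the minor edges between consecutive detour-owners and the triangle inequality through the $v^{\ell_i}$ to get $d_M(t,t')\le d_G(t,t')+2\sum_i d_G(t_{\ell_i},v^{\ell_i})$, bound each deviation by $c_d\cdot D(v^{\ell_i})=O(1/(\cint\delta))\cdot L^{+}(Q_{\ell_i})$ via $\overline{\EB}$ and \eqref{eq:IntervalLenght}, and absorb $\sum_i L^{+}(Q_{\ell_i})$ into $f(\{\tilde X(Q)\}_{Q\in\mathcal{Q}})$ using $\overline{\EfBig}$. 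The bookkeeping worry you flag resolves exactly as you suggest (each surviving detour contributes one unit to $\tilde X(Q_{\ell_i})$, so $\sum_i L^{+}(Q_{\ell_i})=\sum_Q \tilde X(Q)L^{+}(Q)$; distinctness of the charged intervals is not needed), which is how the paper handles it.
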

\begin{proof}
Fix some $t,t'$. 
By the end of \Cref{alg:mainSPR}, all the vertices in $P_{t,t'}=\left\{ t=v_{0},\dots,v_{\gamma}=t'\right\} $
are divided into consecutive detours \footnote{Note that we consider only detours who inflict a charge by the end of the algorithm. Therefore the detours are disjoint and every vertex in $P_{t,t'}$ belongs to some detour.} $\mathcal{D}_{\ell_1},\dots,\mathcal{D}_{\ell_{k'}}$. The detour
$\mathcal{D}_{\ell_j}$ was constructed at round $\ell_{j}$ by the terminal $t_{\ell_j}$. 
The detour $\mathcal{D}_{\ell_j}$ was charged upon the interval $Q_{\ell_j}$, which contains the vertex $v^{\ell_j}$.
The leftmost vertex in $\mathcal{D}_{\ell_{j}}$ is called
$a_{\ell_{j}}$, while the rightmost vertex is called $b_{\ell_{j}}$. In particular, for every $1\le j\le k'-1$, there is an edge in $G$ between $b_{\ell_{j}}$ and $a_{\ell_{j+1}}$, and therefore there is an edge between $t_{\ell_{j}}$ to $t_{\ell_{j+1}}$ in the terminal-centered minor $M$. 
As $t=v_0$ joins the cluster of itself, necessarily $t_{\ell_1}=t$. Similarly $t_{\ell_{k'}}=t'$.
See \Cref{fig:distortion} for an illustration.
Using the triangle inequality, we conclude,
\begin{align*}
d_{M}(t,t')\le\sum_{j=1}^{k'-1}d_{G}(t_{\ell_{j}},t_{\ell_{j+1}}) & \le\sum_{j=1}^{k'-1}\left[d_{G}(t_{\ell_{j}},v^{\ell_{j}})+d_{G}(v^{\ell_{j}},v^{\ell_{j+1}})+d_{G}(v^{\ell_{j+1}},t_{\ell_{j+1}})\right]\\
& \le\sum_{j=1}^{k'-1}d_{G}(v^{\ell_{j}},v^{\ell_{j+1}})+2\sum_{j=1}^{k'}d_{G}(t_{\ell_{j}},v^{\ell_{j}})\\
& \le d_{G}(t,t')+2\sum_{j=1}^{k'}c_{d}\cdot D(v^{\ell_{j}})
\end{align*}
where the last inequality follows by our assumption $\overline{\EB}$.
By the definition of $D(Q_{\ell_j})$, inequality \eqref{eq:IntervalLenght} and triangle inequality,
$D(v^{\ell_{j}})\le D(Q_{\ell_{j}})+L(Q_{\ell_{j}})\le\left(\frac{1}{\cint\delta}+1\right)L^{+}(Q_{\ell_{j}})\le\frac{2}{\cint\delta}\cdot L^{+}(Q_{\ell_{j}})$.
Using the assumption $\overline{\EfBig}$, we conclude,

\begin{eqnarray}
	\hspace{100pt}d_{M}(t,t') & \le& d_{G}(t,t')+2c_{d}\sum_{i=1}^{k'}\frac{2}{\cint\delta}\cdot L^{+}(Q_{\ell_{i}})\label{eq:distBound2}\\
	& =&d_{G}(t,t')+\frac{4c_{d}}{\cint\delta}\sum_{Q\in\mathcal{Q}}\tilde{X}(Q)\cdot L^{+}(Q)\nonumber\\
	& =&d_{G}(t,t')+\frac{4c_{d}}{\cint\delta}\cdot f\left(\{\tilde{X}(Q)\}_{Q\in\mathcal{Q}}\right)=O\left(\ln k\right)\cdot d_{G}(t,t')~.\nonumber
\end{eqnarray}
\begin{figure}[H]
	\centering{\includegraphics[scale=0.85]{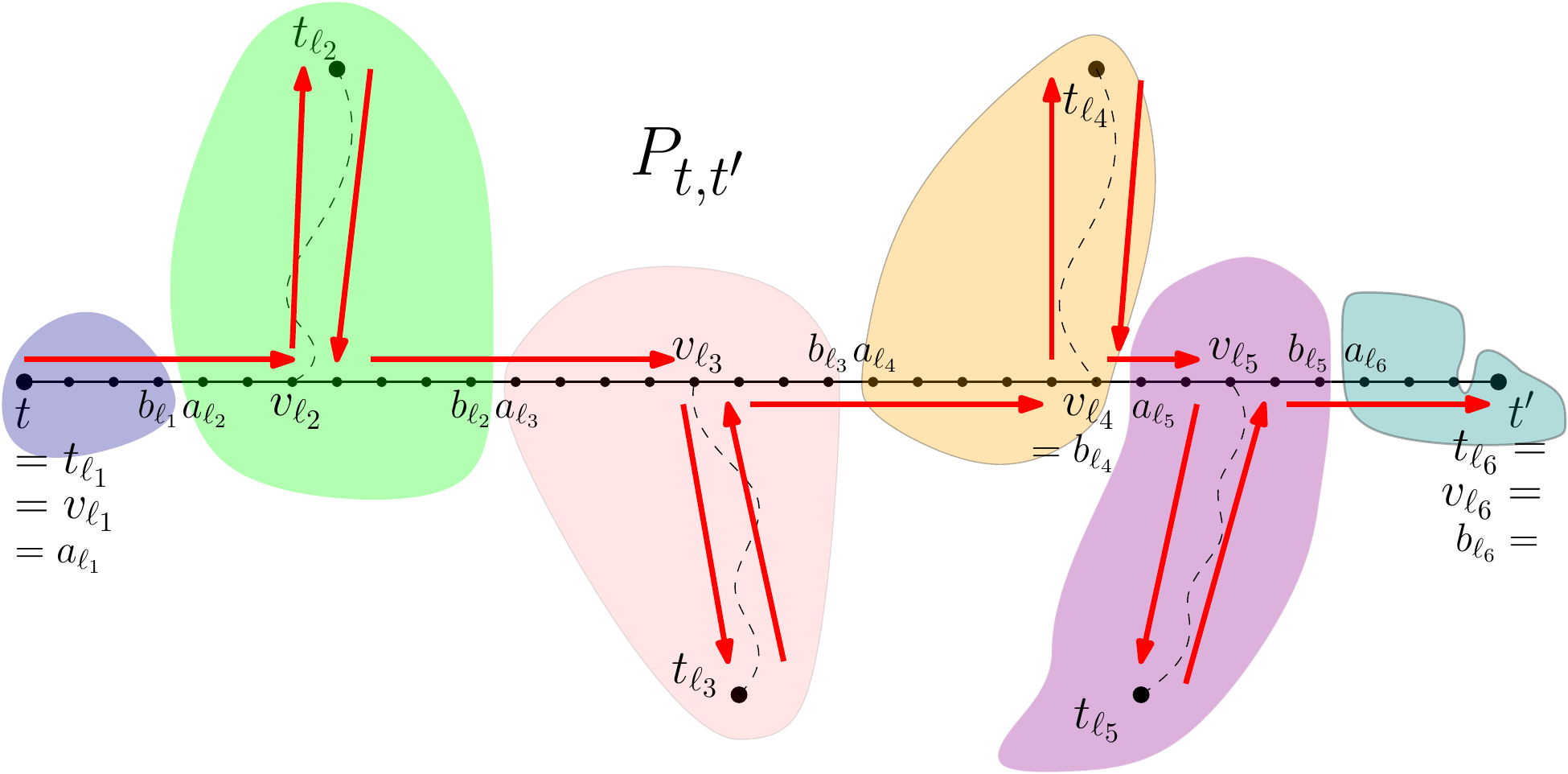}} 
	\caption{\label{fig:distortion}\small
		\it 
		The vertices $P_{t,t'}=v_{0}\dots v_{\gamma}$ are
		divided into consecutive detours $\mathcal{D}_{\ell_1},\dots,\mathcal{D}_{\ell_6}$. $t_{\ell_1},t_{\ell_2},t_{\ell_3},t_{\ell_4},t_{\ell_5},t_{\ell_6}$ is a path in the terminal-centered minor $M$ of $G$ (induced by $V_1,\dots,V_k$).
		The weight of the edge $\{t_{\ell_j},t_{\ell_{j+1}}\}$ in $M$ is $d_G(t_{\ell_j},t_{\ell_{j+1}})$, which is bounded by $d_{G}(t_{\ell_{j}},v_{\ell_{j}})+d_{G}(v_{\ell_{j}},v_{\ell_{j+1}})+d_{G}(v_{\ell_{j+1}},t_{\ell_{j+1}})$.
	}
\end{figure}
\end{proof}

As $\Pr\left[\overline{\EB}\wedge\overline{\EfBig}\right]\ge1-\left(\Pr\left[\EB\right]+\Pr\left[\EfBig\right]\right)\ge1-\frac{1}{2k}-\frac{1}{2k}=1-\frac{1}{k}$, \Cref{thm:mainSPR} follows.

\section{\texttt{Fast-Noisy-Voronoi} Algorithm}\label{sec:fastNV}
In this section, we describe a slightly modified version of the \texttt{Noisy-Voronoi} algorithm. Then we will show how to implement the modified algorithm in $O(m\log n)$ time.

Given two terminals $t_i,t_j$, and two clusters $V_i,V_j\subseteq V$ s.t. $t_i$ (resp $t_j$) is the unique terminal in $V_i$ (resp. $V_j$), $d_{G,V_i+V_j}(t_i,t_j)$ denotes the length of the shortest path between $t_i$ and $t_j$ in $G[V_i\cup V_j]$ that uses exactly one crossing edge between $V_i$ to $V_j$. See \Cref{fig:SingleCross} for an illustration.
\begin{figure}[h]
	\centering{\includegraphics[scale=0.65]{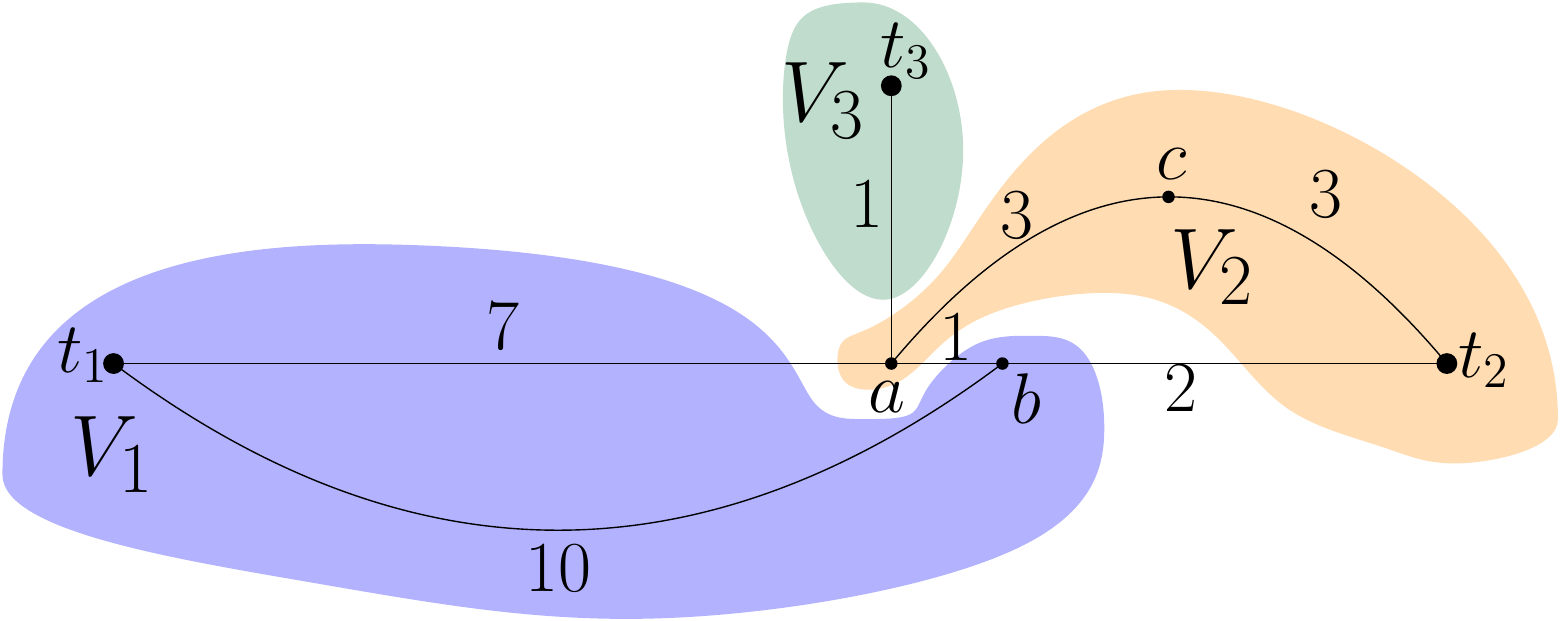}} 
	\caption{\label{fig:SingleCross}\small \it 
		$t_1,t_2,t_3$ are terminals. The different color areas describes the terminal partition. The shortest path in $G$ from $t_1$ to $t_2$ is $t_1,a,b,t_2$ and has length $d_G(t_1,t_2)=10$.
		Note that all the vertices in this path are in $V_1\cup V_2$.
		Nevertheless, the shortest path from $t_1$ to $t_2$ that uses only one crossing edge from $t_1$ to $t_2$ is $\{t_1,b,t_2\}$ and has length  $d_{G,V_1+V_2}(t_1,t_2)=12$.
	}
\end{figure}

In order to allow fast implementation, and avoid costly shortest path computations, we will introduce several modifications:
\begin{itemize}
	\item In \Cref{alg:mainSPR}, \lineref{line:returnMinor}, we will modify the edge weights in the induced terminal-centered minor. The weight of the edge $\{t_i,t_j\}$ (if exists)  will be $d_{G,V_i+V_j}(t_i,t_j)$ instead of $d_{G}(t_i,t_j)$. 
	\item In \Cref{alg:CreateCluster},  \lineref{AlgLine:CreCluPickvN}, instead of extracting an arbitrary vertex $v$ from $N$, we will extract the closest vertex $v$ to $t_j$ in $N$ w.r.t. the shortest path metric induced by $V_j\cup\{v\}$ (i.e. $v\in N$ with minimal $d_{G[V_j\cup \{v\}]}(v,t_j)$, note that it is a different graph for each vertex).\\
	Similarly, in \lineref{line:desidion}, instead of checking whether $d_{G}(v,t_j)\le R_j\cdot D(v)$, we will check whether $d_{G[V_j\cup\{v\}]}(v,t_j)\le R_j\cdot D(v)$.
\end{itemize}
The pseudo-code of the modified algorithm appears in \Cref{alg:fastSPR} and \Cref{alg:FastCreateCluster}.
\begin{algorithm}[!ht]
	\caption{$M=\texttt{Fast-Noisy-Voronoi}(G=(V,E,w),K=\{t_1,\dots,t_k\})$}\label{alg:fastSPR}
	\begin{algorithmic}[1]
		\STATE Set $\delta = \frac{1}{20\ln k}$ and $p=\frac15$. 
		\STATE Set $V_\perp~\la~V\setminus  K$.
		\hfill\emph{//~$V_\perp$ is the currently unclustered vertices.}					
		\FOR {$j$ from $1$ to $k$}
		\STATE Choose independently at random $g_j$ distributed according to $\Geo(p)$. 
		\STATE Set $R_j\la (1+\delta)^{g_j}$.	
		\STATE Set $V_j\la \texttt{Fast-Create-Cluster}(G,V_\perp ,t_j,R_j)$.
		\STATE Remove all the vertices in $V_j$ from $V_\perp$.
		\ENDFOR
		\STATE Let $M$ be the minor of $G$ created by contracting all the internal edges in $V_1,\ldots,V_k$. The weight of the edge $\{t_i,t_j\}$ (if exists) is defined to be $d_{G,V_i+V_j}(t_i,t_j)$.\label{line:fastComputeM}
		\RETURN $M$.
	\end{algorithmic}	
\end{algorithm}
\begin{algorithm}[!ht]
	\caption{$V_j=\texttt{Fast-Create-Cluster}(G=(V,E,w),V_\perp ,t_j,R_j)$}\label{alg:FastCreateCluster}
	\begin{algorithmic}[1]
		\STATE Set $V_j\leftarrow \{t_j\}$.
		\STATE Set $U\la \emptyset$.
		\hfill\emph{//~$U$ is the set of vertices already denied from $V_j$.}					
		\STATE Set $N$ to be all the neighbors of $t_j$ in $V_\perp$.
		\WHILE {$N\ne \emptyset$}
		\STATE Let $v\in N$ be the vertex with minimal $d_{G[V_j\cup \{v\}]}(v,t_j)$. \label{AlgLine:FastCreCluPickvN}
		\STATE Remove $v$ from $N$. \label{line:exstractMin}
		\IF {$d_{G[V_j\cup\{v\}]}(v,t_j)\le R_j\cdot D(v)$} \label{line:Fastdesidion}
		\STATE Add $v$ to $V_j$.
		\STATE Add all the neighbors of $v$ in $V_\perp\setminus U$ to $N$.  \label{line:FastAddNeighbors}
		\ELSE
		\STATE Add $v$ to $U$.
		\ENDIF
		\ENDWHILE
		\RETURN $V_j$.
	\end{algorithmic}	
\end{algorithm}

\begin{theorem}\label{thm:fastSPR}
	With probability $1-\frac{1}{k}$, for the minor graph $M$ returned by \Cref{alg:fastSPR}, it holds that for every two terminals $t,t'$, $d_{M}(t,t')\le O\left(\log k\right)\cdot d_{G}(t,t')$.
	Moreover, executing \Cref{alg:fastSPR} takes 
	$O(m+\min\left\{ m,nk\right\}\cdot \log n)$ time.
\end{theorem}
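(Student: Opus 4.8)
\textbf{Proof proposal for \Cref{thm:fastSPR}.}

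The plan splits naturally into two independent parts: (a) the distortion guarantee, and (b) the running-time bound. For part (a), I would first argue that the modified algorithm still produces a terminal partition (the proof of \Cref{lem:AlgRetPar} goes through essentially verbatim, since the monotonicity argument only needs that a vertex joining a cluster propagates the ``joining'' condition to its path-neighbors; the stricter condition $d_{G[V_j\cup\{v\}]}(v,t_j)\le R_j\cdot D(v)$ only makes clusters smaller, not disconnected, and the inductive step along a shortest path to the closest terminal still works because that path stays inside the cluster). Next I would show that the entire distortion analysis of \Cref{sec:distortion} transfers. The key observations are: the quantity $d_{G[V_j\cup\{v\}]}(v,t_j)$ that governs cluster membership is at least $d_G(v,t_j)$, so every deviation bound of the form $d_G(t_{\ell_j},v^{\ell_j})\le R_j\cdot D(v^{\ell_j})\le c_d\cdot D(v^{\ell_j})$ used in \Cref{lem:distortion} still holds (it only needs the accepting inequality); and the ``$S_j$ joins $V_j$ with probability $\ge 1-p$'' argument of \Cref{clm:failProb} still works, because within a single slice the vertices are processed in Dijkstra order and the comparison in \eqref{eq:SmallerRj} is between in-cluster distances, which are dominated by the same $(1+\delta)$-factor estimate when all of $S_j$ is still unclustered. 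One must also re-examine \Cref{clm:SameMinor} (edge subdivision) in the new model, which should still hold since subdividing an edge inside a cluster does not change in-cluster shortest path distances. Finally, the minor's edge weights are now $d_{G,V_i+V_j}(t_i,t_j)\ge d_G(t_i,t_j)$, so the lower bound $d_M(t,t')\ge d_G(t,t')$ is preserved, and in the upper bound of \Cref{lem:distortion} the edge $\{t_{\ell_j},t_{\ell_{j+1}}\}$ now has weight $d_{G,V_{\ell_j}+V_{\ell_{j+1}}}(t_{\ell_j},t_{\ell_{j+1}})\le d_{G[V_{\ell_j}\cup\{b_{\ell_j}\}]}(t_{\ell_j},b_{\ell_j})+w(b_{\ell_j},a_{\ell_{j+1}})+d_{G[V_{\ell_{j+1}}\cup\{a_{\ell_{j+1}}\}]}(a_{\ell_{j+1}},t_{\ell_{j+1}})$, and each in-cluster term is at most $R_{\ell_j}\cdot D(b_{\ell_j})\le c_d\cdot D(\cdot)$ by the acceptance condition, so the same telescoping/charging bound applies with only the constants changed.

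For part (b), the running time, I would implement \texttt{Fast-Create-Cluster} as a Dijkstra-like exploration from $t_j$ restricted to $V_\perp$: maintain a priority queue keyed by the current tentative distance $d_{G[V_j\cup\{v\}]}(v,t_j)$ (which, because we only relax edges from already-accepted cluster vertices, is exactly the standard Dijkstra label when $v$ is extracted), pop the minimum-key vertex $v$, test the acceptance inequality against $R_j\cdot D(v)$, and if accepted relax its edges to neighbors in $V_\perp\setminus U$. Crucially, a vertex that is rejected (put in $U$) is never revisited within this cluster-construction call, and once clustered it is removed from $V_\perp$ globally; each directed edge is relaxed $O(1)$ times per cluster-construction in which both endpoints are examined. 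A naive accounting gives $O((m+n)\log n)$ per round and hence $O(k(m+n)\log n)$, which is too weak — the right bound requires observing that across all $k$ rounds each vertex is \emph{clustered} exactly once (contributing its incident edges to relaxations once, total $O(m\log n)$) and is \emph{rejected} at most once per round (total $O(nk\log n)$ for the rejections, but also bounded by $O(m\log n)$ since a rejection of $v$ in round $j$ is witnessed by an edge from a $V_j$-vertex to $v$, and... here one must be careful). The cleanest way is: charge each priority-queue operation either to the unique round in which the involved vertex gets clustered, or to an edge; the total number of ``vertex is touched but not clustered'' events over all rounds is $O(\min\{m, nk\})$ because on one hand a vertex is touched at most once per round ($\le nk$ total) and on the other hand each such touch is caused by relaxing a distinct (clustered-endpoint, touched-endpoint) edge-slot and each edge yields $O(1)$ such slots over the whole algorithm's lifetime after the clustered endpoint is fixed...

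\textbf{Main obstacle.} I expect the delicate point to be precisely this running-time accounting — specifically justifying the $\min\{m,nk\}$ term and ensuring that a single edge is not charged super-constantly across the $k$ rounds. The subtlety is that a vertex $v\in V_\perp$ can be inserted into $N$ and rejected (entering $U$) in many different rounds, once for each round $j$ in which some neighbor of $v$ gets clustered before $v$ itself is clustered; bounding the total such work by $O(m)$ requires arguing that the edge $\{u,v\}$ triggers at most one insertion of $v$ into $N$ over the \emph{entire} algorithm once we fix the clustered endpoint $u$ — but $u$'s cluster is fixed only once, so across all rounds edge $\{u,v\}$ causes $O(1)$ insertions of $v$ from the $u$-side and $O(1)$ from the $v$-side, giving $O(m)$ total insertions and hence $O(m\log n)$ queue work; combined with the trivial per-round bound this yields $O(\min\{m,nk\}\log n)$, and adding the $O(m)$ for the final minor-edge-weight computation in \lineref{line:fastComputeM} (each edge of $G$ inspected once to see if it crosses between clusters, with the single-crossing-edge distances $d_{G,V_i+V_j}$ read off from the Dijkstra labels already computed) gives the claimed $O(m+\min\{m,nk\}\cdot\log n)$. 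I would write this charging argument out carefully as the technical heart of the running-time proof, and keep the distortion half brief by pointing to the unchanged inequalities in \Cref{sec:distortion}.
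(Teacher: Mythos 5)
Your proposal is correct and follows essentially the same route as the paper: reprove the terminal-partition and slice-joining claims for the in-cluster acceptance condition (picking up an extra $c_d$ factor from the new edge weights $d_{G,V_i+V_j}$), reuse the charging/concentration machinery of \Cref{sec:distortion} unchanged, and account for the runtime by charging heap insertions to edges. The charging you were hesitant about is exactly the paper's argument: letting $m_j$ be the number of edges incident on $V_j$, one has $|\mathcal{N}_j|\le\min\{m_j,n\}$ and $\sum_j m_j\le 2m$ since each edge's clustered endpoint is fixed in a unique round, which yields $O(m+\min\{m,nk\}\log n)$ as claimed.
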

We prove \Cref{thm:fastSPR} in several steps. First, in \subsectionref{subsec:BasicProp} we show that \Cref{alg:fastSPR} indeed returns a terminal partition, and that similarity to \Cref{alg:mainSPR}, the edge subdivision does not change the outcome of the algorithm. 
Then  in \subsectionref{subsec:fastDistortion} we'll go through the analysis provided in \Cref{sec:distortion}, and verify that it is still goes through for  \Cref{alg:fastSPR} as well.
Finally, in \subsectionref{SubSec:runtime} we describe an efficient implementation of \Cref{alg:fastSPR}. 

\subsection{Basic Properties}\label{subsec:BasicProp}
Consider the \texttt{Fast-Create-Cluster} procedure (\Cref{alg:FastCreateCluster}). This is a Dijkstra-like algorithm. For every vertex $v$, set $\ell_v=d_{G[V_j\cup \{v\}]}(v,t_j)$. Note that for a vertex $v$, the value $\ell_v$ is decreasing throughout the algorithm as the set $V_j$ grows.
Note also that $\ell_v$ is defined for all the vertices (but simply has value $\infty$ for vertices out of $V_j\cup N$).
Denote by $\hat{\ell}_v$ the value $\ell_v$ at the time $v$ is extracted from $N$ at \lineref{line:exstractMin} of \Cref{alg:FastCreateCluster} (if such an occasion indeed occurs).
\begin{claim}\label{clm:BasicProp}
	Consider the values $\hat{\ell}_v$ of the vertices, extracted from $N$ at \lineref{line:exstractMin} of \Cref{alg:FastCreateCluster}. Then this values are non-decreasing.
	That is, if $v$ was extracted before $v'$, then  $\hat{\ell}_v\le \hat{\ell}_{v'}$.\\
	Moreover, after $v$ is extracted, the value $\ell_v$ remains unchanged till the end of the algorithm.
\end{claim}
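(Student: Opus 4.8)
The statement is exactly the standard correctness invariant of Dijkstra's algorithm, adapted to the slightly unusual metric $\ell_v = d_{G[V_j\cup\{v\}]}(v,t_j)$ (the shortest-path distance using only already-clustered vertices, plus the single vertex $v$ itself). The plan is to prove the two assertions together by induction on the order of extractions from $N$. At the moment a vertex $v$ is extracted at \lineref{line:exstractMin}, it is chosen as the minimizer of $\ell_v$ over the current $N$; I will show (i) this $\hat\ell_v$ is at least as large as every previously recorded $\hat\ell_{v'}$, and (ii) no subsequent growth of $V_j$ can decrease $\ell_v$ below $\hat\ell_v$ — in fact it cannot change $\ell_v$ at all, because any shortest path realizing $\ell_v$ already lies inside the set of vertices clustered by the time $v$ is extracted.

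For the inductive step, suppose $v$ is extracted and then later (either immediately, if $v$ is added to $V_j$, or after some further extractions) a vertex $w$ is added to $V_j$. A shortest path realizing $\ell_v$ after $w$ joins has the form $t_j = x_0, x_1, \dots, x_r = v$ with $x_0,\dots,x_{r-1}\in V_j$. If this path does not use $w$ as an internal vertex, it was already available before $w$ joined, so $\ell_v$ is unchanged. If it does use $w$, consider the prefix of the path up to the first occurrence of $w$: this is a path from $t_j$ to $w$ through vertices of $V_j$, so its length is at least the value $\hat\ell_w$ at which $w$ was extracted (this is where I use that when $w$ was extracted it was the $N$-minimizer, hence $\hat\ell_w$ was $\le$ the $\ell$-value of $v$ at that time, which by monotonicity of $\ell_v$ in $V_j$ is $\le \hat\ell_v$ — wait, I need the inequality the other way). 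Let me instead argue: when $v$ was extracted it had $\ell$-value $\hat\ell_v$ and $w$ was in $N$ at that time (or joined $N$ later, in which case the argument is even easier since $w$ enters $N$ only as a neighbor of a vertex added after $v$'s extraction). Because $v$ was the minimizer, $\ell_w \ge \hat\ell_v$ at that moment, and $\ell_w$ only decreases, so $\hat\ell_w \ge \hat\ell_v$; but then the prefix to $w$ already has length $\ge \hat\ell_w \ge \hat\ell_v$, so the whole path has length $\ge \hat\ell_v$ plus the nonnegative remainder, contradicting that it was supposed to be a \emph{shorter} path. Hence $\ell_v$ never drops below $\hat\ell_v$, proving (ii); and $\hat\ell_v \le \hat\ell_w$ for every $w$ extracted after $v$ gives the monotonicity (i). One subtlety to handle carefully: a vertex $w$ may enter $N$ only after $v$ is extracted (it becomes a neighbor of some later-added vertex $u$); then $\hat\ell_w = \ell_w$ at extraction time $\ge d_{G[V_j\cup\{w\}]}$-value which includes the edge from $u$, and $u$ itself was extracted after $v$, so $\hat\ell_u \ge \hat\ell_v$ by induction and $\hat\ell_w \ge \hat\ell_u \ge \hat\ell_v$.

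The main obstacle is purely bookkeeping: the metric $\ell_v$ depends on the \emph{current} $V_j$, so "the graph changes underneath us", and one must be scrupulous that (a) the minimality of the extracted vertex is with respect to the $V_j$ at that instant, and (b) when reasoning about a shortest path that realizes a later value of $\ell_v$, every internal vertex on that path was already extracted-and-added \emph{before} $v$, which is the crux of why the value stabilizes. I would also remark that this gives, as a byproduct, that $\hat\ell_v = d_{G[V_\perp^{(j)}]}(v,t_j)$ restricted to the relevant reachable region — but that is not needed for the claim, so I would not belabor it. The whole argument is two short paragraphs once the induction hypothesis is stated precisely as: "for all vertices extracted so far, their $\hat\ell$ values are non-decreasing in extraction order, and each such value has not changed since extraction."
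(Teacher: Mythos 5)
Your plan follows essentially the same route as the paper: induction on the extraction order, establishing monotonicity of the values $\hat{\ell}_v$ together with the fact that $\ell_v$ stabilizes once $v$ is extracted, where stabilization is proved by noting that any path realizing a smaller value of $\ell_v$ would have to pass through a vertex $u$ extracted after $v$, forcing its length above $\hat{\ell}_u\ge\hat{\ell}_v$. That is precisely the paper's argument.

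However, one inference in your write-up is invalid as stated. For monotonicity you write: ``Because $v$ was the minimizer, $\ell_w \ge \hat{\ell}_v$ at that moment, and $\ell_w$ only decreases, so $\hat{\ell}_w \ge \hat{\ell}_v$.'' The monotone decrease of $\ell_w$ works \emph{against} you here: a quantity that starts at $\ge\hat{\ell}_v$ and subsequently only decreases could perfectly well end up below $\hat{\ell}_v$ by the time $w$ is extracted, so the conclusion does not follow from the premise. The correct repair --- which is what the paper does, and which you yourself already supply for the subcase of a vertex entering $N$ only after $v$'s extraction --- is to examine the \emph{cause} of any decrease: $\ell_w$ can drop only when some vertex $u$ joins $V_j$ and the new shortest path from $t_j$ to $w$ passes through $u$; the new value then equals $\hat{\ell}_u$ plus a strictly positive remainder, and since $u$ was extracted no earlier than $v$, the induction hypothesis gives $\hat{\ell}_u\ge\hat{\ell}_v$, hence the updated $\ell_w>\hat{\ell}_v$. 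With that one sentence replaced by this case analysis (value unchanged versus value decreased via a later-extracted vertex), your argument is complete and coincides with the paper's proof.
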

\begin{proof}
	The proof of the first property is by induction on the execution of the algorithm. Let $v,v'$ be a pair of vertices such that $v'$ was extracted from $N$ right after $v$. It will be enough to show that $\hat{\ell}_v\le \hat{\ell}_{v'}$.
	Consider the time when $v$ was extracted from $N$. Let $\tilde{V}_j$ denote the set $V_j$ at that time. By minimality, for every $u\in N$, $\hat{\ell}_v=d_{G[\tilde{V}_j\cup \{v\}]}(v,t_j)\le d_{G[\tilde{V}_j\cup \{u\}]}(u,t_j)$. 
	If the value $\ell_{v'}$ did not change, we already have  $\hat{\ell}_{v'}=d_{G[\tilde{V}_j\cup \{v'\}]}(v',t_j)\ge \hat{\ell}_v$ (as necessarily $v'\in N$ because it is extracted next).
	Otherwise, if the value $\ell_{v'}$ decreased, then necessarily $v$ joined $V_j$ and the shortest path from from $t_j$ to $v'$ (in $\tilde{V}_j\cup\{v,v'\}$) goes through $v$ (as otherwise $\ell_{v'}$ would not have changed). 
	In particular, $\hat{\ell}_{v'}=d_{G[\tilde{V}_j\cup \{v,v'\}]}(t_j,v')=d_{G[\tilde{V}_j\cup \{v,v'\}]}(t_j,v)+d_{G[\tilde{V}_j\cup \{v,v'\}]}(v,v')> \hat{\ell}_v$.
	
	For the second property (that after extraction, $\ell_v$ remains unchanged), seeking contradiction, assume that $\ell_v$ is updated after some $u$ is extracted from $N$ and joined $V_j$. This implies that the new shortest path from $t_j$ to $v$ goes trough $u$, and thus is of length greater than $\hat{\ell}_u$, a contradiction.
\end{proof}

Now we are ready to show that \Cref{alg:fastSPR} indeed returns a terminal partition (that is, reprove \Cref{lem:AlgRetPar}).
\begin{lemma}\label{lem:FastAlgRetPar}
	The sets $V_1,\ldots,V_k$ constructed by \Cref{alg:fastSPR} constitutes a terminal partition.
\end{lemma}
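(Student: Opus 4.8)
The plan is to mirror the proof of Lemma~\ref{lem:AlgRetPar} as closely as possible, using Claim~\ref{clm:BasicProp} to replace the shortest-path arguments that were available for \Cref{alg:mainSPR}. As before, disjointness of the $V_j$'s, the membership $t_j\in V_j$, and connectivity of $G[V_j]$ are all immediate from the description of \Cref{alg:FastCreateCluster} (a vertex only joins $V_j$ when it is extracted from $N$, and $N$ only ever contains neighbors of the current $V_j$). The only nontrivial point is that every vertex $v\in V$ eventually gets clustered.

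Fix $v\in V$ and again let $t_j$ be the closest terminal to $v$, so $D(v)=d_G(v,t_j)$, and let $P=\{t_j=u_0,u_1,\dots,u_s=v\}$ be the (unique) shortest path from $t_j$ to $v$ in $G$; every $u_i$ has $t_j$ as its closest terminal, with $D(u_i)=d_G(u_i,t_j)$. Since $t_j\in V_j$, at least one vertex of $P$ is clustered during the algorithm; let $u_{i'}$ be the first one (in time) and let $V_{j'}$ be the cluster it joins. I would argue by induction on $i\ge i'$ that $u_i\in V_{j'}$. For the inductive step, suppose $u_i\in V_{j'}$. Then $u_{i+1}$, being a neighbor of $u_i$ in $V_\perp$ at that moment (it is still unclustered, since $u_{i'}$ was the \emph{first} clustered vertex of $P$ and the induction only adds vertices of $P$ to $V_{j'}$), is added to $N$ at \lineref{line:FastAddNeighbors}. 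Hence $u_{i+1}$ is eventually extracted from $N$; let $\tilde V_{j'}$ be the cluster at that extraction time. By Claim~\ref{clm:BasicProp}, once $u_i$ was extracted its value $\ell_{u_i}$ never changed, so $d_{G[\tilde V_{j'}\cup\{u_i\}]}(u_i,t_{j'})=\hat\ell_{u_i}\le R_{j'}\cdot D(u_i)$ (the inequality being the condition at \lineref{line:Fastdesidion} that caused $u_i$ to join). Since $u_i\in\tilde V_{j'}$ and $u_{i+1}$ is adjacent to $u_i$, concatenating that path with the edge $\{u_i,u_{i+1}\}$ gives a walk in $G[\tilde V_{j'}\cup\{u_{i+1}\}]$, so
\begin{align*}
d_{G[\tilde V_{j'}\cup\{u_{i+1}\}]}(u_{i+1},t_{j'})
&\le d_{G[\tilde V_{j'}\cup\{u_i\}]}(u_i,t_{j'})+d_G(u_i,u_{i+1})\\
&\le R_{j'}\cdot D(u_i)+d_G(u_i,u_{i+1})\\
&\le R_{j'}\cdot\bigl(d_G(u_i,t_j)+d_G(u_i,u_{i+1})\bigr)
= R_{j'}\cdot d_G(u_{i+1},t_j)=R_{j'}\cdot D(u_{i+1})~,
\end{align*}
using $R_{j'}\ge1$ and that $P$ is a shortest path. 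Therefore the test at \lineref{line:Fastdesidion} passes when $u_{i+1}$ is extracted, so $u_{i+1}$ joins $V_{j'}$, completing the induction; taking $i=s$ shows $v=u_s$ is clustered.

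The main subtlety — the one place this differs in substance from Lemma~\ref{lem:AlgRetPar} — is that in the fast version the relevant distance is $d_{G[V_{j'}\cup\{v\}]}$, a path length \emph{confined to the current cluster}, rather than the global $d_G$. So I must be careful that (i) $u_i$ really is inside $\tilde V_{j'}$ at the time $u_{i+1}$ is extracted, which follows because cluster membership is permanent and $u_i$ was clustered at an earlier step of the same \texttt{Fast-Create-Cluster} call, and (ii) $\ell_{u_i}$ has not grown in the meantime — this is exactly the ``moreover'' clause of Claim~\ref{clm:BasicProp}. I also need that $u_{i+1}$ is still in $V_\perp$ (equivalently, not yet in $U$) when $u_i$ joins, so that it actually gets added to $N$; this holds because $u_{i+1}\in P$ and the induction hypothesis guarantees that the only vertices of $P$ touched so far during this call have \emph{joined} $V_{j'}$, never been refused into $U$. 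Beyond that, the argument is the same ball-growing induction as before.
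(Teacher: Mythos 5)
Your proof follows the same route as the paper's: the same induction along the shortest path $P$ from the closest terminal, the same chain of inequalities bounding $d_{G[\tilde V_{j'}\cup\{u_{i+1}\}]}(u_{i+1},t_{j'})$ by $R_{j'}\cdot D(u_{i+1})$ using $R_{j'}\ge 1$ and the fact that $P$ is a shortest path, and the same appeal to \Cref{clm:BasicProp} to control $\ell$-values. The one step whose justification does not hold up is your final, unnumbered point: you claim $u_{i+1}$ cannot already be in $U$ when $u_i$ joins ``because the induction hypothesis guarantees that the only vertices of $P$ touched so far during this call have joined $V_{j'}$.'' The induction hypothesis only concerns $u_{i'},\dots,u_i$; it says nothing about $u_{i+1}$, which may have been added to $N$ much earlier via a neighbor \emph{off} the path $P$ and extracted before $u_i$ was. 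So you cannot rule out an early extraction (and hence a possible refusal) of $u_{i+1}$ by appealing to the induction hypothesis, and your account of which cluster $\tilde V_{j'}$ is ``the cluster at extraction time'' silently assumes extraction happens after $u_i$ joins.

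The gap is easily closed with the monotonicity half of \Cref{clm:BasicProp}, which is in fact how the paper argues: $u_{i+1}$ certainly enters $N$ at some stage (at the latest when $u_i$ joins, and possibly earlier), and whenever it is extracted one of two cases holds. If it is extracted \emph{before} $u_i$, then $\hat\ell_{u_{i+1}}\le\hat\ell_{u_i}\le R_{j'}\cdot D(u_i)\le R_{j'}\cdot D(u_{i+1})$, since extracted values are non-decreasing and $D$ increases along $P$; so $u_{i+1}$ passes the test and joins $V_{j'}$ rather than being refused. If it is extracted \emph{after} $u_i$ joins, your computation applies verbatim (using the ``moreover'' clause that $\ell_{u_i}$ is frozen after extraction and that $\ell_{u_{i+1}}$ can only decrease between the time $u_i$ joins and the time $u_{i+1}$ is extracted). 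Either way $u_{i+1}$ joins $V_{j'}$, completing the induction. With that repair your argument coincides with the paper's proof.
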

\begin{proof}
	It is clear that the clusters $V_1,\dots,V_j$ are disjoint, and that each cluster is connected. It will be enough to argue that every vertex $v\in V$ is clustered.
	Following along the lines of the proof of \Cref{lem:AlgRetPar}, let $t_j$ be the closest terminal to $v$, and $P=\{t_j=u_0,u_1,\dots ,u_s=v\}$ be the shortest path from $t_j$ to $v$. 
	Let $u_{i'}$ be the first vertex from $P_{t,t'}$ to be clustered during the algorithm ($u_0=t_j\in V_j$, so at least one vertex in $P_{t,t'}$ is clustered). Let $V_{j'}$ be the cluster $u_{i'}$ joins to.  We argue by induction on $i\ge i'$ that $u_{i}$ also joins $V_{j'}$. This will imply that $u_s=v$ joins $V_{j'}$ and thus is clustered.
	
	Suppose $u_i$ joins $V_{j'}$. Denote by $V_{j'}^i$ the set $V_{j'}$ right after $u_i$ joins it. 
	As $u_i$ joins $V_{j'}$, $d_{G[V_{j'}^i]}(u_i,t_{j'})\le R_{j'}\cdot D(u_i)$.
	In particular, at that stage 
	\begin{align*}
	\ell_{u_{i+1}}= d_{G\left[V_{j'}^{i}\cup\left\{ u_{i+1}\right\} \right]}(u_{i+1},t_{j'}) & \le d_{G\left[V_{j'}^{i}\right]}(u_{i},t_{j'})+w\left(\left\{ u_{i},u_{i+1}\right\} \right)\\
	& \le R_{j'}\cdot D(u_{i})+d_{G}(u_{i},u_{i+1})\le R_{j'}\cdot D(u_{i+1})~,
	\end{align*}
	As at least one neighbor ($u_{i}$) of $u_{i+1}$ joins $V_{j'}$, $u_{i+1}$ joins $N$ at some stage of the algorithm. In particular, by \Cref{clm:BasicProp}, when $u_{i+1}$ will be extracted from $N$, $\hat{\ell}_{u_{i+1}}\le  R_{j'}\cdot D(u_{i+1})$, and thus $u_{i+1}$ will join $V_{j'}$ as required. 
\end{proof}

We will use the modified graph $\hat{G}$ (with the subdivided edges) for the distortion analysis. In order to prove validity, we will argue that \Cref{clm:SameMinor} still holds.
\begin{claim}\label{clm:FastSameMinor}
	In \Cref{clm:SameMinor}, if we replace \Cref{alg:mainSPR} with \Cref{alg:fastSPR}, the claim still holds.
\end{claim}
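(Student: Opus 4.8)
The plan is to mimic the proof of Claim~\ref{clm:SameMinor}, keeping the same overall structure, but being careful that the decision rule and the extraction order in \Cref{alg:FastCreateCluster} now depend on the graph-restricted distances $d_{G[V_j\cup\{v\}]}(v,t_j)$ rather than the global distances $d_G(v,t_j)$. As before, I fix $g_1,\dots,g_k$ so that the algorithm is deterministic, let $V_1,\dots,V_k$ be the terminal partition produced on $G$ and $\tilde V_1,\dots,\tilde V_k$ the one produced on $\tilde G$ (the graph with the single subdivided edge $e=\{v,u\}$ replaced by $\{v_e,v\},\{v_e,u\}$ of weight $\omega/2$ each). The goal is again to show $V_j=\tilde V_j\setminus\{v_e\}$ for every $j$, which as in Claim~\ref{clm:SameMinor} forces $v_e$ into the cluster of $v$ or of $u$ (by connectivity) and hence leaves the induced minor unchanged; one also has to check that the \emph{weights} $d_{G,V_i+V_j}(t_i,t_j)$ assigned in \lineref{line:fastComputeM} are unaffected by the subdivision, which is immediate since subdividing an edge on a single-crossing-edge path does not change its length (and subdividing $e$ cannot create a shorter single-crossing path).

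First I would establish the key structural fact: subdividing $e$ has no effect on any restricted distance $d_{G[A\cup\{w\}]}(w,t_j)$ for sets $A$ not containing $v_e$, \emph{except} that $v_e$ behaves exactly like the midpoint of the edge it came from. Concretely, $v_e$ has degree two, with neighbors $v,u$ at distances $\omega/2$; so $D(v_e)=\min\{D(v),D(u)\}+\omega/2$ and, for any partially-built cluster $V_j'$, the restricted distance from $t_j$ to $v_e$ (when $v_e\in N$) equals $\min\{d_{G[V_j'\cup\{v\}]}(v,t_j),\,d_{G[V_j'\cup\{u\}]}(u,t_j)\}+\omega/2$ provided at least one of $v,u$ is already in $V_j'$ --- this is precisely the analogue of the two displayed equations in the proof of Claim~\ref{clm:SameMinor}. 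I would also note, using Claim~\ref{clm:BasicProp}, that the extraction order from $N$ in \Cref{alg:FastCreateCluster} is determined by the non-decreasing values $\hat\ell_w$, so up to the first moment one of $v,u$ is clustered, the two executions on $G$ and $\tilde G$ proceed identically and agree on all earlier clusters.

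Then I would run the same case analysis as in Claim~\ref{clm:SameMinor}. Let $V_j$ be the cluster into which the earlier of $v,u$ is placed --- w.l.o.g.\ $v$ joins $V_j$ while $u$ is still unclustered --- and let $V_j'$ be $V_j$ at that instant, with $V_j'=\tilde V_j'$. Since $v$ joined, $d_{G[V_j'\cup\{v\}]}(v,t_j)\le R_j\cdot D(v)$. If $u\notin V_j$: then in the run on $G$ the vertex $u$ was examined and refused, so $d_{G[\cdot\cup\{u\}]}(u,t_j)>R_j\cdot D(u)$ at that stage, and since restricted distances only decrease as clusters grow while $v_e$ (a leaf in the absence of $u$) cannot shorten any path to a third vertex, $u$ is refused from $\tilde V_j$ as well; $v_e$ then has no further effect and $V_{j''}=\tilde V_{j''}\setminus\{v_e\}$ for all $j''$. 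If $u\in V_j$: then $d_{G[\cdot\cup\{u\}]}(u,t_j)\le R_j\cdot D(u)$, and combining with the formula for $v_e$ above gives $d_{G[\tilde V_j'\cup\{v_e\}]}(v_e,t_j)\le R_j\cdot\min\{D(v),D(u)\}+\omega/2\le R_j\cdot D(v_e)$, so $v_e$ joins $\tilde V_j$, which in turn puts $u$ into $\tilde N$ and (by the same inequality used for $u$ in the $G$-run, now routed through $v_e$ with an extra $\omega/2$ on both sides that cancels) into $\tilde V_j$; hence $V_{j''}=\tilde V_{j''}$ for $j''\ne j$ and $V_j\cup\{v_e\}=\tilde V_j$.

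The main obstacle I anticipate is the bookkeeping around the extraction order: unlike in Claim~\ref{clm:SameMinor}, where the order in $N$ was an arbitrary-but-fixed external choice, here the order is intrinsic to the distances, so I must verify that inserting $v_e$ does not reorder the extractions of the genuine vertices (it only inserts one extra extraction, of $v_e$ itself, sandwiched appropriately by its $\hat\ell$-value) and that $v_e$'s presence never creates a strictly shorter path between two genuine vertices --- which holds because any walk through $v_e$ traverses both half-edges and is thus no shorter than the corresponding walk through the original edge $e$. Once this monotonicity/ordering point is nailed down, the rest is a direct transcription of the Claim~\ref{clm:SameMinor} argument. As in the $\hat G$ reduction there, repeated application of this claim (once per subdivided edge) shows the analysis on $\hat G$ transfers verbatim to $G$.
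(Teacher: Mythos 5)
Your proposal is correct and follows essentially the same route as the paper's proof: reduce to showing $V_j=\tilde V_j\setminus\{v_e\}$ for all $j$, use \Cref{clm:BasicProp} to control the extraction order and the stability of the labels $\hat\ell$, note that the single-crossing-edge weights $d_{G,V_i+V_j}$ are unaffected by the subdivision, and case-split on whether $u$ joins $V_j$. The only cosmetic difference is that the paper further splits the case $u\in V_j$ according to whether $u$'s within-cluster shortest path uses the edge $e$ (i.e.\ $\hat\ell_u=\hat\ell_v+\omega$ versus $\hat\ell_u<\hat\ell_v+\omega$), handling the mechanism by which $v_e$ and $u$ enter $\tilde V_j$ separately in each sub-case; your sketch glosses over this distinction, but the computation you give covers the through-$e$ route and the other sub-case is only easier, so correctness is not affected.
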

\begin{proof}
	We follow the lines of the proof of \Cref{clm:SameMinor}.
	Let $V_1,\dots,V_k$ (resp. $\tilde{V}_1,\dots,\tilde{V}_k$) be the terminal partition induced by \Cref{alg:fastSPR} on $G$ (resp. $\tilde{G}$).
	We argue that for all $j$, $ V_j=\tilde{V}_j\setminus\{v_e\}$.
	As previously, this will imply that the terminal-centered minors have the same edges set. As $v_e$ only subdivides the edge $e$, it will also hold for all $i,j$ that $d_{G,V_i+V_j}(t_i,t_j)=d_{G,\tilde{V}_i+\tilde{V}_j}(t_i,t_j)$, and thus the edge weights in both minors will also be identical. In particular, the claim will follow.
	
	Suppose w.l.o.g that $v$ joins $V_j$ while $u$ is still unclustered. Denote by $V_j'$ (resp. $\tilde{V}_j'$) the set $V_j$ (resp. $\tilde{V}_j$) right after the clustering of $v$ at the execution of \Cref{alg:fastSPR} on $G$ (resp. $\tilde{G}$).
	As previously,  for all $j''<j$, $V_{j''}=\tilde{V}_{j''}$, while $V_j'=\tilde{V}_j'$.
	
	Recall that $\hat{\ell}_v=d_{G[V'_j](t_j,v)}$ (resp. 	$\tilde{\hat{\ell}}_v$) denotes the distance between $t_j$ to $v$ at the time of the extraction of $v$ from $N$ (resp. $\tilde{N}$). Note that $\hat{\ell}_v=\tilde{\hat{\ell}}_v$.
	As $v$ joins $V_j$, necessarily $\hat{\ell}_v\le R_j\cdot D(v)$.
	In the rest of the proof we consider the following cases:
	\begin{itemize}
		\item \textbf{$\hat{\ell}_u>R_j\cdot D(v)$ :} In this case $u$ will not join $V_j$.
		As $v_e$ has edges only to $v$ and $u$, $v_e$ has no impact on any other vertex. In particular, $\hat{\ell}_u\le\tilde{\hat{\ell}}_u$. Therefore 		
		$\tilde{V}_j$ will be constructed in the same manner as $V_j$ (up to maybe containing $v_e$).
		Note that all the other clusters will not be effected, as if $v_e$ remained unclustered, it becomes a leaf. We conclude that for every $j'$, $V_{j'}=\tilde{V}_{j'}\setminus\{v_u\}$. 		
		
		\item \textbf{$\hat{\ell}_u\le R_j\cdot D(v)$ :} Recall that $\omega$ is the weight of $e$. There are two sub-cases:
		\begin{itemize}
			\item {$\hat{\ell}_u= \hat{\ell}_v+\omega$ -}
			After $v$ joins $\tilde{V}_j$, the label of $v_e$ is updated to $\hat{\ell}_{v_e}\leftarrow \tilde{\hat{\ell}}_v+\frac{\omega}{2}$.	It holds that 	
			\begin{align*}
			\tilde{\hat{\ell}}_{v_e} & \le\tilde{\ell}_{v_e}=\tilde{\hat{\ell}}_{v}+\frac{\omega}{2}=\hat{\ell}_{v}+\frac{\omega}{2}=\frac{1}{2}\left(\hat{\ell}_{v}+\hat{\ell}_{u}\right)\\
			& \le\frac{1}{2}\cdot R_{j}\left(D(v)+D(u)\right)\le R_{j}\cdot D(e_{v})~.
			\end{align*}		
			In particular, $v_e$ will join $\tilde{V}_j$, and $\tilde{\ell}_u$ will be updated to $\tilde{\hat{\ell}}_{v_e}+\frac\omega2=\tilde{\hat{\ell}}_v+\omega$. From this point on, the two algorithms will behave in the same way. In particular, for every $j''\ne j$, $V_{j''}=\tilde{V}_{j''}$ while $V_{j}\cup\{v_e\}=\tilde{V}_{j}$.
			\item {$\hat{\ell}_u< \hat{\ell}_v+\omega$ -} It holds that $u$ joins $V_j$. However, the shortest path in $V_j$ from $t_j$ to $u$ did not goes trough $v$. Therefore, as $v_e$ did not effect any vertex (other than $v,u$), the execution will proceed in the same way in both algorithms, and $u$ will join $\tilde{V}_j$.
			As each cluster is connected and all the vertices are clustered, necessarily $v_e$ will  join $\tilde{V}_j$ as well.
			We conclude that for every $j''\ne j$, $V_{j''}=\tilde{V}_{j''}$ while $V_{j}\cup\{v_e\}=\tilde{V}_{j}$.  
		\end{itemize}
	\end{itemize} 	 
\end{proof}

\subsection{Distortion Analysis}\label{subsec:fastDistortion}
We will follow the distortion analysis of \Cref{alg:mainSPR} given in \Cref{sec:distortion}. Consider two terminals $t,t'$. We will use the exact same notation (the reader is suggested to refer to \Cref{appendix:key} in order to recall notations and definitions).
We start by reproving \Cref{clm:failProb}.
\begin{claim}\label{clm:fastfailProb}
	During the execution of \Cref{alg:fastSPR}, assuming $R_j\ge r_{v^j}$, all of $S_j$ joins $V_j$ with probability at least $1-p$.
\end{claim}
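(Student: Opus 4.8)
The plan is to mimic the proof of Claim \ref{clm:failProb}, substituting the quantity $d_{G[V_j\cup\{v\}]}(v,t_j)$ (the ``effective distance'' used by the fast algorithm) for $d_G(v,t_j)$ throughout, and checking that the two ingredients of the original argument still hold: (i) a deterministic inequality bounding $\hat\ell_u/D(u)$ for every $u\in S_j$ in terms of $\hat\ell_{v^j}/D(v^j)\le r_{v^j}$ up to a factor $(1+\delta)$, and (ii) the purely probabilistic estimate $\Pr[R_j\ge(1+\delta)r_{v^j}\mid R_j\ge r_{v^j}]=1-p$, which is untouched since the distribution of $R_j$ is identical in both algorithms.

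For ingredient (ii) nothing changes: equation \eqref{eq:RjBigProb} goes through verbatim, so conditioned on $R_j\ge r_{v^j}$ we have $R_j\ge(1+\delta)r_{v^j}$ with probability $1-p$. The work is in ingredient (i). First I would recall that by Claim \ref{clm:BasicProp} the values $\hat\ell$ are non-decreasing in extraction order and, once a vertex is extracted, its $\ell$-value is frozen; also $V_j$ is monotone in $R_j$, so the definition of $r_v$ and hence of $v^j$ still makes sense. Assume $R_j\ge(1+\delta)r_{v^j}$. Then $v^j$ is clustered, so $\hat\ell_{v^j}=d_{G[V_j'\cup\{v^j\}]}(v^j,t_j)\le r_{v^j}D(v^j)$ where $V_j'$ is the cluster at that moment. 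I would then argue, proceeding along the slice $S_j$ in path order starting from $v^j$, that each successive vertex $u\in S_j$ joins $V_j$: since $S_j$ is a connected set of currently-active (hence unclustered) vertices, once its ``left part'' up to some $u'$ has joined, the neighbor $u$ of $u'$ in $S_j$ enters $N$, and its label satisfies $\hat\ell_u\le \hat\ell_{v^j}+L(Q_j)$ because one can route from $t_j$ to $v^j$ inside the already-built cluster and then walk along $P_{t,t'}\cap Q_j\subseteq V_j\cup\{u\}$; symmetrically $D(u)\ge D(v^j)-L(Q_j)$.

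Combining these with the interval-length bound --- exactly inequalities \eqref{eq:BoundLQj}, using $L(Q_j)\le\cint\delta D(Q_j)\le 2\cint\delta D(v^j)\le 2\cint\delta\,\hat\ell_{v^j}$ (here I use $D(v^j)\le d_G(v^j,t_j)\le d_{G[V_j'\cup\{v^j\}]}(v^j,t_j)=\hat\ell_{v^j}$, since restricting to a subgraph can only increase distances) --- gives
\[
\frac{\hat\ell_u}{D(u)}\le\frac{\hat\ell_{v^j}\left(1+2\cint\delta\right)}{D(v^j)\left(1-2\cint\delta\right)}\le\frac{\hat\ell_{v^j}}{D(v^j)}\left(1+\delta\right)\le r_{v^j}(1+\delta)\le R_j~,
\]
so the decision rule in \lineref{line:Fastdesidion} accepts $u$ into $V_j$. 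By induction all of $S_j$ joins $V_j$. Therefore, conditioned on $R_j\ge r_{v^j}$, the event ``$R_j\ge(1+\delta)r_{v^j}$'' --- which has probability $1-p$ --- already forces all of $S_j$ into $V_j$, proving the claim. The main obstacle, and the one point that genuinely needs care beyond copying the old proof, is verifying that $\hat\ell_u$ (a distance computed in the restricted graph $G[V_j\cup\{u\}]$, which depends on which vertices have been clustered so far and in what order) is still bounded by $d_G(v^j,t_j)+L(Q_j)$; this is where Claim \ref{clm:BasicProp} is essential, since it guarantees the path $t_j\leadsto v^j$ realizing $\hat\ell_{v^j}$ stays available (its vertices are in $V_j$ and their labels are frozen) when we later extract $u$.
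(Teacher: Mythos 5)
Your proposal is correct and follows essentially the same route as the paper's proof: condition on $R_j\ge(1+\delta)r_{v^j}$ (probability $1-p$ by the unchanged geometric-distribution computation), then induct along the slice from $v^j$ outward, bounding $\hat\ell_u\le\hat\ell_{v^j}+L(Q_j)$ via a route through the already-built cluster followed by the clustered prefix of the slice, and invoking \Cref{clm:BasicProp} to freeze labels. The one place where you are slightly more careful than the paper is in justifying $L(Q_j)\le 2\cint\delta\,\hat\ell_{v^j}$ via $D(v^j)\le\hat\ell_{v^j}$ (and your final chain $\le r_{v^j}(1+\delta)\le R_j$ is stated more cleanly than the paper's), but the argument is the same.
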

\begin{proof}
	Denote $S_j=\{u_{j-q'},\dots,u_{j},\dots,u_{j+q}\}\subseteq Q_j\subseteq P_{t,t'}$ where $v^j=u_j$.
	Denote by $V_j'$ the cluster $V_j$ right after $u_j$ joins.
	As $u_j$ joined, necessarily $\frac{d_{G[V_{j}'\cup\{u_j\}]}(u_{j},t_{j})}{D(u_{j})}\le r_{v^{j}}\le R_{j}$. 
	We will denote by $\bar{V}_j$ the cluster $V_j$ at the end of the algorithm.
	Following inequality \eqref{eq:RjBigProb}, with probability $1-p$, $R_j\ge(1+\delta)r_{v^j}$.
	We will show that if this event indeed occur, then $S_j\subseteq \bar{V}_j$.
	
	We argue by induction on $i$, that $u_{j+i}\in \bar{V}_j$. The proof that $u_{j-i}\in \bar{V}_j$ is symmetric. 
	Assume that  $\{u_i,u_{i+1},\dots,u_{j+i-1}\}\subseteq \bar{V}_{j}$. 
	Following inequalities \eqref{eq:BoundLQj} and \eqref{eq:BoundDu} , $L(Q_{j})\le 2\cint\delta\cdot D(v^{j})$ and $D(u_{j+i})\ge D(v^{j})\left(1-2\cint\delta\right)$.
	As $u_{i+j-1}\in\bar{V}_{j}$, $u_{j+i}$ necessarily joins $N$ at some stage. In particular, at the time $u_{j+i}$ was extracted from $N$, 
	\[
	\hat{\ell}_{u_{j+i}}= d_{G\left[\bar{V}_{j}\cup\left\{ u_{j+i}\right\} \right]}(t_{j},u_{j+i})\le d_{G\left[V_{j}'\right]}(t_{j},v^{j})+L(Q_{j})\le d_{G\left[V_{j}'\right]}(t_{j},v^{j})\left(1+2\cint\delta\right)~,
	\]
	where the first equality follows by \Cref{clm:BasicProp}, as $\hat{\ell}_{u_{j+i}}$ remains unchanged after extraction.
	We conclude that
	\[
	\frac{\hat{\ell}_{u_{j+i}}}{D(u_{j+i})}\le\frac{d_{G\left[V_{j}'\right]}(t_{j},v^{j})\left(1+2\cint\delta\right)}{D(v^{j})\left(1-2\cint\delta\right)}\le\frac{d_{G\left[V_{j}'\right]}(t_{j},v^{j})}{D(v^{j})}\left(1+3\cdot2\cint\delta\right)\le\left(1+\delta\right)R_{j}~.
	\]
	We conclude that $u_{j+i}$ joins $V_j$ as required.
\end{proof}

In \subsectionref{subsec:FailCount} we defined charge function $f(\{x_Q\}_{Q\in\mathcal{Q}})=\sum_{Q\in\mathcal{Q}}X(Q)\cdot L^{+}(Q)$, and in \Cref{lem:fbound} we upper bounded its value (w.h.p). In that analysis we exploit only \Cref{clm:failProb}. Replacing it with \Cref{clm:fastfailProb}, the analysis still hold. That is 
	$\Pr\left[f\left(\{\tilde{X}(Q)\}_{Q\in\mathcal{Q}}\right)\ge43\cdot d_G(t,t') \right]\le k^{-3}$.
Denote by $\EfBig$  the event that for some pair of terminals $t,t'$, 
$f\left(\tilde{X}(Q^{1}),\dots,\tilde{X}(Q^{\varphi})\right)\ge43 \cdot d_G(t,t')$ . As previously, by union bound $\Pr\left[\EfBig\right]< \frac{1}{2k}$.
Denote by $\EB$  the event that for some $j$, $R_j> c_d$.
By \Cref{clm:boundEB}, $\Pr[\EB]\le \frac{1}{2k}$.
We argue that assuming $\overline{\EB}$ and $\overline{\EfBig}$ (which happens with probability $1-\frac1k$), the distance between every pair of terminals $t,t'$ in the minor returned by \Cref{alg:fastSPR} bounded by $O(\log k)\cdot d_G(v,u)$.
This will conclude the proof of the distortion argument in \Cref{thm:fastSPR}.
Recall that in contrast to \Cref{alg:mainSPR}, the weight of the edge $\{t_i,t_j\}$ (if exists) 
is $d_{G,V_i+V_j}(t_i,t_j)$ rather than $d_{G}(t_i,t_j)$, this will force some changes to our analysis.
Recall the notations we used in \Cref{lem:distortion}: the path $P_{t,t'}$ is divided into consecutive detours $\mathcal{D}_{\ell_1},\dots,\mathcal{D}_{\ell_{k'}}$.
The leftmost (resp. rightmost) vertex in $\mathcal{D}_{\ell_j}$ denoted by $a_{\ell_j}$ (resp. $b_{\ell_j}$). Both $a_{\ell_j},b_{\ell_j}$ belong to $V_{\ell_j}$, the cluster of $t_{\ell_j}$. 
In particular, the graph $G$ contains an edge between $b_{\ell_j}$ to $a_{\ell_{j+1}}$. Recall also that $t_{\ell_1}=t$ and $t_{\ell_k'}=t'$ (as each terminal covers itself). 
It holds that,
\begin{align*}
d_{M}(t,t') & \le\sum_{j=1}^{k'-1}d_{G,V_{\ell_{j}}+ V_{\ell_{j+1}}}(t_{\ell_{j}},t_{\ell_{j+1}})\\
& \le\sum_{j=1}^{k'-1}\left[d_{G\left[V_{\ell_{j}}\right]}(t_{\ell_{j}},b_{\ell_{j}})+d_{G}(b_{\ell_{j}},a_{\ell_{j+1}})+d_{G\left[V_{\ell_{j+1}}\right]}(a_{\ell_{j+1}},t_{\ell_{j+1}})\right]\\
& \le c_{d}\cdot\sum_{j=1}^{k'-1}\left[d_{G}(t_{\ell_{j}},b_{\ell_{j}})+d_{G}(b_{\ell_{j}},a_{\ell_{j+1}})+d_{G}(a_{\ell_{j+1}},t_{\ell_{j+1}})\right]\\
& \le c_{d}\cdot\sum_{j=1}^{k'-1}\left[d_{G}(t_{\ell_{j}},v^{\ell_{j}})+d_{G}(v^{\ell_{j}},b_{\ell_{j}})+d_{G}(b_{\ell_{j}},a_{\ell_{j+1}})+d_{G}(a_{\ell_{j+1}},v^{\ell_{j+1}})+d_{G}(v^{\ell_{j+1}},t_{\ell_{j+1}})\right]\\
& \le c_{d}\cdot\left(\sum_{j=1}^{k'-1}d_{G}(v^{\ell_{j}},v^{\ell_{j+1}})+2\sum_{j=1}^{k'}d_{G}(t_{\ell_{j}},v^{\ell_{j}})\right)\\
& \le c_{d}\cdot\left(d_{G}(t,t')+2c_{d}\cdot\sum_{j=1}^{k'}D(v^{\ell_{j}})\right)\\
& =O\left(\ln k\right)\cdot d_{G}(t,t')~.
\end{align*}
The third inequality follows by our assumption $\overline{\EB}$, as for every index $j$ and vertex $v\in V_j$, it holds that
$d_{G\left[V_{j}\right]}(t_{j},v)\le c_{d}\cdot D(v)\le c_{d}\cdot d_{G}(t_{j},v)$. The fifth inequality follows as all $v^{\ell_{j}},b_{\ell_{j}},a_{\ell_{j+1}},v^{\ell_{j+1}}$ lie on the same shortest path $P_{t,t'}$.
The sixth inequality follows by $\overline{\EB}$ as $d_{G}(t_{\ell_{j}},v^{\ell_{j}})\le d_{G\left[V_{\ell_{j}}\right]}(t_{\ell_{j}},v^{\ell_{j}})\le c_{d}\cdot D(v^{\ell_{j}})$.
The equality follows by inequality (\ref{eq:distBound2}) and $\overline{\EfBig}$.

\subsection{Runtime}\label{SubSec:runtime}
For the implementation of \Cref{alg:fastSPR} and the \texttt{Fast-Create-Cluster} procedure we will use two basic data structures. The first one is a binary array to determine set membership of the vertices. It is folklore (see for example \cite{AH74}) that an array could be initialized in constant time to be the all $0$ array (that is the empty set). Changing entry (that is adding or deleting an element) also takes constant time.
The second data structure is the Fibonacci heap (see \cite{FT87}). Here each  element has a key (some real number), and we can add new element or decrease the value of the key in constant time.  Finding the minimal element in the heap and deleting it takes $O(\log h)$ time (assuming there are currently $h$ elements in the heap).

Before the execution of \Cref{alg:fastSPR}, we compute the values $D(v)$ for all $v\in V$. This is done using an auxiliary graph $G'$ where we add new vertex $s$ with edges of weight $0$ to all the terminals. Note that for every vertex $v$, the distance from $s$  exactly equals $D(v)$. Thus we can simply run Dijkstra algorithm from $s$ to determine $D(v)$ for all $v\in V$.  The runtime is $O(m+n\log n)$ (see \cite{FT87}).

Next we give a detailed implementation of the \texttt{Fast-Create-Cluster} procedure.
The sets $V_j,U$ and $V_\perp$ are stored using the arrays described above ($V_\perp$ will be a global variable). The set $N$ will be stored using Fibonacci heap, where the key value of $v\in N$ will be $\ell_v$ (i.e.  $d_{G[V_j\cup\{v\}]}(v,t_j)$). 
Denote by $\mathcal{N}_j$ all the elements who belong to $N$ at any stage of the execution of the \texttt{Fast-Create-Cluster} procedure (which created $V_j$). Let $m_j$ denote the number of edges incident on vertices of $V_j$. 
Each iteration of the while loop starts by deleting an element $v$ with minimal key (of value $\hat{\ell}_v$) from $N$ ($O(\log |\mathcal{N}_j|)$ time). Then we examine whether to add $v$ to $V_j$ (in $O(1)$ time). If $v$ is rejected, we add $v$ to $U$  (in $O(1)$ time). Otherwise, $v$ is added to $V_j$. In the latter case we go over each neighbor $u$ of $v$. 
If $u\in U$ we do nothing. If $u\in N$, its key $\ell_u$ is updated to be $\min\{\ell_u,\ell_v+w(\{v,u\})\}$. Finally, if $u\in V_\perp\setminus (U\cup N)$, then $u$ is added to $N$ with the key $\ell_u\leftarrow\ell_v+w(\{v,u\})$.
It is easy to verify that all the keys are indeed maintained with the correct values. 
Note that all this processing for $u$ takes only $O(1)$ time. 
In particular, processing all neighbors throughout the \texttt{Fast-Create-Cluster} procedure takes $O(m_j)$ time. All the deletion of elements from the heap $N$ takes $O(|\mathcal{N}_j|\log|\mathcal{N}_j|)$ time. 

Next we bound the total cost of the $k$ calls to the \texttt{Fast-Create-Cluster} procedure. $|\mathcal{N}_j|$ can be bounded from above by both $m_j$ and $n$. 
Moreover, $\sum_jm_j\le 2m$, as every edge is incident on only two vertices.
We provide two upper bounds on the running time:
\begin{align*}
O(n)+\sum_{j=1}^{k}\ensuremath{O(m_{j}+|\mathcal{N}_{j}|\log|\mathcal{N}_{j}|)} & \le O\left(m+\sum_{j=1}^{k}m_{j}\log n\right)=O(m\log n)~.\\
O(n)+\sum_{j=1}^{k}\ensuremath{O(m_{j}+|\mathcal{N}_{j}|\log|\mathcal{N}_{j}|)} & \le O\left(m+\sum_{j=1}^{k}n\log n\right)=O(m+nk\log n)~.
\end{align*}
Thus the total running time of this $k$ calls bounded by $O(m+\min\left\{ m,nk\right\} \cdot\log n)$.
Finally we bound the total runtime of \Cref{alg:fastSPR} without the calls to the  \texttt{Create-Cluster}.
It is straightforward that up to \lineref{line:fastComputeM}, where we create the minor $M$ given the clusters, all computations took $O(n)$ time\footnote{In fact, the sampling of $g_1,\dots,g_k$ takes $O(k)$ time only with high probability. But we will ignore this issue.}.
Using \Cref{clm:BasicProp}, by the end of the for loop in \Cref{alg:fastSPR}, for every $j$ and $v\in V_j$ it holds that $\hat{\ell}_v=d_{G[V_j]}(t_j,v)$.
In order to create the minor graph $M$, we go over all the edges iteratively, for every edge $\{v,u\}\in E$, such that $v\in V_j$, $u\in V_i$ and $i\ne j$. We add an edge $\{t_i,t_j\}$ to $M$ (if it does not exist already). The weight of the edge updated to be the minimum between the current weight ($\infty$ if it does not exist yet) and $\hat{\ell}_v+w(\{v,u\}) +\hat{\ell}_u$ (the keys at the time of extraction from $N$).
It is straightforward that by the end of this procedure we will indeed compute the minor $M$, and each edge  $\{t_i,t_j\}$ in $M$ will have weight $d_{G,V_i+ V_j}(t_i,t_j)$.
This iterative process takes $O(m)$ time. \Cref{thm:fastSPR} now follows.

\section{Lower bounds on the Performance of the Algorithms}\label{sec:LB} 
Chan et. al. \cite{CXKR06} gave a lower bound of $8$ for the distortion in the Steiner Point Removal problem. This lower bound was not improved since.
This section is dedicated to lower bound the performance of the various algorithms which were suggested for the problem. That is, while we do not provide better lower bounds for the Steiner Point Removal problem itself, we are able to lower bound the performance of the algorithms used so far.

In \subsectionref{subsec:NVLB} we prove that our analysis of the \texttt{Noisy-Voronoi} algorithm (\Cref{alg:mainSPR}\&\Cref{alg:fastSPR}) is asymptotically tight. That is, there is a graph family on which the achieved distortion is $\Theta(\log k)$. 
Next, in \subsectionref{subsec:BGLB}, we provide a lower bound on the performance of the \texttt{Ball-growing} algorithm studied by \cite{KKN15,Che18,Fil18}. Specifically, we provide (the same) graph family on which the \texttt{Ball-growing} algorithm incurs $\Omega(\sqrt{\log k})$ distortion.
Recall that in \cite{Fil18}, the author proved that the \texttt{Ball-growing} algorithm finds a minor with distortion $O(\log k)$. That is, while the analysis of the \texttt{Ball-growing} algorithm still might be improved, it cannot be pushed further than $\Omega(\sqrt{\log k})$.

First, we show that the \emph{expected} distortion incurred by the minor returned by the algorithms is large. Then, we deduce that with constant probability the (usual-worst case) distortion is also large.
Formally, both the algorithms are randomized, and thus can be viewed as producing a distribution $\mathcal{D}$ over graph minors. Given such distribution $\mathcal{D}$, the expected distortion of the pair $t,t'$ is $\mathbb{E}_{M\sim\mathcal{D}}\left[\frac{d_M(t,t')}{d_G(t,t')}\right]$. The overall expected distortion is the maximal expected distortion among all terminal pairs.

A final remark: both algorithms used an arbitrary order over the terminals, in contrast to similar algorithms for other problems \cite{CKR01,FRT04} which consider a random order.
Our lower-bounds will still hold even if one replaces the arbitrary order with a random one.

\subsection{Lower bound on the performance of the \texttt{Noisy-Voronoi} algorithm}\label{subsec:NVLB}
The following theorem provides a lower bound on the expected distortion incurred by \Cref{alg:mainSPR}.
The graphs which we will use for the lower bound are trees. As both \Cref{alg:mainSPR} and \Cref{alg:fastSPR} are identical where the input graph is a tree, the lower bound will also hold on \Cref{alg:fastSPR}.
\begin{theorem}\label{thm:VorLB}
	Fix some $k\in\mathbb{N}$. There is a graph $G=(V,E,w)$ with terminal set $K$ of size $k$, such that the expected distortion of the minor returned by \Cref{alg:mainSPR} is $\Omega(\log k)$.
\end{theorem}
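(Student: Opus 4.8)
The plan is to construct a tree on which the \texttt{Noisy-Voronoi} algorithm is forced, with substantial probability, to route a terminal pair through many intermediate terminals, each contributing a non-negligible detour. The natural candidate is a caterpillar-style tree: take a long ``spine'' path of Steiner vertices $v_1,\dots,v_n$ with tiny edge weights, and hang a terminal off (near) each spine vertex, but with the hanging edge lengths chosen so that the distances from a spine vertex to its ``own'' terminal grow geometrically as we move along the spine. Concretely, I would group the spine into $\Theta(\log k)$ blocks, where in block $i$ the distance from spine to the attached terminal is roughly $(1+\delta')^i$ for a suitable $\delta'$, and arrange $k/\log k$ terminals per block so that the total terminal count is $k$. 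The distinguished pair $t,t'$ will be the two extreme terminals (or a terminal near each end of the spine), whose original distance $d_G(t,t')$ is dominated by the two long pendant edges plus the (negligible) spine, while the minor distance must pay for a detour each time the clustering ``jumps'' between two different terminals along the spine.

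The key steps, in order: (1) Fix the tree as above and compute $d_G(t,t')$ — it is essentially the sum of the two pendant lengths at the endpoints, up to lower-order spine terms. (2) Analyze \texttt{Create-Cluster} on this tree. Since the input is a tree, \Cref{alg:mainSPR} and \Cref{alg:fastSPR} coincide, and the cluster grown from terminal $t_j$ with magnitude $R_j$ simply absorbs a contiguous stretch of the spine around $v_j$: vertex $v$ on the spine joins $V_j$ iff $d_G(v,t_j)\le R_j\cdot D(v)$, and because $D(v)$ is (roughly) the pendant length at $v$'s block while $d_G(v,t_j)$ is that pendant length plus a small spine distance, a cluster reaches only $O(R_j)$ blocks in each direction before the ratio test fails. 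With $R_j=(1+\delta)^{g_j}$ and $g_j\sim\Geo(1/5)$, each $R_j=O(1)$ with overwhelming probability, so each cluster spans only $O(1)$ blocks' worth of spine. (3) Conclude that the number of distinct terminals appearing along $P_{t,t'}$ in the induced terminal partition is $\Omega(\log k)$ (at least one per block, since a bounded-magnitude cluster cannot bridge a geometric jump of factor $(1+\delta')$ when $\delta'$ is chosen larger than the largest typical $R_j$). (4) For each such terminal $t_{\ell_j}$, lower bound its contribution $d_G(t_{\ell_j},v^{\ell_j})\ge D(v^{\ell_j})$, which in block $i$ is $\Theta((1+\delta')^i)$; summing the per-block minor-edge weights $d_G(t_{\ell_j},t_{\ell_{j+1}})\ge 2D(v^{\ell_j}) - (\text{spine})$ over the $\Theta(\log k)$ blocks gives a geometric sum of total size $\Theta((1+\delta')^{\log k}) = \Theta(\text{poly})$, whereas $d_G(t,t')$ is only $\Theta((1+\delta')^{\log k})$ as well — so I must be careful to make the \emph{ratio} grow. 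The cleaner route: make all pendant lengths comparable within a global scale but force $\Omega(\log k)$ \emph{separate} detours each of additive size $\Omega(d_G(t,t')/\log k)$ is wrong for a lower bound; instead make each of the $\Omega(\log k)$ detours cost $\Omega(d_G(t,t'))$ individually is too strong. The correct design (this is the main obstacle below) is to have $\Theta(\log k)$ detours whose costs \emph{sum} to $\Omega(\log k)\cdot d_G(t,t')$: e.g. all pendant edges of equal length $1$, spine edges of length $\eps$, but then a single bounded-$R$ cluster already swallows many spine vertices — so one needs the pendant lengths to be \emph{equal} yet the terminals spaced along the spine at distance $\approx$ (pendant length), i.e. spine edges not tiny but comparable to pendants, tuned so that with constant probability the magnitudes are too small to merge adjacent terminals' territories.

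The part I expect to be the real obstacle is choosing the spine/pendant length ratio so that two competing requirements hold simultaneously: adjacent terminals along the spine are \emph{not} merged into one cluster with constant probability (needs spine-gap large relative to typical $R_j\approx(1+\delta)^{O(1)}=1+o(1)$), yet $d_G(t,t')$ does \emph{not} itself blow up by the $\log k$ factor from the accumulated spine length (needs spine-gap small relative to pendant length). Since $R_j$ is only $1+o(1)$ with constant probability but the \emph{decision inequality} $d_G(v,t_j)\le R_j D(v)$ compares pendant$+$spine-gap against $R_j\cdot$pendant, the merge fails as soon as spine-gap $> \delta\cdot$pendant, i.e. spine-gap $\approx \frac{1}{\ln k}\cdot$pendant suffices; then the accumulated spine over $\Theta(\log k)$ blocks contributes only $O(\log k)\cdot\frac{1}{\log k}\cdot$pendant $=O($pendant$)$, which is comparable to $d_G(t,t')$, not dominant — good. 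So I would set pendant length $1$, spine-gap $\Theta(1/\log k)$ between consecutive terminal attachment points, $k$ terminals, $d_G(t,t')=\Theta(1)$, and each of the $\Omega(\log k)$ detours along $P_{t,t'}$ contributes an edge of weight $\ge 2\cdot 1 - O(1/\log k)$ in the minor, so $d_M(t,t')=\Omega(\log k)=\Omega(\log k)\cdot d_G(t,t')$. Finally I would argue the \emph{expected} distortion bound: the event that a given consecutive pair of terminals is not merged has probability bounded below by a constant (it needs only $R_j\le 1+\delta'$ type bound, which has probability $\ge 1-p$ by the same $\Geo$ computation as in \Cref{clm:failProb}), these events along disjoint spine segments are independent, so by linearity of expectation a constant fraction of the $\Omega(\log k)$ blocks contribute, giving $\E[d_M(t,t')]=\Omega(\log k)\cdot d_G(t,t')$; the promotion from expected to worst-case-with-constant-probability distortion then follows by Markov's inequality on the (bounded, since all $R_j=O(1)$ w.h.p.) distortion random variable, exactly as the paper indicates it will do just after the theorem statement.
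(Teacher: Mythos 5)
Your final construction is exactly the paper's: the caterpillar of \Cref{fig:VoronoiFail} with unit pendant edges and spine gaps $\eps=\Theta(\delta)=\Theta(1/\log k)$, with the indicator that each $t_j$ retains its own spine vertex having constant probability and linearity of expectation finishing the argument (the paper takes the extreme pair $t_1,t_k$ with $d_G=\Theta(k/\log k)$ and $\E[d_M]=\Omega(k)$, whereas you take a pair $\Theta(\log k)$ apart; both give the same ratio). The only point to tighten is that bounding the probability that $v_j$ stays in $V_j$ requires a union bound over \emph{all} other terminals $t_i$, not just the adjacent ones --- a distant $t_i$ with large $R_i$ can capture $v_j$ --- and it is the geometric tail $\Pr[R_i\ge 1+|i-j|\eps]\le(1-p)^{\Omega(|i-j|\eps/\delta)}$ that makes this sum to a constant strictly below $1$; also, the non-merge events are not independent (they share the $R_i$'s), but you only need linearity of expectation, so this costs nothing.
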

\begin{proof}	
	We will assume that $k$ is large enough, as otherwise $1=\Omega(\log k)$ and hence every graph with $k$ terminals provides a valid lower bound. 
	Let $G_k$ be the graph described in 
	\Cref{fig:VoronoiFail} with parameter $\eps=14\delta=\Theta(\frac{1}{\log k})$.
	Let $X_{j}$ be an indicator for
	the event $v_{j}\in V_j$, that is $t_j$ covers $v_{j}$. For $X_j$ to occur, it is enough that for every $i\ne j$, $d_G(t_i,v_j)>R_i\cdot D(v_j)$. That is $R_i<1+|i-j|\cdot\eps$. 
	By the definition of $R_i$, 
	\[
	\Pr\left[R_{i}\ge1+\left|i-j\right|\epsilon\right]=\Pr\left[g_{i}\ge\log_{1+\delta}\left(1+\left|i-j\right|\epsilon\right)\right]=\left(1-p\right)^{\left\lceil \log_{1+\delta}\left(1+\left|i-j\right|\epsilon\right)-1\right\rceil }~.
	\]	
	For $i$ such that $|i-j|<\frac1\eps$, it holds that $\log_{1+\delta}\left(1+\left|i-j\right|\epsilon\right)=\frac{\ln\left(1+\left|i-j\right|\epsilon\right)}{\ln\left(1+\delta\right)}\ge\frac{\left|i-j\right|\epsilon/2}{\delta}$
	. While for $i$ such that $|i-j|\ge\frac1\eps$, $\log_{1+\delta}\left(1+\left|i-j\right|\epsilon\right)\ge\frac{\ln2}{\ln1+\delta}\ge\frac{1}{2\delta}$.
	We conclude 	
	\begin{align*}
	\Pr\left[X_{i}\right] & \ge\Pr\left[\forall_{j\ne i}\left(R_{j}<1+\left|i-j\right|\epsilon\right)\right]\\
	& \ge1-\sum_{j\ne i}\Pr\left[R_{j}\ge1+\left|i-j\right|\epsilon\right]\\
	& \ge1-2\sum_{i=1}^{\left\lfloor \frac{1}{\epsilon}\right\rfloor }\left(\left(1-p\right)^{\frac{i\epsilon/2}{\delta}-1}\right)-k\left(1-p\right)^{\frac{1}{2\delta}-1}~.
	\end{align*}
	Now, $\sum_{i=1}^{\left\lfloor \frac{1}{\epsilon}\right\rfloor }\left(1-p\right)^{\frac{i\epsilon/2}{\delta}}\le\sum_{i=1}^{\infty}\left(\left(1-p\right)^{7}\right)^{i}\le\sum_{i=1}^{\infty}\frac{1}{4^{i}}=\frac{1}{4}\frac{1}{1-\frac{1}{4}}=\frac{1}{3}$. While $k\left(1-p\right)^{\frac{1}{2\delta}}=k\left(\frac{4}{5}\right)^{10\ln k}=k^{1-10\ln\frac{5}{4}}\le\frac{1}{k}$. In particular 
	$\Pr\left[X_{i}\right]\ge1-(1-p)^{-1}\cdot\left(2\cdot\frac{1}{3}+\frac{1}{k}\right)=\Omega(1)$.
	
	Set $X=\sum_{i=2}^{k-1}X_i$. By linearity of expectation, $\mathbb{E}[X]=\Omega(k)$. Note that the distance from $t_{1}$ to
	$t_{k}$ in the minor graph $M_k$ equals $2+\left(k-1\right)\epsilon+2X$. We
	conclude
	\[	\mathbb{E}\left[\frac{d_{M_{k}}(t_{1},t_{m})}{d_{G_{k}}(t_{1},t_{m})}\right]=\frac{2+\left(k-1\right)\epsilon+2\mathbb{E}\left[X\right]}{2+\left(k-1\right)\epsilon}=\frac{\Omega(k)}{O(k\epsilon)}=\Omega\left(\frac{1}{\epsilon}\right)=\Omega(\log k)~.\qedhere
	\]
\end{proof}
\begin{corollary}\label{cor:VorLB}
	Fix some $k\in\mathbb{N}$. There is a graph $G=(V,E,w)$ with terminal set $K$ of size $k$, such that with constant probability, the distortion incurred by the minor returned by \Cref{alg:mainSPR} is $\Omega(\log k)$.
\end{corollary}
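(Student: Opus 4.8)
The plan is to reuse the graph $G_k$ and all the notation from the proof of \Cref{thm:VorLB}, and to upgrade its expectation bound into a constant-probability bound via a reverse Markov inequality. Recall that there we set $X=\sum_{i=2}^{k-1}X_i$, with $X_i$ the indicator that $t_i$ covers $v_i$, and that
\[
\frac{d_{M_k}(t_1,t_k)}{d_{G_k}(t_1,t_k)}=\frac{2+(k-1)\eps+2X}{2+(k-1)\eps}\ge 1+\frac{2X}{2+k\eps}~.
\]
Since $\eps=\Theta(1/\log k)$, we have $2+k\eps=\Theta(k/\log k)$, so it suffices to show that $X=\Omega(k)$ holds with constant probability; this then forces distortion $\Omega(\log k)$ on the pair $(t_1,t_k)$, and hence on the whole minor $M_k$.

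To obtain that, I would combine the deterministic bound $0\le X\le k$ with the estimate $\mathbb{E}[X]\ge c k$ (for an absolute constant $c>0$) that is already established in the proof of \Cref{thm:VorLB} via $\Pr[X_i]=\Omega(1)$ and linearity of expectation. The reverse Markov inequality then reads: for any $a<\mathbb{E}[X]$,
\[
\mathbb{E}[X]\le a\cdot\Pr[X\le a]+k\cdot\Pr[X>a]=a+(k-a)\,\Pr[X>a]~,
\]
so $\Pr[X>a]\ge\frac{\mathbb{E}[X]-a}{k-a}$. Taking $a=ck/2$ yields $\Pr\!\left[X>\tfrac{ck}{2}\right]\ge\frac{c/2}{1-c/2}=\Omega(1)$. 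Plugging $X\ge ck/2$ into the first display gives distortion at least $1+\frac{ck}{2+k\eps}=\Omega(1/\eps)=\Omega(\log k)$, which is exactly the claimed statement.

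I do not expect a genuine obstacle here; the only point worth flagging is that the indicators $X_i$ are not independent — a far-away terminal with a large magnitude $R_i$ can absorb a long run of consecutive $v_i$'s in a single step of \texttt{Create-Cluster} — so a direct Chernoff-type concentration bound on $X$ is not available. The reverse Markov inequality sidesteps this entirely, relying solely on the trivial upper bound $X\le k$ and the first-moment estimate $\mathbb{E}[X]=\Omega(k)$ carried over from the proof of \Cref{thm:VorLB}.
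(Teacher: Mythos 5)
Your proof is correct and takes essentially the same route as the paper: both arguments upgrade the expectation bound from \Cref{thm:VorLB} to a constant-probability bound via a reverse Markov inequality against a trivial deterministic upper bound. The only (cosmetic) difference is that you apply reverse Markov to the count $X\le k$, whereas the paper applies it directly to the distortion ratio, which is deterministically at most $\frac{2k-2+(k-1)\eps}{2+(k-1)\eps}$.
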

\begin{proof}
	We will use the graph and notations from the proof of \Cref{thm:VorLB}. Set $\mu=\mathbb{E}\left[\frac{d_{M_{k}}(t_{1},t_{m})}{d_{G_{k}}(t_{1},t_{m})}\right]=\Omega(\log k)$.
	Note the largest possible distortion 
	is $\frac{2k-2+\left(k-1\right)\epsilon}{2+\left(k-1\right)\epsilon}=c\cdot\mu$,
	for some constant $c\ge1$ (this distortion occurred exactly when each
	vertex $v_{j}$ belongs to $V_{j}$). Denote by $\chi$ the event
	that $\frac{d_{M_{k}}(t_{1},t_{m})}{d_{G_{k}}(t_{1},t_{m})}\ge\frac{1}{2}\mu$.
	Then
	\[
	\mu=\mathbb{E}\left[\frac{d_{M_{k}}(t_{1},t_{m})}{d_{G_{k}}(t_{1},t_{m})}\right]\le\Pr\left[\chi\right]\cdot c\mu+\left(1-\Pr\left[\chi\right]\right)\cdot\frac{1}{2}\mu\,,
	\]
	therefore
	\[
	\Pr\left[\chi\right]\ge\frac{1-\frac{1}{2}}{c-\frac{1}{2}}\ge\frac{1}{2c}=\Omega(1)\,.
	\]
	Therefore, with constaint probability, the distortion is at least
	$\frac{1}{2}\mu=\Omega(\log k)$.
\end{proof}

\subsection{Lower Bound on the Performance of the \texttt{Ball-Growing} Algorithm}\label{subsec:BGLB}
In this subsection we provide a lower bound on the performance of the  \texttt{Ball-Growing} algorithm. 
For completeness, we attach in \Cref{sec:BallGrowing} a full description of the \texttt{Ball-Growing} algorithm as it appeared in
\cite{Fil18}. In particular, we will use the notations defined there.
The \texttt{Ball-Growing} as described in \cite{Fil18} also had a modification step. As our lower bound example is a tree, this modification has no impact on the minor returned by the algorithm, and thus we can ignore it.
Formally, a claim similar to \Cref{clm:SameMinor} can be proven.

\begin{theorem}\label{thm:BGLB}
	Fix some $k\in\mathbb{N}$. There is a graph $G=(V,E,w)$ with terminal set $K$ of size $k$, such that the expected distortion of the minor returned by the \texttt{Ball-Growing} algorithm is $\Omega(\sqrt{\log k})$.
\end{theorem}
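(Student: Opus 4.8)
The plan is to build a tree example analogous to the one used for the \NV lower bound, but designed so that the \emph{iterative, slowly-growing} nature of the \texttt{Ball-Growing} algorithm forces a $\sqrt{\log k}$ blow-up rather than only a constant one. The natural candidate is again a long path of Steiner vertices $v_1,\dots,v_n$ with tiny edge weights $\eps$, with a terminal $t_j$ hanging off each $v_j$ via a unit-weight edge, but now I would choose the number of Steiner vertices $n$ to be a suitable polynomial in $k$ (or, equivalently, subdivide and choose $\eps$ so that the ``diameter'' of the path in $\eps$-units is polynomially large), so that the ball-growing process runs for $\Theta(\log k)$ rounds before the radii are large enough to swallow the whole path. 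The key point is that in the \texttt{Ball-Growing} algorithm, in round $\ell$ the radii are drawn from an exponential distribution whose mean grows geometrically (roughly like $(1+\delta)^\ell$ or similar, per the description in \Cref{sec:BallGrowing}); so a fixed Steiner vertex $v_j$ is ``claimed'' by some terminal only once the round index $\ell$ reaches roughly $\log(1+\text{dist from }v_j\text{ to nearest terminal})$, and the identity of the claiming terminal is essentially the terminal whose (random, round-$\ell$) ball first reaches $v_j$.

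The core of the argument is a counting/anti-concentration estimate: I want to show that the expected number of Steiner vertices $v_j$ that end up in their \emph{own} cluster $V_j$ (i.e. claimed by $t_j$ rather than by a neighbor) is $\Omega(n/\sqrt{\log k})$ rather than the $\Omega(n)$ we got in the \NV case — but that this still translates into an $\Omega(\sqrt{\log k})$ distortion because here the relevant comparison is different. More precisely, as in \Cref{thm:VorLB}, the distance $d_{M}(t_1,t_n)$ in the induced minor is $2+(n-1)\eps + 2X$ where $X$ counts the ``backtracks'' — roughly the number of $j$ with $v_j\in V_j$ sandwiched between clusters of other terminals — while $d_G(t_1,t_n)=2+(n-1)\eps$. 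So I need $\E[X]=\Omega(\sqrt{\log k})\cdot(n\eps)$, i.e. $\E[X]=\Omega(n\eps\sqrt{\log k})$; with $n\eps=\Theta(1/\sqrt{\log k})$ after appropriate scaling this becomes $\E[X]=\Omega(1)\cdot$ something — I would instead keep $n\eps=\Theta(1)$ and aim for $\E[X]=\Omega(\sqrt{\log k})$. The mechanism producing $\sqrt{\log k}$: group the path into $\Theta(\log k)$ ``blocks,'' one per round $\ell$ at which vertices at that distance scale get claimed; within each block, by a ballot/reflection-type argument the number of times the claiming terminal ``changes sides'' (each change contributing a backtrack pair) is on the order of the square root of the block length in the relevant units, because the round-$\ell$ ball endpoints perform a random-walk-like fluctuation of standard deviation $\sqrt{\ell}$ over the blocks processed so far. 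Summing $\sum_{\ell=1}^{\Theta(\log k)} \Theta(1)$ contributions, or more carefully $\Theta(\sqrt{\log k})$ from the last block's accumulated fluctuation, yields $\E[X]=\Omega(\sqrt{\log k})$.

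Concretely the steps I would carry out are: (1) fix the tree $G_k$: a path $v_1,\dots,v_n$ with $n=k$, unit pendant edges to terminals $t_1,\dots,t_k$, edge weights $\eps=\Theta(1/\log k)$ along the path, possibly subdividing so the modification step of \cite{Fil18} is vacuous; (2) recall from \Cref{sec:BallGrowing} the exact round schedule of radii, and identify for each vertex $v_j$ the (random) round $\rho_j$ in which it is first claimed and the (random) terminal $c_j$ that claims it — arguing $c_j$ is determined by which of the competing terminals' round-$\rho_j$ exponential radii is largest relative to distance; (3) define $X=\#\{j: c_j=j\}$ (a $v_j$ claimed by itself, hence between two foreign clusters on the path) and verify, exactly as in \Cref{thm:VorLB}, that $d_{M_k}(t_1,t_k)\ge (n-1)\eps + 2X$ up to lower-order terms while $d_{G_k}(t_1,t_k)=2+(n-1)\eps=O(1)$; (4) prove $\E[X]=\Omega(\sqrt{\log k})$ via the fluctuation/ballot argument above — this is the heart; (5) convert the expectation bound to a constant-probability worst-case bound by the same averaging trick as in \Cref{cor:VorLB}, using that the maximum possible distortion is $O(\sqrt{\log k})$ on this instance. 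The main obstacle I expect is step (4): controlling the joint distribution of the claiming terminals $c_j$ across the path is genuinely delicate because the round-$\ell$ radii from different terminals are independent but the resulting ball boundaries are highly correlated along the path; the right move is probably to expose the radii round by round and, conditioning on the configuration of already-claimed blocks, bound from below the number of new ``sign changes'' created in the current round by comparing two specific independent exponential radii and invoking a Paley–Zygmund-type second-moment argument on the number of such changes, so that the $\sqrt{\cdot}$ emerges from the variance rather than the mean of per-round contributions.
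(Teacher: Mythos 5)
Your choice of instance (the path of Steiner vertices with pendant terminals) and the reduction of distortion to counting self-claimed vertices $X=\#\{j:v_j\in V_j\}$ match the paper. But the heart of the argument --- why the answer is $\sqrt{\log k}$ --- is exactly the step you leave open, and the mechanism you propose for it is not the right one. The paper's proof has no random-walk, ballot, or sign-change argument at all. Instead it observes that the radius of $t_j$ at the end of round $m=\log_r 3-1$ is $\mathcal{R}_j=\sum_{\ell=0}^{m}q_j^\ell$ with the $q_j^\ell\sim\Exp(D\cdot r^\ell)$ independent, computes $\E[\mathcal{R}_j]=2$ and $\mathbb{V}[\mathcal{R}_j]=O(1/\ln k)$, and then chooses the inter-terminal spacing $\eps=\Theta(1/\sqrt{\log k})$ to match the standard deviation $\Theta(1/\sqrt{\log k})$ of these radii. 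With that spacing, Chebyshev gives $\Pr[\mathcal{R}_j\ge d(t_j,v_j)]\ge 1-\mathbb{V}[\mathcal{R}]/\eps^2$ and $\Pr[\mathcal{R}_{j'}\ge d(t_{j'},v_j)]\le \mathbb{V}[\mathcal{R}]/((2|j-j'|-1)^2\eps^2)$, and a union bound over $j'$ (summable because of the quadratic decay) shows $\Pr[X_j]=\Omega(1)$ for \emph{every} $j$. Hence $\E[X]=\Omega(k)$ and the distortion is $\Omega(k)/O(k\eps)=\Omega(1/\eps)=\Omega(\sqrt{\log k})$, exactly as in \Cref{thm:VorLB} but with the coarser spacing forced by the larger variance of the Ball-Growing radii. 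In other words, the $\sqrt{\log k}$ comes from the standard deviation of a sum of $\Theta(\log k)$ independent exponentials of mean $\Theta(1/\log k)$, not from fluctuations of cluster boundaries along the path.

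Your route has two concrete problems beyond being unfinished. First, step (4) as sketched (exposing radii round by round, counting sign changes via a reflection/Paley--Zygmund argument) would require controlling the joint law of the claiming terminals $c_j$, which you yourself flag as delicate; there is no indication this yields $\E[X]=\Omega(\sqrt{\log k})$ with your parameters, and nothing in the paper suggests it is needed. Second, your step (5) is broken by your own parameter choice: with $n=k$ and $n\eps=\Theta(1)$ (i.e.\ $\eps=\Theta(1/k)$), the maximum possible distortion on this instance is $\frac{2k-2+(k-1)\eps}{2+(k-1)\eps}=\Theta(k)$, not $O(\sqrt{\log k})$, so the averaging trick of \Cref{cor:VorLB} only gives $\Pr[\chi]=\Omega(\sqrt{\log k}/k)$ rather than a constant. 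The fix for both issues is the paper's: keep one Steiner vertex per terminal, set $\eps=\Theta(1/\sqrt{\log k})$, and get $\Pr[X_j]=\Omega(1)$ per vertex directly from the variance computation, after which the maximum distortion is $\Theta(1/\eps)=\Theta(\sqrt{\log k})$ and the averaging step goes through verbatim.
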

\begin{proof}
	We will use the graph described in \Cref{fig:VoronoiFail} with modified parameters: the weight of an edge between terminal to Steiner vertex will be $2-\eps$ while the weight of an edge between two Steiner vertices will be $2\eps$ for $\eps$ to be specified later. Note that the \texttt{Ball-Growing} algorithm assumes that the minimal distance between a terminal to a Steiner vertex in the input graph is exactly $1$. In order to satisfy this condition we will add additional Steiner vertex as a leaf connected to $t_1$ via an edge of unit weight. Note that this new vertex has no impact on the resulting minor whatsoever, and therefore can be completely ignored.
	
	As previously, we denote by $X_j$ the indicator for the event $v_j\in V_j$.
	Following the analysis of \Cref{thm:BGLB}, if we will prove that $\Pr[X_j]=\Omega(1)$ (for arbitrary $j$) it will imply expected distortion of $\Omega(\frac1\eps)$.
	
	Let $\mathcal{R}_{j}$ be equal to $R_{j}$ (the magnitude of $t_{j}$)
	at the end of the $m=\log_{r}3-1$ round. For simplicity we will assume that $m$ is an integer, otherwise the analysis will go trough after slight modification of the parameters. Recall that $\mathcal{R}_{j}=\sum_{\ell=0}^{m}q_j^{\ell}$ where $q_j^\ell$ distributed
	according to $\text{Exp}(D\cdot r^{\ell})$. Here $r=1+\frac{\delta}{\ln k}$, $\delta=\frac{1}{20}$, $D=\frac{\delta}{\ln k}$, and all the $q_j^\ell$ are independent. 
	It holds that
	\begin{align*}
	\mathbb{E}\left[\mathcal{R}_{j}\right] & =\sum_{\ell=0}^{m}D\cdot r^{\ell}=D\cdot\frac{r^{m+1}-1}{r-1}=2\,.\\
	\mathbb{V}\left[\mathcal{R}_{j}\right] & =\mathbb{V}\left[\sum_{\ell=0}^{m}q_{j}^{\ell}\right]=\sum_{\ell=0}^{m}\mathbb{V}\left[q_{j}^{\ell}\right]=\sum_{\ell=0}^{m}\left(D\cdot r^{\ell}\right)^{2}\\
	& =D^{2}\cdot\frac{r^{2(m+1)}-1}{r^{2}-1}=\left(\frac{\delta}{\ln k}\right)^{2}\cdot\frac{9-1}{2\cdot\frac{\delta}{\ln k}+\left(\frac{\delta}{\ln k}\right)^{2}}\le4\cdot\frac{\delta}{\ln k}=O\left(\frac{1}{\ln k}\right)~.
	\end{align*}
	Where we used linearity of expectation and independence.
	In order that $X_j$ will occur, it is enough that $\mathcal{R}_{j}\ge d(t_{j},v_{j})$, while for every $j'\ne j$, $\mathcal{R}_{j}< d(t_{j'},v_{j})$.
	Using Chebyshev inequality, 
	\begin{align*}
	\Pr\left[\mathcal{R}_{j}\ge d(t_{j},v_{j})\right] & =\Pr\left[\mathcal{R}_{j}\ge2-\epsilon\right]\ge\Pr\left[\left|\mathcal{R}_{j}-\mathbb{E}\left[\mathcal{R}_{j}\right]\right|<\epsilon\right]\ge1-\frac{\mathbb{V}\left[\mathcal{R}\right]}{\epsilon^{2}}~.\\
	\Pr\left[\mathcal{R}_{j'}\ge d(t_{j'},v_{j})\right] & \le\Pr\left[\left|\mathcal{R}_{j'}-\mathbb{E}\left[\mathcal{R}_{j'}\right]\right|\ge\left(2\left|j-j'\right|-1\right)\epsilon\right]\le\frac{\mathbb{V}\left[\mathcal{R}\right]}{\left(2\left|j-j'\right|-1\right)^{2}\cdot\epsilon^{2}}~.
	\end{align*}
	By union bound, the probability that for some $j'\ne j$,
	$\mathcal{R}_{j'}\ge d(t_{j'},v_{j})$ is bounded by
	\[
	\sum_{j\ne j'}\Pr\left[\mathcal{R}_{j'}\ge d(t_{j'},v_{j})\right]<\frac{\mathbb{V}\left[\mathcal{R}\right]}{\epsilon^{2}}\cdot2\cdot\sum_{i=1}^{\infty}\frac{1}{i^{2}}=\frac{\mathbb{V}\left[\mathcal{R}\right]}{\epsilon^{2}}\cdot\frac{\pi^{2}}{3}~.
	\]
	We conclude
	\begin{align*}
	\Pr\left[X_{j}\right] & \ge\Pr\left[\mathcal{R}_{j'}\ge d(t_{j'},v_{j})\right]\cdot\left(1-\sum_{j\ne j'}\Pr\left[\mathcal{R}_{j'}\ge d(t_{j'},v_{j})\right]\right)\\
	& \ge\left(1-\frac{\mathbb{V}\left[\mathcal{R}\right]}{\epsilon^{2}}\right)\left(1-\frac{\mathbb{V}\left[\mathcal{R}\right]}{\epsilon^{2}}\cdot\frac{\pi^{2}}{3}\right)=1-O\left(\frac{1}{\epsilon^{2}\ln k}\right)=\Omega(1)~,
	\end{align*}
	for $\eps=\Theta(\frac{1}{\sqrt{\log k}})$. The Theorem now follows.
\end{proof}
Following the lines of the proof of \Cref{cor:VorLB}, we conclude:
\begin{corollary}
	Fix some $k\in\mathbb{N}$. There is a graph $G=(V,E,w)$ with terminal set $K$ of size $k$, such that with constant probability, the distortion of the minor returned by the \texttt{Ball-Growing} algorithm is $\Omega(\sqrt{\log k})$
\end{corollary}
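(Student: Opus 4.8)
The plan is to run the exact same reverse-Markov (averaging) argument used for \Cref{cor:VorLB}, now applied to the graph and random variables constructed in the proof of \Cref{thm:BGLB}. So I would reuse all of that notation: $G_k$ is the tree of \Cref{fig:VoronoiFail} with the modified weights ($2-\eps$ on terminal-to-Steiner edges, $2\eps$ on Steiner-to-Steiner edges, plus the irrelevant unit-weight leaf at $t_1$), $\eps=\Theta(1/\sqrt{\log k})$, $X_j$ is the indicator of $v_j\in V_j$, and $X=\sum_{j=2}^{k-1}X_j$. Set $\mu=\mathbb{E}\left[\frac{d_{M_k}(t_1,t_k)}{d_{G_k}(t_1,t_k)}\right]$, which by \Cref{thm:BGLB} equals $\Omega(\sqrt{\log k})$.

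First I would note that the distortion ratio realized between $t_1$ and $t_k$ is an \emph{affine, increasing} function of $X$ with a bounded range: $d_{M_k}(t_1,t_k)=\big((k-1)\cdot 2\eps+2(2-\eps)\big)+2(2-\eps)\cdot X$ while $d_{G_k}(t_1,t_k)$ is the fixed quantity $(k-1)\cdot 2\eps+2(2-\eps)$, and $0\le X\le k-2$. Hence the ratio lies in the interval $[1,\,c\mu]$ for an absolute constant $c\ge 1$, the maximum $c\mu$ being attained precisely when every $v_j$ lands in $V_j$ (this is the same observation as in \Cref{cor:VorLB}, where the maximal distortion was $\frac{2k-2+(k-1)\eps}{2+(k-1)\eps}=c\mu$).

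Then I would apply the reverse-Markov step verbatim: let $\chi$ be the event $\frac{d_{M_k}(t_1,t_k)}{d_{G_k}(t_1,t_k)}\ge\frac12\mu$. Conditioning on $\chi$ and its complement and using the range bound,
\[
\mu=\mathbb{E}\left[\frac{d_{M_k}(t_1,t_k)}{d_{G_k}(t_1,t_k)}\right]\le\Pr[\chi]\cdot c\mu+(1-\Pr[\chi])\cdot\tfrac12\mu,
\]
so $\Pr[\chi]\ge\frac{1-\frac12}{c-\frac12}=\frac{1}{2c-1}=\Omega(1)$. Therefore with constant probability the distortion is at least $\frac12\mu=\Omega(\sqrt{\log k})$, which is exactly the claim.

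I do not expect a real obstacle here: the argument is a one-paragraph corollary of \Cref{thm:BGLB} and is structurally identical to \Cref{cor:VorLB}. The only point requiring a line of care is the second step — verifying that the distortion is a bounded random variable whose maximum is $O(\mu)$ — which follows immediately from writing $d_{M_k}(t_1,t_k)$ as the affine function of $X$ above together with the trivial bound $X\le k-2$; everything else is the reverse-Markov inequality.
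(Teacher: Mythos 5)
Your proposal is correct and is exactly the paper's intended argument: the paper proves this corollary with the single line ``Following the lines of the proof of \Cref{cor:VorLB}, we conclude,'' i.e., the same reverse-Markov step applied to the expected-distortion bound of \Cref{thm:BGLB}, using that the distortion is a bounded increasing function of $X$ whose maximum is $O(\mu)$. Your write-up just makes that explicit.
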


\begin{remark}
	\Cref{thm:BGLB} can also be proved using concentration bounds. However, the lower bound remain $\Omega(\sqrt{\log k})$ so we provided the more basic proof using Chebyshev inequality.
	Nevertheless, the curious reader can find the required concentration bounds for such a proof in \Cref{appendix:ConcentrationBounds}.  
\end{remark}

\section{Discussion}\label{sec:Discussion}
In this paper we proved an $O(\log k)$ upper bound for the Steiner Point Removal problem, improving the previous $O(\log^2 k)$ upper bound by \cite{Che18}.
The lower bound is still only $8$ \cite{CXKR06}. Closing this gap remains an intriguing open problem.
Both the \texttt{Noisy-Voronoi} and the \texttt{Ball-growing} algorithms proceed by creating random terminal partitions. These partitions are determined using random parameters, which are chosen with no consideration whatsoever of the input graph $G$.
At contrast, the optimal tree algorithm of \cite{G01} is a deterministic recursive algorithm which make decisions after considering the tree structure at hand. 
It seems that the input-oblivious approach of the  \texttt{Noisy-Voronoi} and the \texttt{Ball-growing} algorithms is doomed for failure, and in fact, both these algorithms already fail to achieve constant distortion on a simple tree example.
As a conclusion, input-sensitive approaches seem to be more promising for future attempts to resolve the SPR problem.

We would like to emphesis two additional open problems:
\begin{itemize}
	\item Expected distortion: Currently the state of the art for usual (worst-case) distortion, and expected distortion for the SPR problem is the same. Both have $O(\log k)$ upper bound and $\Omega(1)$ lower bound. There are cases where much better results can be achieved for expected distortion (e.g. embed a graph into a tree must incur distortion  $\Omega(n)$, while a distribution over embeddings into trees can have expected distortion $O(\log n)$ \cite{FRT04}). What are the right bounds for expected distortion in the SPR problem?
	\item Special graph families: \cite{BG08} showed that constant distortion for the SPR problem can be achieved on outer-planar graphs. It will be very interesting to achieve better upper bounds for planar graphs, and more generally for minor-free graphs, bounded treewidth graphs etc. In the expected distortion regime, an $O(1)$ upper bound is already known \cite{EGKRTT14} for minor-free graphs.
\end{itemize}

\section{Acknowledgments}
The author would like to thank his advisors: to Ofer Neiman, for fruitful discussions, and to Robert Krauthgamer for useful comments.
	{\small
		\bibliographystyle{alpha}
		\bibliography{SteinerBib}
	}
	
	\appendix
	
\section{Concentration Bounds for Sum of Exponential Distributions}\label{appendix:ConcentrationBounds}
\begin{lemma}\label{lem:TightExpConcentration}
	Suppose $X_{1},\dots,X_{n}$'s are independent random
	variables, where each $X_{i}$ is distributed according to $\Exp(\lambda_{i})$.
	Let $X=\sum_{i}X_{i}$ and $\lambda_{M}=\max_i\lambda_{i}$. Set $\mu=\mathbb{E}\left[X\right]=\sum_{i}\lambda_{i}$. \\
	For $0<t\le\frac{1}{2\lambda_{M}}$, and $\alpha\ge 2t\lambda_M$:
	\begin{align*}
	\Pr\left[X\ge(1+\alpha)\mu\right] & \le\exp\left(-t\mu\cdot\left(\alpha-2t\lambda_M\right)\right)~.\\
	\Pr\left[X\le(1-\alpha)\mu\right] & \le\exp\left(-t\mu\left(\alpha-t\lambda_{M}\right)\right)~.
	\end{align*}
\end{lemma}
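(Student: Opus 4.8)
The plan is to prove \Cref{lem:TightExpConcentration} via the standard Chernoff/exponential-moment method, tailored to exponential random variables. For the upper tail, I would introduce a parameter $t>0$ (the same $t$ appearing in the statement, constrained by $0<t\le\frac{1}{2\lambda_M}$ so that the moment generating functions below converge) and write, by Markov's inequality applied to $e^{tX}$ and independence,
\[
\Pr\left[X\ge(1+\alpha)\mu\right]=\Pr\left[e^{tX}\ge e^{t(1+\alpha)\mu}\right]\le e^{-t(1+\alpha)\mu}\prod_{i=1}^n\E\left[e^{tX_i}\right].
\]
Since $X_i\sim\Exp(\lambda_i)$, we have $\E[e^{tX_i}]=\frac{1}{1-t\lambda_i}$ provided $t\lambda_i<1$, which holds because $t\lambda_i\le t\lambda_M\le\frac12$. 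Taking logarithms, the task reduces to bounding $\sum_i\ln\frac{1}{1-t\lambda_i}=\sum_i-\ln(1-t\lambda_i)$ from above. The key elementary inequality is that for $0\le x\le\frac12$, $-\ln(1-x)\le x+x^2$ (one can verify this by comparing derivatives or noting $-\ln(1-x)=x+\frac{x^2}{2}+\frac{x^3}{3}+\dots\le x+x^2(\frac12+\frac13\cdot\frac12+\dots)\le x+x^2$). Applying this with $x=t\lambda_i$ gives $\sum_i-\ln(1-t\lambda_i)\le t\sum_i\lambda_i+t^2\sum_i\lambda_i^2\le t\mu+t^2\lambda_M\mu=t\mu(1+t\lambda_M)$, using $\sum_i\lambda_i^2\le\lambda_M\sum_i\lambda_i=\lambda_M\mu$. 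Combining,
\[
\Pr\left[X\ge(1+\alpha)\mu\right]\le\exp\left(-t(1+\alpha)\mu+t\mu(1+t\lambda_M)\right)=\exp\left(-t\mu(\alpha-t\lambda_M)\right),
\]
which is already slightly stronger than the claimed bound with $-t\mu(\alpha-2t\lambda_M)$; in particular the stated inequality follows a fortiori (and the factor $2$ gives a clean condition $\alpha\ge 2t\lambda_M$ ensuring the exponent is nonpositive). I would then note that \Cref{lem:ExpConcentration} is recovered by the substitution $\alpha=\frac a\mu-1$, $t=\frac{1}{2\lambda_M}$, as the excerpt already indicates.

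For the lower tail, the approach is symmetric but uses $e^{-tX}$: by Markov,
\[
\Pr\left[X\le(1-\alpha)\mu\right]=\Pr\left[e^{-tX}\ge e^{-t(1-\alpha)\mu}\right]\le e^{t(1-\alpha)\mu}\prod_i\E\left[e^{-tX_i}\right],
\]
and here $\E[e^{-tX_i}]=\frac{1}{1+t\lambda_i}$ for all $t>0$ with no convergence restriction. Now I need an upper bound on $\sum_i-\ln(1+t\lambda_i)$, equivalently a lower bound on $\sum_i\ln(1+t\lambda_i)$; the relevant inequality is $\ln(1+x)\ge x-\frac{x^2}{2}$ for $x\ge0$ (again from the alternating-ish series, or convexity). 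This yields $\sum_i\ln(1+t\lambda_i)\ge t\mu-\frac{t^2}{2}\sum_i\lambda_i^2\ge t\mu-\frac{t^2}{2}\lambda_M\mu$, hence $\sum_i-\ln(1+t\lambda_i)\le-t\mu+\frac{t^2}{2}\lambda_M\mu$, and
\[
\Pr\left[X\le(1-\alpha)\mu\right]\le\exp\left(t(1-\alpha)\mu-t\mu+\tfrac{t^2}{2}\lambda_M\mu\right)=\exp\left(-t\mu\left(\alpha-\tfrac{t\lambda_M}{2}\right)\right)\le\exp\left(-t\mu(\alpha-t\lambda_M)\right),
\]
which is exactly the stated lower-tail bound (and in fact a touch stronger). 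The restriction $0<t\le\frac{1}{2\lambda_M}$ is not strictly needed for the lower tail but harmless; I would keep the unified hypothesis for cleanliness.

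The only genuine content is the two scalar inequalities $-\ln(1-x)\le x+x^2$ on $[0,\tfrac12]$ and $\ln(1+x)\ge x-\tfrac{x^2}{2}$ on $[0,\infty)$, both of which are standard calculus facts I would dispatch in a line each (e.g. set $\phi(x)=x+x^2+\ln(1-x)$, check $\phi(0)=0$ and $\phi'(x)=1+2x-\frac{1}{1-x}=\frac{(1+2x)(1-x)-1}{1-x}=\frac{x-2x^2}{1-x}=\frac{x(1-2x)}{1-x}\ge0$ on $[0,\tfrac12]$; similarly $\psi(x)=\ln(1+x)-x+\tfrac{x^2}{2}$ has $\psi(0)=0$, $\psi'(x)=\frac{1}{1+x}-1+x=\frac{1-(1-x)(1+x)}{1+x}=\frac{x^2}{1+x}\ge0$). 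Everything else is the routine Chernoff bookkeeping above; I do not anticipate a real obstacle, so the main "hard part" is merely ensuring the constants line up so that the cleaner bounds I derive imply the exact inequalities as stated, which they do with room to spare.
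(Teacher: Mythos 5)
Your proposal is correct and follows essentially the same route as the paper: Markov's inequality applied to $e^{\pm tX}$, the exact moment generating function $\E[e^{tX_i}]=\frac{1}{1-t\lambda_i}$ (resp.\ $\frac{1}{1+t\lambda_i}$), and a scalar bound on the log-MGF using $t\lambda_i\le\frac12$. The only difference is cosmetic — you bound $-\ln(1-x)\le x+x^2$ where the paper uses the geometric-series estimate $\frac{1}{1-x}\le 1+x(1+2x)\le e^{x(1+2x)}$ — which gives you marginally sharper exponents ($\alpha-t\lambda_M$ and $\alpha-\tfrac{t\lambda_M}{2}$) that imply the stated bounds a fortiori.
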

\begin{proof}
	For each $X_{i}$, the moment generating function w.r.t $t$ equals  
	$$\mathbb{E}\left[e^{tX_{i}}\right]=\frac{1}{1-t\lambda_{i}}=1+t\lambda_{i}\left(\sum_{\ell\ge0}\left(t\lambda_{i}\right)^{\ell}\right)\le1+t\lambda_{i}\left(1+2t\lambda_{i}\right)\le e^{t\lambda_{i}\left(1+2t\lambda_{i}\right)}~.$$
	Using Markov inequality,
	\begin{align*}
	\Pr\left[X\ge(1+\alpha)\mu\right] & =\Pr\left[e^{tX}\ge e^{t(1+\alpha)\mu}\right]\\
	& \le\mathbb{E}\left[e^{tX}\right]\cdot e^{-t(1+\alpha)\mu}\\
	& =e^{-t(1+\alpha)\sum_{\ell}\lambda_{\ell}}\cdot\prod_{\ell}\mathbb{E}\left[e^{tX_{\ell}}\right]\\
	& \le e^{-(1+\alpha)\sum_{\ell}t\lambda_{\ell}}\cdot e^{\sum_{\ell}t\lambda_{\ell}\left(1+2t\lambda_{\ell}\right)}\\
	& =e^{\sum_{\ell}\left(t\lambda_{\ell}\cdot\left(2t\lambda_{\ell}-\alpha\right)\right)}\\
	& \le e^{\left(\sum_{\ell}t\lambda_{\ell}\right)\cdot\left(2t\lambda_{M}-\alpha\right)}=e^{-t\mu\cdot\left(\alpha-2t\lambda_{M}\right)}~.
	\end{align*}
	where in the second
	equality we use the fact that $\left\{ X_{i}\right\} _{i}$ are independent.
	
	For the second inequality, it holds that:
	\[
	\mathbb{E}\left[e^{-tX_{i}}\right]=\frac{1}{1+t\lambda_{i}}=\sum_{\ell\ge0}\left(-1\right)^{\ell}\left(t\lambda_{i}\right)^{\ell}\le1-t\lambda_{i}\left(1-t\lambda_{i}\right)\le e^{-t\lambda_{i}\left(1-t\lambda_{i}\right)}~.
	\]
	Therefore,
	\begin{align*}
	\Pr\left[X\le(1-\alpha)\mu\right] & =\Pr\left[e^{-tX}\ge e^{-t(1-\alpha)\mu}\right]\\
	& \le\mathbb{E}\left[e^{-tX}\right]/e^{-t(1-\alpha)\mu}\\
	& =e^{t(1-\alpha)\mu}\cdot\Pi_{\ell}\mathbb{E}\left[e^{-tX_{\ell}}\right]\\
	& \le e^{(1-\alpha)\sum_{\ell}t\lambda_{\ell}}\cdot e^{-\sum_{\ell}t\lambda_{\ell}\left(1-t\lambda_{\ell}\right)}\\
	& =e^{-\sum_{\ell}t\lambda_{\ell}\left(\alpha-t\lambda_{\ell}\right)}\\
	& \le e^{-t\mu\left(\alpha-t\lambda_{M}\right)}~.
	\end{align*}
\end{proof}

We derive the following corollary. 
\begin{corollary}\label{cor:TightExpConcentration}
	Suppose $X_{1},\dots,X_{n}$ are independent random
	variables, where $X_{i}\sim\Exp(\lambda_{i})$.
	Let $X=\sum_{i}X_{i}$ and $\lambda_{M}=\max_i\lambda_{i}$. Set $\mu=\mathbb{E}\left[X\right]=\sum_{i}\lambda_{i}$. Then:
	\begin{align*}
	\text{For }\alpha\le2:\,\,\, & \Pr\left[X\ge(1+\alpha)\mu\right]\le\exp\left(-\frac{\alpha^{2}\mu}{8\lambda_{M}}\right)~.\\
	\text{For }\alpha\le1:\,\,\, & \Pr\left[X\le(1-\alpha)\mu\right]\le\exp\left(-\frac{\alpha^{2}\mu}{4\lambda_{M}}\right)~.
	\end{align*}
\end{corollary}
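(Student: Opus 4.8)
The plan is to simply invoke \Cref{lem:TightExpConcentration} with the value of the free parameter $t$ that optimizes each of the two bounds, and then check that this choice respects the hypotheses $0<t\le\frac{1}{2\lambda_M}$ and $\alpha\ge 2t\lambda_M$ — it is precisely these two constraints that will force the ranges $\alpha\le 2$ and $\alpha\le 1$ appearing in the statement.

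For the upper tail, \Cref{lem:TightExpConcentration} gives $\Pr[X\ge(1+\alpha)\mu]\le\exp\!\big(-t\mu(\alpha-2t\lambda_M)\big)$. Treating the exponent as a function of $t$, the quantity $t(\alpha-2t\lambda_M)$ is a downward parabola maximized at $t=\frac{\alpha}{4\lambda_M}$, where it equals $\frac{\alpha^2}{8\lambda_M}$. First I would verify admissibility of this choice: $t\le\frac{1}{2\lambda_M}$ is equivalent to $\alpha\le 2$, and $\alpha\ge 2t\lambda_M=\frac{\alpha}{2}$ holds trivially. Substituting then yields $\Pr[X\ge(1+\alpha)\mu]\le\exp\!\big(-\frac{\alpha^2\mu}{8\lambda_M}\big)$, which is the first inequality.

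For the lower tail, \Cref{lem:TightExpConcentration} gives $\Pr[X\le(1-\alpha)\mu]\le\exp\!\big(-t\mu(\alpha-t\lambda_M)\big)$. Here $t(\alpha-t\lambda_M)$ is maximized at $t=\frac{\alpha}{2\lambda_M}$, with maximum value $\frac{\alpha^2}{4\lambda_M}$. The constraint $t\le\frac{1}{2\lambda_M}$ now amounts to $\alpha\le 1$, and $\alpha\ge 2t\lambda_M=\alpha$ holds with equality, so the choice is again admissible; substituting gives $\Pr[X\le(1-\alpha)\mu]\le\exp\!\big(-\frac{\alpha^2\mu}{4\lambda_M}\big)$.

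I do not anticipate any real obstacle: the whole argument is a one-parameter optimization of an explicit exponent, and the only thing requiring care is bookkeeping of the side conditions on $t$ and $\alpha$, which is exactly what produces the stated restrictions $\alpha\le 2$ (upper tail) and $\alpha\le 1$ (lower tail). If one prefers to avoid invoking optimality, one can alternatively just plug in these specific values of $t$ directly and check the inequality, which is routine.
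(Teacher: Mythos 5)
Your proposal is correct and matches the paper's proof exactly: the paper simply states the choices $t=\frac{\alpha}{2}\cdot\frac{1}{2\lambda_M}$ for the upper tail and $t=\alpha\cdot\frac{1}{2\lambda_M}$ for the lower tail, which are precisely the optimizers you derive, and your verification of the side conditions $t\le\frac{1}{2\lambda_M}$ and $\alpha\ge 2t\lambda_M$ is the (implicit) bookkeeping the paper omits.
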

For the first inequality we choose the parameter $t=\frac{\alpha}{2}\cdot\frac{1}{2\lambda_M}$, while for the second inequality we choose the parameter $t=\alpha\cdot\frac{1}{2\lambda_M}$.

\section{The \texttt{Ball-Growing} Algorithm}\label{sec:BallGrowing}
The \texttt{Ball-Growing} algorithm assumes w.l.o.g that the minimal distance between terminal to a Steiner vertex in the input graph is exactly $1$.
Throughout the execution of the algorithm each terminal $t_j$, is associated with a radius $R_j$ and cluster $V_j\subset V$. 
The algorithm iteratively grow clusters $V_1,\dots,V_k$ around the terminals. Once some vertex $v$ joins some cluster $V_j$, it will stay there. When all the vertices are clustered, the algorithm terminates. 
Initially the cluster $V_j$ contains only the terminal $t_j$, while $R_j$ equals $0$.
The algorithm will have rounds, where each round consist of $k$ steps. In step $j$ of round $\ell$, the algorithm samples a number $q_j^\ell$ according to distribution $\Exp(D\cdot r^\ell)$ (note that the mean of the distribution grows by a factor of $r$ in each round). The radius $R_j$ grows by  $q_j^\ell$. We consider the graph induced by the unclustered vertices $V_\perp$ union $V_j$. Every unclustered vertex of distance at most $R_j$ from $t_j$ in $G[V_\perp\cup V_j]$ joins $V_j$.

\begin{algorithm}[!ht]
	\caption{$M=\texttt{Ball-Growing}(G=(V,E),w,K=\{t_1,\dots,t_k\})$}\label{alg:BG}
	\begin{algorithmic}[1]
		\STATE Set $r\la1+\delta / \ln k$, where $\delta = \nicefrac{1}{20}$.
		\STATE Set $D\la \frac \delta{\ln k}$.
		\STATE For each $j\in [k]$, set $V_j\la\{t_j\}$, and set $R_j\la~ 0$.
		\STATE Set $V_\perp~\la~V\setminus\left(\cup_{j=1}^k V_j\right)$.
		\STATE Set $\ell ~\la~ 0$.
		
		\WHILE{$\left(\cup_{j=1}^k V_j\right) ~\neq~ V$}
		\FOR {$j$ from $1$ to $k$}
		\STATE Choose independently at random $q^\ell_j$ distributed according to $\Exp(D\cdot r^\ell)$.
		\STATE Set $R_j\la R_j+q^\ell_j$.
		\STATE Set $V_j\la B_{G[V_\perp \cup V_j]}(t_j,R_j)$.	.\hfill\emph{//~This is the same as $V_j\la V_j\cup B_{G[V_\perp \cup V_j]}(t_j,R_j)$.}					
		\STATE Set $V_\perp\la V\setminus\left(\cup_{j=1}^k V_j\right)$.
		\ENDFOR
		\STATE $\ell\la\ell+1$.
		\ENDWHILE
		\RETURN the terminal-centered minor $M$ of $G$ induced by $V_1,\ldots,V_k$.
	\end{algorithmic}	
\end{algorithm}

\newpage
\begin{multicols}{2}
\footnotesize 
\section{Index}\label{appendix:key}
\vspace{-3pt}
\subsubsection*{Preliminaries}
\begin{description}
	\item[$d_{G}$] : shortest path metric in $G$.
	\item[{$G[A]$}] : graph induced by $A$.
	\item[$K$] $=\{t_1,\dots,t_k\}$ : set of terminals.
	\item[$D(v)$] $=\min_{t\in K}d_{G}(v,t)$.
	\item[Terminal partition] : partition  $\{V_1,\dots,V_k\}$ of $V$, s.t. for every i, $t_i\in V_i$ and $V_i$ is connected.
	\item[Induced minor] : given terminal partition $\{V_1,\dots,V_k\}$, the induced minor obtained by contracting each $V_i$ into the super vertex $t_i$. The weight of the edge $\{t_i,t_j\}$ (if exist) set to be $d_G(t_i,t_j)$.
	\item[Distortion] of induced minor: $\max_{i,j}\frac{d_M(t_i,t_j)}{d_G(t_i,t_j)}$.
	\item[$\Geo(p)$] : geometric distribution with parameter $\lambda$.	
	\item[$\Exp(\lambda)$] : exponential distribution with parameter $p$.
\end{description}
\vspace{-3pt}
\subsubsection*{Modification}
	Every edge on $P_{t,t'}$ has weight at most $c_{w}\cdot d_{G}(t,t')$.
\vspace{-3pt}
\subsubsection*{Constants}
\begin{description}
	\item[$p$] $=\frac15$: parameter of the geometric distribution.
	\item[$\delta$] $=\frac{1}{20\cdot\ln k}$: jumps in $R_j$ are of magnitude $1+\delta$
	\item[$c_{w}$] $=\frac{\delta}{24}.$
	\item[$\cint$] $=\frac{1}{6}$:
	governs the size of interval in the partition $\mathcal{Q}$ of $P_{t,t'}$.
	\item[$c_{\text{con}}$] $=\frac12$: used to bound the variation of the charge function from its expectation.
	\item[$c_d$] $=e^2$: bound on the maximal size of $R_j$.
\end{description}
\vspace{-3pt}
\subsubsection*{Events}
\begin{description}
	\item[$\EfBig$] : denotes that for some pair of terminals $t,t'$, $f(\{X(Q)\}_{Q\in\mathcal{Q}}\ge43\cdot d_{G}(t,t')$.
	\item[$\EB$] : denotes that the exist $j$, such that $R_j>c_d$.
\end{description}

\subsubsection*{Notations}
\begin{description}
	\item[$V_{j}$] : cluster of $t_{j}$.
	\item[$R_{j}$] : magnitude of the cluster of $t_{j}$.
	\item[$V_{\perp}$] : set of unclustered (uncovered) vertices.
	
	\item[$P_{t,t'}$] $=\left\{ t=v_{0},\dots,v_{\gamma}=t'\right\} $: shortest
	path from $t$ to $t'$.
	\item[$L(\left\{ v_{a},v_{a+1},\dots,v_{b}\right\} )$] $=d_{G}(v_{a},v_{b})$:
	internal length.
	\item[$ L^{+}(\left\{ v_{a},v_{a+1},\dots,v_{b}\right\} )$] $=d_{G}(v_{a-1},v_{b+1})$:
	external length.
	\item[$\mathcal{Q}$] : partition of $P_{t,t'}$ into intervals $Q$.
	\item[$a_{j}$] : the leftmost active vertex covered by $t_{j}$.
	\item[$b_{j}$] : the rightmost active vertex covered by $t_{j}$.
	\item[$\mathcal{D}_{j}$] $=\left\{ a_{j},\dots,b_{j}\right\} $:
	detour created by terminal $t_{j}$.
	\item[Slice]  maximal sub-interval (of some $Q$) of active vertices.
	\item[$r_{v}$] : minimal choice of $R_{j}$, such that $v$ joins $V_j$.
	\item[$v^{j}$] : vertex with the minimal $r_{v}$ (among active vertices).
	\item[$Q_{j}$] : interval containing $v_{j}$.
	\item[$S_{j}$] : slice containing $v_{j}$.
	\item[$f(\{x_Q\}_{Q\in\mathcal{Q}})$] : $=\sum_{Q\in\mathcal{Q}}x_Q\cdot L^{+}(Q)$,
	charge function.
	\item[$B_{Q}$] : a coin box which resembles the interval $Q$.
	\item[$d_{G,V_i+V_j}(t_i,t_j)$] : The weight of the shortest path in $G$ between $t_1$ and $t_2$ that uses only vertices from $V_i\cup V_j$, and only a single crossing edge between $V_i$ to $V_j$. 
\end{description}
\vspace{-3pt}
\subsubsection*{Counters}
\begin{description}
	\item[$\mathcal{S}(Q)$] : (current) number of slices in interval $Q$.
	\item[$X(Q)$] : number of detours the interval $Q$ is (currently) charged for.
	\item[$\tilde{X}(Q)$] : number of detours the interval $Q$ is charged for
	by the end of \Cref{alg:mainSPR}.
	\item[$Z(Q)$] : number of active coins in $B_{Q}$. Each coin is active	when added to the box.
	\item[$Y(Q)$] : number of inactive coins in $B_{Q}$. A coin become inactive after tossing.
	\item[$\tilde{Y}(Q)$] : number of inactive coins in $B_{Q}$ by the end of the process.
\end{description}

\end{multicols}

\end{document}